\pdfminorversion=4
\documentclass{IEEEojcsys}
\usepackage[utf8]{inputenc}
\usepackage[colorlinks,urlcolor=blue,linkcolor=blue,citecolor=blue]{hyperref}

\usepackage{amsmath,amsthm,amssymb}
\usepackage{graphicx,setspace}
\usepackage{color}
\usepackage[english]{babel}
\usepackage{mathrsfs,dsfont}
\usepackage{array}
\usepackage{tikz}
\usepackage{adjustbox}
\usepackage{multirow}
\usepackage{float}
\usepackage{booktabs}
\usepackage{graphicx}
\usepackage{lineno}
\usepackage{rotating}
\usepackage{pdflscape}
\usepackage{dsfont}
\usepackage{diagbox}
\usepackage{tikz}
\usepackage{setspace}
\usepackage{mathrsfs}
\usepackage[mathscr]{euscript}
\usepackage{bbm}
\usepackage{mathtools}
\usepackage{subcaption}
\usepackage[linesnumbered,vlined,ruled]{algorithm2e}
\usepackage{accents}

\makeatletter
\@ifundefined{theorem}{
\theoremstyle{plain}
\newtheorem{theorem}{\bf Theorem}
}{}
\@ifundefined{lemma}{
\theoremstyle{plain}
\newtheorem{lemma}[theorem]{\bf Lemma}
}{}
\@ifundefined{corollary}{
\theoremstyle{plain}

}{}
\@ifundefined{claim}{
\theoremstyle{plain}

}{}
\@ifundefined{conjecture}{
\theoremstyle{plain}

}{}
\@ifundefined{prop}{
\theoremstyle{plain}
\newtheorem{prop}[theorem]{\bf Proposition}
}{}
\@ifundefined{example}{
\theoremstyle{definition}

\newtheorem{assumption}{Assumption}
\newtheorem{definition}{Definition}
}{}
\@ifundefined{assumption}{
\theoremstyle{definition}
\newtheorem{assumption}{Assumption}
}{}
\@ifundefined{definition}{
\theoremstyle{definition}
\newtheorem{definition}{Definition}
}{}
\@ifundefined{remark}{
\theoremstyle{plain}
\newtheorem{remark}{Remark}
}{}
\makeatother

\newcommand{\nn}{\nonumber}

\def\QED{~\rule[-1pt]{5pt}{5pt}\par\medskip}
\renewenvironment{proof}{{\bf Proof: \ }}{\hfill \QED}

\let\ALP  \mathcal

\renewcommand{\sp}{\texttt{span}}

\DeclareUnicodeCharacter{03BB}{$\lambda$}

\usepackage{cuted}
\usepackage{algpseudocode}
\usepackage[numbers]{natbib}
\usepackage{enumerate}
\usepackage{algorithm2e}
\usepackage{subcaption}
\usepackage{xcolor}
\usepackage{multicol}
\usepackage{orcidlink}

\newcommand{\mmd}{\textnormal{\texttt{\text{MMD}}}}

\renewcommand{\cal}{\mathcal}

\newcommand{\bb}{\mathbb}
\newcommand{\bd}{\boldsymbol}
\newcommand{\tbb}[1]{\tilde{\mathbb #1}}
\newcommand{\bcal}[1]{\bar{\mathcal #1}}

\newcommand{\msf}{\mathsf}
\newcommand{\mbf}{\mathbf}
\newcommand{\md}{\textnormal{\texttt{ADD}}}
\newcommand{\mtbfa}{\textnormal{\texttt{ARL}}}
\newcommand{\tv}{\textnormal{\texttt{TV}}}

\renewcommand{\sp}{\textnormal{\texttt{span}}}

\newcommand{\calpx}{\cal P(\cal X)}

\newcommand{\rkhs}[1]{\cal H(\msf #1)}
\newcommand{\iprod}[2]{\langle #1 \rangle_{\cal H(\msf {#2})}}

\newcommand{\supnorm}[1][\cdot]{\|#1\|_\infty}

\jvol{00}
\jnum{XX}
\paper{1234567}
\pubyear{2024}
\receiveddate{10 January 2024}
\reviseddate{27 March 2024}
\accepteddate{21 April 2021}
\currentdate{21 April 2024}
\setcounter{page}{1}

\begin{document}


\title{Model-Free Change Point Detection for Mixing Processes} 

\author{HAO CHEN\affilmark{1}\orcidlink{0000-0003-2811-6518} (Student Member, IEEE)}

\author{ABHISHEK GUPTA\affilmark{1}\orcidlink{0000-0003-1117-325X} (Member, IEEE)}

\author{YIN SUN\affilmark{2}\orcidlink{0000-0001-6811-984X} (Member, IEEE)}

\author{NESS SHROFF\affilmark{1}\orcidlink{0000-0002-4606-6879} (Fellow, IEEE)}

\editor{Sonia Martinez}

\affil{Department of Electrical and Computer Engineering, The Ohio State University, Columbus, OH 43210 USA} 
\affil{Department of Electrical and Computer Engineering, Auburn University Auburn, AL 36849, USA} 

\corresp{CORRESPONDING AUTHOR: HAO CHEN (e-mail: \href{mailto:chen.6945@osu.edu}{chen.6945@osu.edu})}

\authornote{This work was supported by Cisco Systems and the U.S. National Science Foundation}

\markboth{Model-Free Change Point Detection for Mixing Processes}{H. CHEN {\itshape ET AL}.}

\begin{abstract}%
    This paper considers the change point detection problem under dependent samples. In particular, we provide performance guarantees for the MMD-CUSUM test under exponentially $\alpha$, $\beta$, and fast $\phi$-mixing processes, which significantly expands its utility beyond the i.i.d. and Markovian cases used in previous studies. We obtain lower bounds for average-run-length ($\mtbfa$) and upper bounds for average-detection-delay ($\md$) in terms of the threshold parameter. We show that the MMD-CUSUM test enjoys the same level of performance as the i.i.d. case under fast $\phi$-mixing processes. The MMD-CUSUM test also achieves strong performance under exponentially $\alpha$/$\beta$-mixing processes, which are significantly more relaxed than existing results. The MMD-CUSUM test statistic adapts to different settings without modifications, rendering it a completely data-driven, dependence-agnostic change point detection scheme. Numerical simulations are provided at the end to evaluate our findings.
\end{abstract}

\begin{IEEEkeywords}
Change point detection, kernel method, mixing processes
\end{IEEEkeywords}

\maketitle

\section{INTRODUCTION}

{Change point detection studies the problem of monitoring for abrupt changes in the statistical properties of an observation sequence, which has been widely considered in the literature \cite{lai1995sequential, dayanik2009sequential, tartakovsky2014sequential, xie2021sequential}. Change point detection has a diverse application that spans many areas, including cybersecurity, network intrusion detection, automated fault monitoring, factory quality control, etc. In many of these application scenarios, one may face various challenges, such as complex unknown dynamics, noisy non-i.i.d observations, and unknown pre- and post-change distributions. Ideally, a completely data-driven method with very few distributional assumptions (independence, density functions, etc.) would be preferred. The goal of this paper is to study the change point detection problem under a completely data-driven setting. To tackle this problem, we employ the MMD-CUSUM statistic proposed in \cite{chen2022change} and analyze its performance under three common mixing conditions, namely $\alpha$, $\beta$, and $\phi$-mixing.}

{The MMD-CUSUM statistic is an extension of the well-known CUSUM statistic \cite{page1954continuous} with the maximum mean discrepancy (MMD). MMD has wide adoption in statistical two-sample tests \cite{gretton2012kernel} and the training of generative adversarial networks \cite{li2017mmd}. As a probability distance, MMD can be easily estimated from samples on general domains (continuous or discrete) without the need for a density function. Thus, it is well suited for change point detection under the completely data-driven setting where pre- and post-change distributions can be unknown.} Additionally, kernel methods have wide compatibility \cite{gartner2003survey, muandet2016kernel} due to the diversity of kernel functions with different data structures, such as discrete data, continuous data, graphical data, etc. Thus, the kernel base method has vast application potential in designing completely data-driven change point detection schemes. In particular, the sequential testing procedures using the maximum mean discrepancy (MMD) have sparked some research interests lately \cite{li2015m, li2019scan, chang2018kernel, flynn2019change, chen2022change}. Most of the existing studies focus on studying the properties of the MMD-based procedures under the i.i.d. case. For continuous state space Markov chains, the MMD-CUSUM test is proposed in \cite{chen2022change} for uniformly ergodic Markov chains, which is known to be hard to satisfy in practice.

Thus, more relaxed assumptions need to be considered to meet the demands of the completely data-driven setting. The main challenge in generalizing the performance analysis of MMD-CUSUM lies in the dependence of samples. Our proposal assumes the mixing property of the stochastic processes generated by the dynamic system. Mixing measures the dependence in the process by its definition \cite{bradley2005basic}, and it is widely considered in extending various results in probability theory to dependent time series \cite{kolmogorov1960strong, adler1961ergodic, ibragimov1962some}. Thus, establishing the performance bounds under various mixing conditions is a natural choice. Furthermore, the mixing conditions we assume highlight the fundamental limit for MMD-CUSUM to achieve a good performance; that is, the speed and strength of the mixing condition the processes satisfy. 

In the current paper, we analyze the performance of MMD-CUSUM under three common mixing conditions, namely $\alpha$, $\beta$, and $\phi$-mixing. We provide bounds on the average-run-length ($\mtbfa$) and average-detection-delay ($\md$) which are the common performance metrics \cite{xie2013sequential}. $\mtbfa$ characterizes how frequently the false alarm occurs and $\md$ characterizes the quickness of the reaction. As outlined in \cite{lai1998information}, the information-theoretic lower bounds are $O(\exp(b))$ for $\mtbfa$ and $O(b)$ for $\md$ for large $b>0$, where $b$ is the threshold parameter. We show that under the fast $\phi$-mixing condition, the MMD-CUSUM achieves these lower bounds and thus is order optimal. Under the exponential $\alpha/\beta$-mixing, $\md$ is bounded by $O(b)$ where $\mtbfa$ is bounded by $O(\exp{b^{\gamma/(\gamma+1)}})$, where $\gamma > 0$ controls the mixing speed (more details in \ref{sec: main results}).

The rest of the paper is organized as follows. Section \ref{sec: background} introduces the necessary background about reproducing kernel Hilbert space and mixing processes. Section \ref{sec: problem} states the problem setting for online change point detection and introduces the MMD-CUSUM test statistic. Section \ref{sec: main results} establishes the main results of this paper. Section \ref{sec: simulation} presents the experiments of the MMD-CUSUM test on synthetic datasets. We conclude the paper with discussions of the limitations and future work in Section \ref{sec: discussion} and \ref{sec: conclusion}.

\subsection{Related works}

Continuous efforts have been made to adapt the kernel two-sample test to a sequential setting, i.e., change point detection. Early work has been focused on detection change in a stream of i.i.d. samples \cite{li2015m, li2019scan, chang2018kernel, flynn2019change}. In \cite{li2015m, li2019scan}, the authors developed a Shewhart chart-type \cite{shewhart1986statistical} procedure that maintains a running estimate of the MMD between a set of curated reference data and incoming samples within a fixed sliding window. Analysis shows strong performance guarantees with an $O(\exp(b^2))$ average-run-length ($\mtbfa$) and an $O(b)$ average detection delay ($\md$), where $b$ is the threshold. However, testing schemes with sliding windows suffer from loss of information as older samples are discarded. To maintain history information, kernel-based CUSUM-type statistics were proposed in \cite{flynn2019change} with an $O(\exp(b))$ average-run-length ($\mtbfa$) and an $O(b)$ average detection delay ($\md$). In \cite{chang2018kernel}, the authors devised a neural network-based kernel selection strategy that finds a kernel whose MMD can best separate the nominal distribution from an adversarial one. The testing scheme is to estimate the MMD with the selected kernel on two adjacent sliding windows. Empirical studying shows promising performance, albeit without theoretical guarantees. 

The analysis of the above methods is based heavily on the i.i.d. assumption. Their technique and results do not carry over naturally to the non-i.i.d. case. Due to the ubiquity of time series data in machine learning, signal processing, economics, and dynamic systems, the i.i.d. assumption limits the application of these methods. More recently, researchers have been adapting the kernel-based change point detection to dependent data. In \cite{chen2022change}, the MMD-CUSUM test is proposed and analyzed under the setting of uniformly ergodic Markov chains on general state space. Recently, \cite{zhang2023data} extended the analysis of MMD-CUSUM to noisy observations of uniformly ergodic Markov chains, i.e., hidden Markov models (HMM). Both cases are special cases of $\phi$-mixing processes \cite{bradley2005basic}. In fact, we show that the same performance can be obtained even when the Markovian and HMM structures are ignored. In other words, the Markov chain and HMM assumptions are not necessary for the performance of the MMD-CUSUM test. Our work even extends to the $\alpha$/$\beta$-mixing processes, which have never been considered for the MMD-CUSUM test previously.  

More broadly, our study falls under the umbrella of the quickest change detection (QCD) theory \cite{veeravalli2014quickest}.  Studies on the QCD problem can be split into two categories: the Bayesian and minimax formulation, depending on the assumption of the change point. The Bayesian formulation, pioneered by \cite{shiryaev1963optimum, shiryaev2007optimal}, places a prior on the distribution of the change point (usually a geometric distribution). Whereas the minimax formulation, first considered by \cite{lorden1971procedures}, assumes the change point is unknown and deterministic. Under both formulations, the different notions of detection delay are minimized while constrained on the probability of false alarm or the false alarm rate ($1/\mtbfa$). A well-known Bayesian QCD formulation is Shiryaev's problem \cite{shiryaev1963optimum}, which seeks the stopping rule that minimizes the average detection delay (under the change point prior) while constrained on the probability of false alarm. The minimax formulations include Lorden's problem \cite{lorden1971procedures} and Pollak's problem \cite{pollak2009optimality}, where the former minimizes the worst-case average delay and the latter minimizes the conditional average delay while both contained on the false alarm rate.

Although the CUSUM statistic was first proposed as a heuristic for the minimax formulation under i.i.d. setting by \cite{page1954continuous}, strong optimality properties have been shown for CUSUM statistic under various settings. Under the i.i.d. setting, exact optimality was shown by \cite{moustakides1986optimal, ritov1990decision} for Lorden's problem. For general non-i.i.d. settings, \cite{lai1998information} has shown that an extension of the CUSUM statistic achieves the information-theoretic lower bound on the conditional average delay (as well as the worst case delay) asymptotically as the false alarm rate goes to 0.

However, the optimality result mentioned previously requires specific knowledge of the pre-and post-change distributions. Furthermore, the QCD problems are intractable for general stochastic processes due to the lack of problem structure \cite{lai1998information}. Thus, the numerous studies on QCD for non-i.i.d settings \cite{lai1995sequential, lai1998information, fuh2003sprt, tartakovsky2005general, dayanik2009sequential, xie2013sequential, tartakovsky2014sequential, fuh2015quickest, tartakovsky2017asymptotic, fuh2018asymptotic, moustakides2019detecting, ford2023exactly} cannot be easily converted to the completely data-driven setting.

\subsection{Contributions}

As a non-parametric model-free change point detection procedure, the MMD-CUSUM test exhibits great potential in completely data-driven applications where distributional assumptions may be difficult to verify. Our performance guarantees under general mixing conditions establish its robustness under dependent samples and further strengthen its capability as a model-free testing scheme. The mixing conditions considered in this paper not only subsume the i.i.d., Markov chain, and HMM settings but also greatly expand beyond those appearing in previous studies on the performance of the MMD-CUSUM test. Our results indicate that the Markovian or HMM structures are not necessary for the strong performance of the MMD-CUSUM test. Additionally, we provide the first performance guarantee for the MMD-CUSUM test under exponentially $\alpha$/$\beta$ and fast $\phi$-mixing processes. Note that stationary exponentially $\beta$-mixing processes include the geometrically ergodic Markov chains as a special case, which violates Doeblin's condition \cite[page 402]{meyn2012markov}. In stark contrast, Doeblin's condition is the core assumption for the performance analysis of the MMD-CUSUM test in \cite{chen2022change} and \cite{zhang2023data}.

\section{BACKGROUND}
\label{sec: background}

In this section, we introduce the necessary background for our discussion. Section \ref{sec: rkhs mmd} collects the usual facts about reproducing kernel Hilbert space (RKHS) and maximum mean discrepancy (MMD). Section \ref{sec: mixing} presents the two notions of mixing used to obtain the main results. Our standard reference is \cite{steinwart2008support} for RKHS and \cite{bradley2005basic} for mixing processes.

\subsection{RKHS and MMD}
\label{sec: rkhs mmd}

Let $(\msf X, \cal X, \bb P)$ be a measure space with Borel $\sigma$-algebra $\cal X$ and $\sigma$-finite measure $\bb P$. Let $\cal P(\cal X)$ denote the set of probability measures over the $\sigma$-algebra $\cal X$. The supremum norm of $f$ is written as $\supnorm[f]\coloneqq\sup_{x\in\msf X} |f(x)|$ and its span is written as $\sp(f)\coloneqq \sup_{x, x'\in \msf X} |f(x) - f(x')|$.

A \textit{reproducing kernel Hilbert space} (RKHS) $\cal H(\msf X)$ on $\msf X$ with kernel $k:\msf X\times\msf X \to \bb R$ is a Hilbert space of real-valued functions on $\msf X$ equipped with inner product $\iprod{\cdot, \cdot}{X}$. The corresponding Hilbert space norm $\|f\|_{\cal H(\msf X)}^2 = \|\|$ The kernel function $k$ satisfies the reproducing property:
\begin{align}
    k(x, \cdot) \in\rkhs{X}\ \text{and } \iprod{f(\cdot), k(x, \cdot)}{X} = f(x), \quad \text{for } x\in\msf X.\nn
\end{align}

The current paper relies on a particular application of RKHS --- Hilbert space embeddings of probability measure. The Hilbert space embedding of $\mu$ under $k$ is written as
\begin{align}
    \cal U(\mu)(\cdot) \coloneqq \int_{\msf X} k(x, \cdot) d\mu,\nn
\end{align}
where $\cal U(\mu)$ is also called the kernel mean embedding of $\mu$. Suppose $\nu\in\calpx$ is another probability measure. One can define a distance function between $\mu$ and $\nu$ using the Hilbert space metric between $\cal U(\mu)$ and $\cal U(\nu)$ 
\begin{align}
    \mmd_k(\mu, \nu) = \|\cal U(\mu) - \cal U(\nu)\|_{\cal H(\msf X)},\nn
\end{align}
which is known as the \textit{maximum mean discrepancy} (MMD) \cite{gretton2012kernel}. {The kernel $k$ such that $\mmd_k(\mu, \nu) = 0 \Leftrightarrow \mu = \nu$ for all $\mu, \nu\in\cal P(\cal X)$ is call a \textit{characteristic kernel} \cite{sriperumbudur2010hilbert}. $\mmd_k$ with a characteristic kernel $k$ is a metric on $\cal P(\cal X)$.}

MMD enjoys a computational advantage, compared with other probability distance functions, such as KL divergence \cite{kullback1951information} and total variation metric (Definition \ref{def: tv metric}), that allows it to be easily estimated empirically for distributions on general domains \cite{gartner2003survey, muandet2016kernel}. 

Let $X_i \sim \mu$ and $X_j'\sim \nu$ for $i=1,\cdots,m$ and $j=1,\cdots,n$. Define their empirical measures as $\hat\mu_m,\hat\nu_n$, respectively. The consistent estimation of the squared MMD is 
\begin{align}
    &\mmd_k^2(\hat\mu_m, \hat\nu_n) = \frac{1}{m^2} \sum_{1\leq i,j\leq N} k(X_i, X_j)\nn\\
    &+ \frac{1}{n^2}\sum_{1\leq i,j\leq M} k(X_i', X_j') - \frac{2}{mn}\sum_{i,j}k(X_i, X_j'),\nn
\end{align}

This was first used by \cite{gretton2012kernel} to propose the kernel two-sample test, and it is the core component of the MMD-CUSUM test studied in the current paper. 

Throughout the paper, we assume the kernel $k$ is \textit{real-valued, measurable, characteristic, and bounded}, {i.e., $\sup_{x\in\msf X} k(x, x) = \bar k < \infty$.} {The boundedness ensures $\mmd_k$ is well-defined.}

\subsection{Mixing processes} 
\label{sec: mixing}

The definitions of the mixing process require the following necessary notations. Consider the space of $\msf X$-valued doubly infinite stochastic processes as $(\msf X^\infty, \cal X^\infty, \tbb P)$ where the indices of a process $\bd X = \{X_i\}_{i\in\bb Z} \in \msf X^\infty$ are allowed to be $-\infty$ and $\infty$. For each index $t\in\bb Z$, let $\cal X_{t}^{\infty}$ denote the $\sigma$-algebra generated by $\{X_i\}_{i=t}^{\infty}$ and $\cal X_{-\infty}^i$ is the $\sigma$-algebra generated by $\{X_t\}_{t=-\infty}^i$. We use $\bcal X_\tau^{\tau'}$ to denote the $\sigma$-algebra generated by $\{X_i\}_{i=-\infty}^{\tau} \cup \{X_i\}_{i=\tau'+1}^\infty$. The marginal probability measure on $\{X_i\}_{i=t}^{\infty}$ is written as $\tbb P_t^\infty$ and the joint probability measure on $\{X_i\}_{i=-\infty}^{t}$ as $\tbb P_{-\infty}^t$. With these notations, we have the definitions of $\alpha$, $\beta$, and $\phi$-mixing coefficients following \cite{doukhan2012mixing, bradley2005basic}.

\begin{definition}[$\alpha$-mixing coefficient]
\label{def: alpha mixing}
    The \textit{$\alpha$-mixing coefficient} \cite{rosenblatt1956central} of a stationary process $\bd X$ is defined as 
    \begin{align}
        \alpha(n) = \sup_t\sup_{A\in\cal X_{-\infty}^t, B\in\cal X_{t+n}^{\infty}} |\tbb P(A\cap B) - \tbb P(A)\cdot\tbb P(B)|.\nn
    \end{align}
    $\bd X$ is called \textit{$\alpha$-mixing} if $\alpha(n)\to 0$ as $n\to\infty$.
\end{definition}
The following $\beta$-mixing coefficient provides a stronger notion of decaying dependence. It can be shown that $2\alpha(n) \leq \beta(n)$ \cite{doukhan2012mixing}. 
\begin{definition}[$\beta$-mixing coefficient]
\label{def: beta mixing}
    The \textit{$\beta$-mixing coefficient} \cite{kolmogorov1960strong} of a stochastic process $\bd X$ is defined as 
    \begin{align}
        \beta(n) = \sup_{t} \sup_{C\in\bcal X_{t+1}^{t+n-1}} |\tbb P(C) - \tbb P_{-\infty}^{t}\otimes\tbb P_{t+1}^\infty(C)|.\nn
    \end{align}
    $\bd X$ is called \textit{$\beta$-mixing} if $\beta(n)\to 0$ as $n\to\infty$. The $\beta$-mixing coefficient can be equivalently written as 
    \begin{align}
        \beta(n) = \sup_t \bb E_{\tbb P_{-\infty}^t} \bigg[\sup_{C\in\cal X_{t+n}^\infty} | \tbb P_{t+n}^\infty(C| \cal X_{-\infty}^{t}) - \tbb P_{t+n}^\infty(C)|\bigg].\nn
    \end{align}
\end{definition} 
Comparing the second definition of $\beta$-mixing with the following definition of $\phi$-mixing, we can see that $\beta(n)\leq\phi(n)$.
\begin{definition}[$\phi$-mixing coefficient]
\label{def: phi mixing}
    The \textit{$\phi$-mixing coefficient} \cite{ibragimov1962some} of a stochastic process $\bd X$ is defined as 
    \begin{align}
        \phi(n) = \sup_t \sup_{B\in\cal X_{-\infty}^{t}} \bigg[\sup_{C\in\cal X_{t+n}^\infty} | \tbb P_{t+n}^\infty(C| B) - \tbb P_{t+n}^\infty(C)|\bigg].\nn
    \end{align}
    $\bd X$ is called \textit{$\phi$-mixing} if $\phi(n)\to 0$ as $n\to\infty$.
\end{definition}

{We say $\bd X$ is \textit{stationary} with respect to $\mu\in\cal P(\cal X)$ if the one-dimensional marginal probability of $X_i$ equals $\mu$ for $\forall i\in\bb Z$. For stationary processes, the supremum over $t$ in the above definitions can be ignored, and one can set $t=0$ without loss of generality. To maintain the simplicity of the presentation, we focus on \textit{stationary} stochastic processes with $\alpha$, $\beta$, and $\phi$-mixing properties in the sequel. However, the results put forward in the current paper can be extended to asymptotically stationary processes, which is discussed in Section \ref{sec: discussion}.}

{The decay rates of the mixing coefficients play an important role in our discussion. The following definitions introduce the \textit{exponential $\alpha/\beta$-mixing} condition and \textit{fast $\phi$-mixing}, which are used throughout the paper.}
\begin{definition}[exponential $\alpha/\beta$-mixing]
\label{def: exp mixing}
    $\bd X$ is said to be exponential $\alpha$ or $\beta$-mixing, if the $\alpha$ or $\beta$-mixing coefficient satisfies
    \begin{align}
        &\alpha(n) \leq \bar\alpha\exp(-cn^\gamma), \quad n\geq 1, \nn\\
        \quad\text{or}\quad
        &\beta(n) \leq \bar\beta\exp(-cn^\gamma), \quad n\geq 1,\nn
    \end{align}
    for $\bar \alpha, \bar \beta,\gamma, c > 0$. 
\end{definition}
\begin{definition}
\label{def: fast phi mixing}[fast $\phi$-mixing]
    $\bd X$ is said to be fast $\phi$-mixing, if the $\phi$-mixing coefficient satisfies
    \begin{align}
        \Phi \coloneqq \sum_{n=0}^\infty \phi(n) < \infty.\nn
    \end{align}
\end{definition}
An exponentially decaying $\phi$-mixing coefficient is certainly summable and thus is covered under the above definition. Definition \ref{def: exp mixing} and \ref{def: fast phi mixing} form the basic assumption on the mixing processes studied in the current paper.

{To bridge the notions of mixing with RKHS, it is convenient to consider the following kernel mixing coefficient introduced in \cite{cherief2022finite}.}
\begin{definition}[kernel mixing coefficient]
\label{def: kernel mixing coefficient}
Let $\bd X$ be a stationary process with distribution $\mu$. For $n\in\bb N$, define the kernel mixing coefficient as
\begin{align}
\label{eqn: kernel mixing coefficient}
    &\rho_k(n) \nn\\
    &= \bigg | \bb E \langle k(X_n, \cdot) - \bb E_\mu k(X, \cdot),  k(X_0, \cdot) - \bb E_\mu k(X, \cdot)\rangle_{\cal H}\bigg |.
\end{align}
We denote the cumulative sum of the kernel mixing coefficient as $\Sigma_\mu \coloneqq \sum_{n=0}^\infty \rho_k(n)$.
\end{definition}
{
    If we treat $\{k(X_i, \cdot)\}_{i\in\bb Z}$ as a sequence of Hilbert space valued stochastic process, then as shown by \cite[Lemma 2.2]{dehling1982almost} $\rho_k(n)$ can bounded by a constant multiple of the $\alpha$-mixing coefficient, i.e., $\rho_k(n) \leq 10\alpha(n)\bar k^2$. Thus, we get $\Sigma_\mu < \infty$ under the assumptions of exponential $\alpha$-mixing,  exponential $\beta$-mixing, and fast $\phi$-mixing.
}

\subsection{Examples of mixing processes}

One notable example of $\phi$-mixing processes is the uniformly ergodic Markov chain. A Markov chain is said to be uniformly ergodic if it is aperiodic and satisfies Doeblin's condition \cite{meyn2012markov}. Thus, it is also called the Doeblin chain. A $q$-th order autoregressive (AR) process is $\phi$-mixing if the Markov chain generated by stacking $q$ consecutive states is a Doeblin chain. The $\phi$-mixing coefficient decays exponentially for uniformly ergodic Markov chains, therefore satisfying the fast $\phi$-mixing condition in Definition \ref{def: fast phi mixing}.

Examples of exponential $\beta$-mixing processes include $V$-geometrically ergodic Markov chains. The Markov transition kernel $P:\msf X\times\cal X \to [0, 1]$ with stationary distribution $\pi$ is said to be $V$-geometrically ergodic if it satisfies
\begin{align}
\label{eqn: geo ergodic mc}
    \tv(P^n(x, \cdot), \pi) \leq V(x)\rho^{\lfloor n/m \rfloor},\quad \text{for all } n,
\end{align}
where $V:\msf X\to[1, \infty)$ is a measurable function, $m$ is an constant integer, and $\rho\in[0, 1)$. When $V$ is bounded on $\msf X$, it becomes the uniform ergodicity condition. From a dynamic system perspective, $V$-geometrically ergodic Markov chains subsume stable nonlinear systems with finite variance additive noise \cite[see][Section 3.5]{vidyasagar2002learning}. The aforementioned examples all work as examples of exponential $\alpha$-mixing processes. Additionally, measurable functionals of $\alpha$, $\beta$, and $\phi$-mixing processes are also  $\alpha$, $\beta$, and $\phi$-mixing processes. The mixing coefficients are bounded by those of the original processes \cite[Lemma 3.6]{vidyasagar2002learning}.

\section{PROBLEM FORMULATION}
\label{sec: problem}

{In this section, we first introduce the online change point detection problem and the commonly used performance metrics \cite[see][]{xie2013sequential, xie2021sequential}. Later, we discuss the proposed MMD-CUSUM test and its properties.}

In the sequel, we make the following assumption and restrict our attention to stochastic processes satisfying the exponential $\alpha/\beta$-mixing and fast $\phi$-mixing conditions in Definition \ref{def: exp mixing}, \ref{def: fast phi mixing}.

\begin{assumption}
    The stochastic processes considered in what follows satisfy one of the three mixing conditions in Definition \ref{def: exp mixing} and \ref{def: fast phi mixing}.
\end{assumption}

\subsection{Online change point detection}

The online change point detection problem is often formulated as a sequential two-sample test which has been widely considered in the past \cite{page1954continuous, lorden1971procedures, shewhart1986statistical, shiryaev1963optimum}. Given a sequence of samples $\{X_i\}$ from a stationary mixing process $\bd X$ with distribution $\mu$, at each time step, the following null and alternative hypotheses are proposed
\begin{align}
    &H_0: \mu \text{ remains the same},\nn\quad\quad H_1: \mu \text{ has changed}.\nn
\end{align}
Test statistics are calculated using the samples collected up to the current time step. To detect the change quickly and accurately, one attempts to reject the null hypothesis $H_0$ via a threshold rule at every time step.

More formally, consider a stationary stochastic process $\bd X = \{X_i\}_{i\in\bb Z}\in\msf X^\infty$ adapted to its natural filtration with unknown distribution $\mu$. At some unknown but deterministic time index $\tau\in\bb Z$, we have $X_i \sim \mu$ for $0\leq i\leq \tau$ and $X_i \sim \nu$ for $i\geq\tau+1$, where $\mu, \nu\in\calpx$ and $\mu\neq \nu$. This can be conceptually thought of as having a separate and independent stochastic process $\bd X'\in\msf X^\infty$ following unknown distribution $\nu$ running alongside $\bd X$. From the outside, one can only observe $\bd X$ up to time $\tau$, and at time $\tau$, the observation is immediately switched to $\bd X'$. 

Suppose the null hypothesis is rejected at time $T(b)$, which is a stopping time adapted to the filtration $\{\cal X_{-\infty}^i\}_{i\in\bb Z}$ and a function of the threshold $b$. If we use $\bb E_\infty$ and $\bb E_0$ to denote the expectation under $H_0$ and $H_1$ respectively, then the average-run-length $\mtbfa$ and the average-detection-delay $\md$ can be written in terms of the stopping time $T$ as follows
\begin{align}
    \mtbfa = \bb E_\infty[T(b)] \quad \text{ and }\quad \md = \bb E_0[T(b)].\nn
\end{align}
Unlike the Bayesian formulation, we assume the change point $\tau$ is unknown and deterministic, and thus we set $\tau=0$ without loss of generality. $\mtbfa$ measures the robustness of the test against false alarms. Whereas $\md$ measures the quickness of the test in response to an abrupt change. The overall goal of online change point detection is to have a $\mtbfa$ that grows with $b$ as fast as possible and a $\md$ that grows with $b$ as slowly as possible. 

\subsection{MMD-CUSUM test}
The MMD-CUSUM test is a sequential adaptation of the kernel two-sample test. Consider a bounded, measurable, characteristic, reproducing kernel $k: \msf X\times\msf X\to \bb R$. Denote the reference dataset as $\cal D_h = \{ X'_{i}\}_{i=1}^h$ of size $h$. The detection algorithm processes the incoming data in blocks of size $r$, which is denoted as $\cal B_r(t) = \{X_i\}_{i=(t-1)r}^{tr-1}$ for an integer $t \geq 1$. Let $\hat \nu_h$ and $\hat \mu_r$ denote the empirical measure constructed using the dataset $\cal D_h$ and $\cal B_r(t)$. Define the $\mmd$ between these two empirical measures as
\begin{align}
\label{eqn: empirical mmd}
    &\mmd[\hat\mu_r,\hat\nu_h] \nn\\
    = &\bigg(\frac{1}{r^2}\sum_{1\leq n,m\leq r} k(X_{(t-1)r+n-1}, X_{(t-1)r+m-1})\nn\\
    \quad & + \frac{1}{h^2}\sum_{1\leq n,m\leq h} k(X'_{n}, X'_m) \nn\\
    \quad & - \frac{2}{rh} \cdot\sum_{\mathclap{\substack{1\leq n \leq r,\\ 1\leq m\leq h}}} k(X_{(t-1)r+n-1}, X'_{m})\bigg)^{\frac{1}{2}},
\end{align}
At time step $i = t\cdot r$, the algorithm computes the following test statistic; otherwise, it collects the new observations and remains idle. Let integer $M \geq r$ be the minimum number of samples required to perform the test. We write the test statistics at time step $i$ as
\begin{align}
\label{eqn: test statistic}
    \hat{s}_{\lfloor i/r\rfloor} &= \max_{1\leq n \leq t} s_{n:\lfloor i/r\rfloor},\\
    s_{n:\lfloor i/r\rfloor} &= \sum_{t=n}^{\lfloor i/r\rfloor}\bigg\{ \mmd[\hat\mu_r,\hat\nu_h] - \Delta\bigg\},\nn
\end{align}
where $\Delta > 0$ is a tunable parameter that keeps the summand slightly blew 0 under the null hypothesis. The corresponding stopping rule with threshold $b$ and $M$ minimum samples is written as
\begin{align}
\label{eqn: stopping rule}
    T(b, M) &= r\cdot \inf\big\{t\geq M/r : \hat{s}_t > b\big\}.
\end{align}

{We make the following remarks regarding the above MMD-CUSUM statistics. }

\paragraph{Convergence of Empirical MMD}

{
To correctly configure the offset parameter $\Delta$, we need to determine the envelope of the deviation of the empirical MMD from the true one. The result collected in the following lemma shows that the estimation error is bounded by a term diminishing in the sample size plus a small margin, almost surely for all three mixing conditions. Note that the empirical MMD can be equivalently written as the MMD between empirical measures. For probability measures $\mu$ and $\nu$, we write $\widehat{\mmd}(\mu, \nu)$ as $\mmd(\hat \mu_r, \hat \nu_h)$ where $\hat \mu_r, \hat\nu_h$ are empirical measures of $\mu$ and $\nu$ with $r$ and $h$ samples, respectively.
}
\begin{lemma}
\label{lem: empirical mmd consistency}
    Let $\bd X$ and $\bd X'$ be two independent processes with stationary distribution $\mu$ and $\nu$ satisfying the mixing conditions introduced before. Given $\delta > 0$, there exist constant $C(r, h)$ such that the following holds almost surely for sufficiently large $h$,
    \begin{align}
        \bigg|\bb E[\mmd(\hat{\mu}_r, \hat{\nu}_h)| \cal D_h] - \mmd(\mu, \nu)\bigg| \leq C(r, h) + \delta,\nn
    \end{align}
    where $C(r, h) = O\big(\sqrt{\frac{1}{r}} + \sqrt{\frac{\log\log h}{h}}\big)$ and $\bb E[\cdot | \cal D_h]$ denotes the expectation taken over the randomness in $\hat{\mu}_r$ conditioned on the reference dataset $\cal D_h$.
\end{lemma}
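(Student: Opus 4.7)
The plan is to reduce the scalar MMD statement to two Hilbert-space embedding bounds via the reverse triangle inequality, then handle the test-block term in expectation using the kernel mixing coefficient, and handle the reference-block term almost surely via a law of the iterated logarithm for Hilbert-valued mixing sequences. Concretely, rewriting $\mmd(\hat\mu_r, \hat\nu_h) = \|\cal U(\hat\mu_r) - \cal U(\hat\nu_h)\|_{\cal H(\msf X)}$ and applying the reverse triangle inequality gives
\[
\big|\mmd(\hat\mu_r,\hat\nu_h) - \mmd(\mu,\nu)\big| \leq \|\cal U(\hat\mu_r) - \cal U(\mu)\|_{\cal H(\msf X)} + \|\cal U(\hat\nu_h) - \cal U(\nu)\|_{\cal H(\msf X)}.
\]
Because $\bd X \perp \bd X'$ and $\cal D_h \in \sigma(\bd X')$, taking $\bb E[\cdot | \cal D_h]$ collapses the first summand to its unconditional expectation while leaving the second summand unchanged, so it suffices to control the two quantities separately.

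For the first summand, I would expand the squared RKHS norm and bound each covariance term by $\rho_k(|i-j|)$ using Definition \ref{def: kernel mixing coefficient}, obtaining
\[
\bb E\|\cal U(\hat\mu_r) - \cal U(\mu)\|_{\cal H(\msf X)}^2 \leq \frac{1}{r^2}\sum_{i,j=1}^{r}\rho_k(|i-j|) \leq \frac{2\Sigma_\mu}{r},
\]
with $\Sigma_\mu$ finite in all three regimes by the Dehling bound $\rho_k(n) \leq 10\bar k^2 \alpha(n)$ cited in the text. A Jensen step then supplies the $O(1/\sqrt{r})$ contribution to $C(r,h)$.

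For the second summand, I would view $\{k(X'_i,\cdot) - \cal U(\nu)\}_{i\geq 1}$ as a bounded, centered, separable Hilbert-valued sequence whose $\alpha$-mixing coefficient inherits the exponential or fast decay of $\bd X'$. Invoking a compact LIL for Hilbert-valued mixing sequences (of Kuelbs--Philipp / Dehling type) yields, almost surely,
\[
\limsup_{h\to\infty}\sqrt{\frac{h}{2\log\log h}}\,\|\cal U(\hat\nu_h) - \cal U(\nu)\|_{\cal H(\msf X)} \leq \sigma_\nu < \infty.
\]
Hence for any given $\delta > 0$ there exists $h_0(\delta,\omega)$ such that $\|\cal U(\hat\nu_h) - \cal U(\nu)\|_{\cal H(\msf X)} \leq \sigma_\nu\sqrt{2\log\log h / h} + \delta$ for all $h \geq h_0$, supplying the $O(\sqrt{\log\log h / h})$ contribution to $C(r,h)$ and absorbing the additive slack $\delta$.

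The hard part is the Hilbert-valued LIL in the last step: I must verify that the exponential $\alpha$/$\beta$ or fast $\phi$ decay rates of Definitions \ref{def: exp mixing}--\ref{def: fast phi mixing} meet the hypotheses of an applicable vector-valued LIL. Boundedness of $k$ discharges all moment conditions automatically, so this reduces to checking the decay condition on $\alpha(n)$ (or summability of $\phi$), which is immediate in all three regimes via the hierarchy $2\alpha(n) \leq \beta(n) \leq \phi(n)$.
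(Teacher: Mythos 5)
Your proposal is correct and follows essentially the same route as the paper: triangle inequality splitting the error into $\mmd(\hat\mu_r,\mu)$ and $\mmd(\nu,\hat\nu_h)$, conditional expectation via independence, an $O(\sqrt{\Sigma_\mu/r})$ bound on the test-block term (you re-derive what the paper cites from \cite[Lemma 7.1]{cherief2022finite}), and a Hilbert-valued law of the iterated logarithm for the reference term. The only cosmetic difference is that you phrase the decomposition through the embeddings $\cal U(\cdot)$ and expand the covariance sum explicitly rather than citing it.
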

\begin{proof}
    Applying triangle inequality, we get the following two expressions:
    \begin{align*}
        \mmd(\hat{\mu}_r, \hat{\nu}_h) - \mmd(\mu, \nu) \leq \mmd(\hat{\mu}_r, \mu) + \mmd(\nu, \hat{\nu}_h),\\
        \mmd(\mu, \nu) - \mmd(\hat{\mu}_r, \hat{\nu}_h) \geq -\mmd(\hat{\mu}_r, \mu) -\mmd(\nu, \hat{\nu}_h).
    \end{align*}
    Let us consider the first inequality above, and the other one follows similarly. Suppose we take expectation over the randomness of $\hat{\mu}_r$, and due to independence we have,
    \begin{align*}
         \bb E&[\mmd(\hat{\mu}_r, \hat{\nu}_h) | \mathcal D_h] - \mmd(\mu, \nu)\\
         &\leq \bb E[\mmd(\hat{\mu}_r, \mu)] + \mmd(\nu, \hat{\nu}_h).
    \end{align*}
    On the right hand side, the term $\bb E_{\bd X}[\mmd(\hat{\mu}_r, \mu)] \leq \sqrt{\frac{1+2\Sigma_\mu}{r}}$ by Lemma 7.1 of \cite{cherief2022finite} for all $r>0$ and $\bd X$ which satisfies $\Sigma_\mu < \infty$. 
    It remains to bound $\mmd(\nu, \hat{\nu}_h)$ for a particular $\hat{\nu}_h$. Observe that 
    \begin{align*}
       \mmd(\nu,\hat{\nu}_h) = h^{-1}\left\|\sum_{i}^r H_i\right\|_{\cal H_k},
    \end{align*}
    where $\{H_i = k(X_i, \cdot) - \bb E_\nu k\}$ is a Hilbert space valued stochastic process and $\{H_i\}$ enjoys the same mixing property as $\bd X'$ since $H_i$ is a measurable function of $X'_i$. Thus, we can apply the law of iterated logarithm for Hilbert space valued $\alpha$-mixing processes \cite[Theorem 6]{dehling1982almost} or \cite[Theorem 2]{merlevede2008maximal} to conclude there exists constant $c_0 > 0$ such that almost surely
    \begin{align*}
        \limsup_{r\to\infty}\bigg\|\sum_{i=1}^r H_i\bigg\|_{\cal H_k} \leq c_0 \sqrt{h\log\log h}.
    \end{align*}
    Note that the hypothesis of \cite[Theorem 6]{dehling1982almost} holds in our case under the assumption of bounded kernel $k$ and exponential $\alpha/\beta$-mixing and fast $\phi-$mixing. Thus, there exists a constant $C(r, h) = O\big(\sqrt{\frac{1}{r}} + \sqrt{\frac{\log\log h}{h}}\big)$ such that $\mmd(\hat{\mu}_r, \hat{\nu}_h) - \mmd(\mu, \nu) \leq C(r, h) + \delta$ for sufficiently large $h$. Similar, $\mmd(\mu, \nu) - \mmd(\hat{\mu}_r, \hat{\nu}_h)$ can bounded from below with $-C(r, h)-\delta$, and the proof is complete. 
\end{proof}
{
    Lemma \ref{lem: empirical mmd consistency} indicates that under the null hypothesis (no change), the bias of empirical MMD is bounded by a positive quantity decaying at rate $o(r^{-1/2} + h^{-1/2}\log\log h)$ plus a small margin for sufficiently large reference data. To maintain a low value of the MMD-CUSUM statistics under the null hypothesis, it is necessary to apply a certain negative offset to the empirical MMD so that the cumulative sum in \eqref{eqn: test statistic} does not blow up when change is absent which leads to the second remark regarding the parameter $\Delta$.
}

\paragraph{Offset parameter $\Delta$}
{
    We shall determine the appropriate range for the offset parameter $\Delta$ in \eqref{eqn: test statistic} using Lemma \ref{lem: empirical mmd consistency}. Note that $\Delta$ needs to be sufficiently large under the null hypothesis such that the MMD-CUSUM statistic does not blow up due to the estimation error of the empirical MMD. As suggested by Lemma \ref{lem: empirical mmd consistency}, if $\Delta$ is strictly larger than $C(r, h)$, i.e., $\Delta \geq C(r, h) + \delta$ for some margin $\delta > 0$, then the empirical MMD is bounded by $\Delta$ almost surely for sufficiently large sample size. On the other hand, the upper bound for $\Delta$ appears under the alternative hypothesis (with post-change distribution $\nu$). As we shall see in Theorem \ref{thm: md bound}, $\Delta$ should be strictly less than $\mmd_k(\mu, \nu) - C(r, h) - \delta$ otherwise the $\md$ can be unbounded. 
}
{
    To tune $\Delta$ in practice, one can simulate the pre-change scenario with different values of $\Delta \geq C(r, h) + \delta$ using the reference dataset. For each value of $\Delta$, the $\mtbfa$ can be estimated with multiple runs of the experiment. Then, choose the smallest $\Delta$ that yields the acceptable $\mtbfa$ performance. Keeping the $\Delta$ small allows the MMD-CUSUM to achieve better $\md$.
}

\section{MAIN RESULTS}
\label{sec: main results}

In this section, we establish the detection performance of the MMD-CUSUM test using the metrics introduced in the previous section. The average-run-length $\mtbfa$ characterizes the average interval between false alarms, which is lower-bounded in Theorem \ref{thm: mtbfa bound}. The average detection delay measures the quickness of the detection, and an upper bound is given in Theorem \ref{thm: md bound}. The proofs are omitted due to the page limit, and they can be found in the Supplementary Material's \hyperref[sec: app]{Appendix}. 

Before we state the results, let us briefly summarize the technique we employed. Recall $\bb E_\infty$ denotes the expectation under $H_0$. We can expand the $\bb E_\infty[T(b,M)]$ as follows
\begin{align}
    &\bb E_\infty [T(b, M)] = \sum_{t=1}^\infty \bb P_\infty[T(b, M) \geq t] \nn\\
    =& M + \sum_{l=M+1}^\infty \left(1 - \mathbb{P}_\infty\left\{\bigcup_{t=M+1}^{l} \{T(b, M) = t\}\right\}\right)\nn\\
    \geq & M + \sum_{l=M+1}^\infty \left(1 - \mathbb{P}_\infty\left\{\bigcup_{t=M+1}^{l}\bigcup_{k=1}^{t-M}\{s_{k:t} \geq b\}\right\}\right) \nn\\
    \geq & M + \sum_{l=M+1}^\infty \left(1 - \sum_{t=M+1}^{l}\sum_{k=1}^{t-M}\mathbb{P}_\infty\{s_{k:t} \geq b\}\right),\nn
\end{align}
where union bound is applied to the last inequality. At this point, it suffices to obtain an upper bound on the tail probability $\mathbb{P}_\infty\{s_{k:t} \geq b\}$ using Proposition \ref{thm: hoeffding for phi mixing}, \ref{thm: hoeffding for beta mixing}. The tail probability bounds in Proposition \ref{thm: hoeffding for phi mixing}, \ref{thm: hoeffding for beta mixing} offers simple explicit subGaussian decay rates with linear or sublinear dependency\footnote{Linear or sublinear dependency of sample size means a tail bound of $O(\exp(-g(n)\epsilon^2))$ where $g(n)$ grows linearly or sublinearly. By writing it this way, we assume the tail probability measures the event $\{\sum_{i=1}^n X_n - n\bb E X \geq n\epsilon\}$ where the deviation scales with sample size $n$.} on the sample size $n$ inside the exponential. This kind of decay rate is necessary for our analysis as it dictates the scaling of $\mtbfa$ in threshold $b$. As we shall see in the theorem below, the slower decay rate of Proposition \ref{thm: hoeffding for beta mixing} causes the difference in $\mtbfa$ between exponential $\alpha/\beta$-mixing and fast $\phi$-mixing processes.

We note that the existing concentration inequalities obtained for generic purposes are not well-suited for the task at hand. For example, the classic concentration inequalities for $\alpha$-mixing, such as \cite[Theorem 3.5]{vidyasagar2002learning}, have tail bound with an additive term in addition to the common exponential term seen in the usual Hoeffding's inequality. When combined with our technique, it leads to a prohibitively cumbersome derivation of the $\mtbfa$. The $\alpha$-mixing concentration inequality in \cite[Theorem 2]{zou2009generalization} gives the tail bound on the relative deviation (scaled by variance) instead of the absolute deviation. The $\beta$-mixing results in \cite{krebs2018large} and the $\alpha$-mixing results in \cite{merlevede2012bernstein} provide a subexponential bound of $O(\exp(-\epsilon))$ which is a weaker dependency on $\epsilon$ than we desired. The detailed discussion of the concentration inequalities we derived is postponed until the main results are introduced.

We now state the main result on the upper bound of $\mtbfa$ under the mixing condition described in Definition \ref{def: exp mixing}. 
\begin{theorem}
\label{thm: mtbfa bound}
    The average-run-length for test statistics \eqref{eqn: test statistic} and stopping rule \eqref{eqn: stopping rule} under the null hypothesis has the following lower bounds.
    \begin{enumerate}
        \item Suppose $\bd X$ is $\alpha/\beta$-mixing satisfying Definition \ref{def: exp mixing}, then
        \begin{align}
        \label{eqn: mtbfa beta}
            &\mtbfa[T(b, M)]\nn\\
            &\quad\geq M -1 + \exp\bigg(b^{\frac{\gamma}{\gamma+1}} \delta^{\frac{\gamma + 2}{\gamma+1}}\bigg) (1 + o(1)),
        \end{align}

        \item Suppose $\bd X$ is $\phi$-mixing satisfying Definition \ref{def: fast phi mixing}, then
        \begin{align}
        \label{eqn: mtbfa phi}
            \mtbfa[T(b, M)] \geq M -1 + \exp(b\delta)(1 + o(1)),
        \end{align}
    \end{enumerate}
    {where $\gamma$ is defined in Definition \ref{def: exp mixing}, and $\delta > 0$ is defined in Lemma \ref{lem: empirical mmd consistency} and depends on $\Delta$, $h$.}
\end{theorem}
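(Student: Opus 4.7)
The plan is to turn the decomposition of $\bb E_\infty[T(b,M)]$ sketched immediately before the theorem statement into a sharp lower bound by (i) reducing $\mathbb{P}_\infty\{s_{k:t}\geq b\}$ to a tail event for a centred sum of empirical MMDs, (ii) applying the relevant mixing concentration inequality (Proposition \ref{thm: hoeffding for phi mixing} for fast $\phi$-mixing, Proposition \ref{thm: hoeffding for beta mixing} for exponential $\alpha/\beta$-mixing), and (iii) carrying out a geometric or Laplace-type sum to extract the exponential growth rate.

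The first step is to centre the summands of $s_{k:t}$. Under $H_0$ the pre- and post-change laws coincide, so $\mmd_k(\mu,\nu)=0$; Lemma \ref{lem: empirical mmd consistency} then gives $\bb E[\widehat{\mmd}\mid\cal D_h]\leq C(r,h)$ almost surely for sufficiently large $h$. With $\Delta\geq C(r,h)+\delta$ and $N:=t-k+1$, the event $\{s_{k:t}\geq b\}$ becomes a deviation event in which a sum of $N$ blockwise empirical-MMD evaluations exceeds its conditional mean by at least $N\delta+b$; these evaluations are measurable functionals of $\bd X$ and hence inherit its mixing rate.

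For the fast $\phi$-mixing case I would invoke the linear-exponent tail bound of Proposition \ref{thm: hoeffding for phi mixing} to get $\mathbb{P}_\infty\{s_{k:t}\geq b\}\leq \exp(-c(N\delta+b)^2/N)\leq \exp(-cN\delta^2-2c\delta b)$. Summing over $k$ (equivalently, over $N$) for fixed $t$ is a geometric series in $\exp(-c\delta^2)$, leaving a per-$t$ contribution of order $\exp(-2c\delta b)$; summing over $t\in[M{+}1,l]$ yields $(l-M)\,O(\exp(-2c\delta b))$, so the inner bracket of the $\mtbfa$ expansion stays non-negative whenever $l\lesssim \exp(2c\delta b)$. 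Counting $1$ over this range of $l$ delivers \eqref{eqn: mtbfa phi} after absorbing the absolute constant into $\delta$. For the exponential $\alpha/\beta$-mixing case, the corresponding Hoeffding bound from Proposition \ref{thm: hoeffding for beta mixing} has sublinear exponent, $\mathbb{P}_\infty\{s_{k:t}\geq b\}\leq \exp\bigl(-c N^{\gamma/(\gamma+1)}(\delta+b/N)^2\bigr)$, and the sum over $k$ no longer collapses to a geometric series. Writing $\alpha=\gamma/(\gamma+1)$ and minimising $g(N)=N^{\alpha}\delta^2+2N^{\alpha-1}\delta b+N^{\alpha-2}b^2$ in $N>0$ leads (via the resulting quadratic in $N$) to $N^{\ast}=b(2-\alpha)/(\alpha\delta)$ and $g(N^{\ast})\asymp b^{\alpha}\delta^{2-\alpha}=b^{\gamma/(\gamma+1)}\delta^{(\gamma+2)/(\gamma+1)}$, exactly matching the stated exponent. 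A Laplace-type estimate around $N^{\ast}$ shows that the sum over $k$ is dominated by its value at $N^{\ast}$ up to a polynomial-in-$(b,\delta)$ prefactor, which is absorbed into the $1+o(1)$ factor; the remaining sums over $t$ and $l$ then proceed exactly as in the $\phi$-mixing case, delivering \eqref{eqn: mtbfa beta}.

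The main technical obstacle is the Laplace-type estimate in the $\alpha/\beta$ case: because $N^{\ast}=\Theta(b/\delta)$ sits inside the summation range rather than at an endpoint, one cannot dominate by a trivial bound and must control $g''(N^{\ast})$ carefully so that the polynomial prefactor multiplying $\exp(-c g(N^{\ast}))$ does not corrupt the leading exponential rate. A secondary technical point is that Lemma \ref{lem: empirical mmd consistency} gives the bias control only almost surely for $h$ large enough, so I would need to argue that the (probability-zero in the limit) event on which the bias bound fails can be absorbed into the $1+o(1)$ term uniformly across the $(k,t)$ pairs contributing to the union bound; this is standard but must be handled before the exponential counting argument is deployed.
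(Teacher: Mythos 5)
Your proposal is correct and shares the paper's skeleton: the same expansion of $\bb E_\infty[T(b,M)]$ into survival probabilities, the same majorization and union bound over $(k,t)$, the same two concentration inequalities (Propositions \ref{thm: hoeffding for phi mixing} and \ref{thm: hoeffding for beta mixing}) applied to the centred block sums with deviation $N\delta+b$, and the same critical block count $N^*=(\gamma+2)b/(\gamma\delta)$ producing the exponent $b^{\gamma/(\gamma+1)}\delta^{(\gamma+2)/(\gamma+1)}$. The one genuine divergence is how the double sum $\sum_t\sum_k\bb P_\infty\{s_{k:t}\geq b\}$ is aggregated. You sum over $N=t-k+1$ directly (a geometric series in the $\phi$-mixing case, a Laplace-type estimate around $N^*$ in the $\alpha/\beta$ case), so each $t$ contributes $O(\exp(-cg(N^*)))$ and the truncation point $L$ scales like $\exp(cg(N^*))$. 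The paper instead bounds \emph{every} term uniformly by the single worst-case value $D$ at $N^*$, pays a cubic factor $(L-M)^3$ in the triple sum, and optimizes $L-M\asymp D^{-1/2}$, landing on a bound of order $\sqrt{1/D}$; this sidesteps entirely the Laplace estimate you identify as your main obstacle, at the cost of halving the constant in the exponent — which is harmless since the theorem absorbs all such constants into the stated rate. Your Laplace step also admits a crude shortcut that avoids controlling $g''(N^*)$: since $g(N)\geq\tfrac12 g(N^*)+\tfrac12 cN^{\gamma/(\gamma+1)}\delta^2$, the sum over $N$ is bounded by a convergent series times $\exp(-g(N^*)/2)$, again only costing a constant in the exponent. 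Finally, your worry about the almost-sure qualifier in Lemma \ref{lem: empirical mmd consistency} is lighter than you suggest: the exceptional event depends only on the reference dataset $\cal D_h$, not on $(k,t)$, so conditioning on a good realization of $\cal D_h$ handles all pairs simultaneously, exactly as the paper implicitly does.
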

\begin{proof}
    Appendix \ref{sec: mtbfa proof}
\end{proof}

Theorem $\ref{thm: mtbfa bound}$ establishes the first $\mtbfa$ bound for the MMD-CUSUM test under $\alpha$/$\beta$/$\phi$-mixing processes. The performance of the MMD-CUSUM test under $\alpha$/$\beta$-mixing case has not been considered in the literature before, and Equation $\ref{eqn: mtbfa beta}$ provides the first exponential lower bound on the $\mtbfa$. In previous studies, $\phi$-mixing processes are considered in certain specific cases, such as the uniformly ergodic Markov chains \cite{chen2022change} and hidden Markov models (HMM) \cite{zhang2023data}. Equation \ref{eqn: mtbfa phi} generalizes the $\mtbfa$ bound therein to the broader $\phi$-mixing processes without loss of performance. It also indicates that Markovian or HMM structures are not necessary for the exponential lower bond of the $\mtbfa$.

The $\mtbfa$ bound in Equation \ref{eqn: mtbfa beta} has a dependency on $\gamma$, which controls the mixing speed (Definition \ref{def: exp mixing}). This dependency on $\gamma$ also is the result of applying the concentration bound in Proposition \ref{thm: hoeffding for beta mixing}. Suppose the $\alpha$ or $\beta$-mixing coefficient has a decay rate of $O(\exp(-n))$, i.e., $\gamma=1$, the $\mtbfa$ then achieves a $\Omega(\exp(b^{1/2} \delta^{3/2}))$ lower bound which is slighted degraded in terms of the threshold $b$ compared to Equation \ref{eqn: mtbfa phi}. 

Surprisingly, the $\mtbfa$ under the fast $\phi$-mixing condition (Equation \ref{eqn: mtbfa phi}) achieves the $\Omega(\exp(b))$ lower bound (same as Markovian samples) while only requiring a summable $\phi$-mixing coefficient. In comparison, the $\mtbfa$ lower bounds in \cite{chen2022change} and \cite{zhang2023data} are obtained under the Doeblin's condition \cite[page 402]{meyn2012markov}, which corresponds to exponential $\phi$-mixing conditions.

To measure the quickness of the MMD-CUSUM test, we estimate the expected value of the stopping time $T(b, M)$ under the alternative ($H_0$). Recall that $\bb E_0$ denotes the expectation under $H_1$. We can write the $\bb E_0[T(b,M)]$ as follows
\begin{align}
        \bb E_0 [T(b, M)] &= \sum_{t=1}^\infty \bb P_0[\hat{s}_t \leq b] \leq \sum_{t=1}^\infty \bb P_0[s_{1: t} \leq  b] \nn\\
        &= \sum_{t=1}^{t_0}\bb P_0[s_{1: t} \leq b] + \sum_{t=t_0+1}^\infty \bb P_0[s_{1: t} \leq  b],\nn
\end{align}
where the first inequality is due to $s_{1:t} \leq \hat s_t$. Splitting the summation at $t_0$ and trivially bound the first term with $t_0$. With a certain choice of $t_0$, the second term can be shown to be ultimately negligible or $o(1)$ compared to $t_0$ using the concentration inequality in Proposition \ref{thm: hoeffding for phi mixing} and \ref{thm: hoeffding for beta mixing}. 

\begin{theorem}
\label{thm: md bound}
    Suppose $\bd X$ is a mixing process satisfies Definition \ref{def: exp mixing} or \ref{def: fast phi mixing} and pre and post change stationary distribution $\mu$ and $\nu$ satisfy {$\mmd_k(\mu, \nu) > C(r,h) + \Delta + \delta$} for some $\delta > 0$. The average-detection-delay for test statistics \eqref{eqn: test statistic} and stopping rule \eqref{eqn: stopping rule} under the alternative hypothesis has the following upper bounds.
    \begin{align}
    \label{eqn: md bound}
        &\md[T(b, M)] \nn\\
        &\leq \max\bigg\{M, \frac{b}{D(\mu, \nu) - \Delta - \delta}\bigg\}(1 + o(1)),
    \end{align}
    where $D(\mu, \nu) = \mmd_k(\mu, \nu)- C(r,h)$, and $C(r,h)$ is defined in Lemma \ref{lem: empirical mmd consistency}.
\end{theorem}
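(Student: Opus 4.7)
The plan is to follow the sketch given just above the theorem. Starting from $\mathbb{E}_0[T(b,M)] = \sum_{t\ge 1}\mathbb{P}_0[T(b,M) \ge t]$, I would first observe that the stopping rule forces $T(b,M) \ge M$ deterministically, which accounts for the $M$ inside the $\max$. For block indices past $M/r$, I use the monotonicity $\hat s_t \ge s_{1:t}$ (the maximum over starting indices dominates any single cumulative sum) to conclude
\begin{align*}
\mathbb{E}_0[T(b,M)] \;\le\; M \;+\; r\sum_{t > M/r}\mathbb{P}_0\!\left[s_{1:t} \le b\right].
\end{align*}
The remaining task is therefore to bound the left-tail of $s_{1:t}$ under $H_1$ and to split this sum at an appropriate index $t_0$.

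Next I would apply Lemma~\ref{lem: empirical mmd consistency} in its lower-bound direction: under $H_1$, for sufficiently large $h$ and almost every realization of the reference dataset $\mathcal{D}_h$,
\begin{align*}
\mathbb{E}_0\!\left[\mmd[\hat\mu_r,\hat\nu_h]\,\middle|\,\mathcal{D}_h\right] \;\ge\; \mmd_k(\mu,\nu) - C(r,h) - \delta \;=\; D(\mu,\nu) - \delta,
\end{align*}
so that $\mathbb{E}_0[s_{1:t}\mid\mathcal{D}_h] \ge t\bigl(D(\mu,\nu) - \Delta - \delta\bigr)$. The separation hypothesis $\mmd_k(\mu,\nu) > C(r,h)+\Delta+\delta$ makes this slope strictly positive, so I set $t_0 := \lceil b / (D(\mu,\nu) - \Delta - \delta)\rceil$. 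For $t > \max\{M/r,\, t_0\}$, the event $\{s_{1:t} \le b\}$ represents a downward deviation of at least $(t - t_0)\bigl(D(\mu,\nu) - \Delta - \delta\bigr)$ from the conditional mean, and I would invoke the Hoeffding-type concentration inequalities of Propositions~\ref{thm: hoeffding for phi mixing} and \ref{thm: hoeffding for beta mixing} applied to the sequence $\{\mmd[\hat\mu_r,\hat\nu_h] - \Delta\}$ across successive blocks. Summing the resulting exponentially decaying tails gives $r\sum_{t > \max\{M/r,t_0\}}\mathbb{P}_0[s_{1:t}\le b] = o(\max\{M,\, rt_0\})$, which yields
\begin{align*}
\mathbb{E}_0[T(b,M)] \;\le\; \max\!\left\{M,\; \frac{b}{D(\mu,\nu)-\Delta-\delta}\right\}(1+o(1)),
\end{align*}
as claimed, after finally taking the outer expectation over $\mathcal{D}_h$.

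The main obstacle, in my view, will be the concentration step: the summands $\mmd[\hat\mu_r,\hat\nu_h]$ are themselves nonlinear (square-root) functionals of a block of $r$ consecutive post-change samples, so invoking Proposition~\ref{thm: hoeffding for phi mixing}/\ref{thm: hoeffding for beta mixing} directly on the block-level sequence requires verifying that the block sequence inherits the mixing rate of the underlying process (which it does, with block-wise mixing coefficients $\phi(nr)$, $\beta(nr)$, $\alpha(nr)$) and that the per-summand range is controlled by $\bar k$. A secondary nuisance is that Lemma~\ref{lem: empirical mmd consistency} is stated conditionally on $\mathcal{D}_h$ and holds only almost surely for large $h$, so I have to be careful to keep the conditioning intact while applying the mixing-based tail bound to the post-change block sequence alone (which is independent of $\mathcal{D}_h$ by construction), and only then remove the conditioning.
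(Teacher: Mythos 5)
Your proposal follows essentially the same route as the paper's proof: the same decomposition $\mathbb{E}_0[T]\le t_0+\sum_{t>t_0}\mathbb{P}_0[s_{1:t}\le b]$ with $t_0=\max\{M,\,b/(D(\mu,\nu)-\Delta-\delta)\}$, the same use of Lemma~\ref{lem: empirical mmd consistency} to lower-bound the conditional drift, and the same application of Propositions~\ref{thm: hoeffding for phi mixing} and \ref{thm: hoeffding for beta mixing} to the block sequence with mixing coefficients $\phi(tr)$, $\beta(tr)$ and per-summand span $\sqrt{2\bar k}$. The only point you compress is the final summation in the exponential $\alpha/\beta$-mixing case, where the sublinear $\hat t\sim t^{\gamma/(1+\gamma)}$ in the exponent means the tails are not simply geometric and the paper must split the tail sum again at $\bar t=\lceil t_0^{\sigma}\rceil$ with $\sigma\in\bigl(\tfrac{2+\gamma}{2(1+\gamma)},1\bigr)$ to verify the $o(t_0)$ claim---an execution detail, not a change of strategy.
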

\begin{proof}
    Appendix \ref{sec: md proof}
\end{proof}

Theorem \ref{thm: md bound} gives the first $O(b)$ upper bound on $\md$ under all three types of mixing conditions. Similar to the $\mtbfa$ lower bound, it was previously considered only under uniformly ergodic Markov chains and HMM. Our result shows that the Markovian or HMM structure is also not necessary for $O(b)$ upper bound on $\md$. 

Intuitively, the realization of the MMD-CUSUM statistics should track its mean, which is just $n\mmd(\mu, \nu)$ for time $n$. Therefore, the threshold $b$ should affect the average detection delay in a linear fashion. We note that the sufficient separation between $\mu$ and $\nu$ is required due to the estimation error of the empirical MMD as indicated by Lemma \ref{lem: empirical mmd consistency}. This can be satisfied by choosing $r, h$ sufficiently large and $\delta$ sufficiently small to ensure {$\mmd_k(\mu, \nu)-2C(r,h) - 2\delta > 0$}.

We now establish the concentration inequalities for the sum of bounded functions under mixing conditions. Proposition \ref{thm: hoeffding for phi mixing} is a Hoeffding-type inequality for $\phi$-mixing processes with summable mixing coefficients. We provide a proof based on the martingale decomposition. Concentration inequality for exponential $\phi$-mixing processes is obtained in \cite{samson2000concentration} using an information inequality-based argument. The martingale-based method was used in \cite{Leonid_Kontorovich_Ramanan_2009} to study the concentration inequality of dependent random variables on countable spaces. Proposition \ref{thm: hoeffding for beta mixing} compliments the results therein by considering stationary $\phi$-mixing processes on completely separable metric spaces. The sum of bounded functions of $\phi$-mixing processes has a tight concentration bound that resembles that of i.i.d. random variables, which can be recovered by setting $\Phi=0$. 
\begin{prop}
\label{thm: hoeffding for phi mixing}
    Let $\bd X$ be a stationary $\phi$-mixing process with coefficient satisfying Definition \ref{def: fast phi mixing}. Assume that $f:\msf X\to \bb R$ has bounded span and let $S_n = \sum_{i=0}^if(X_i)$. Then for $\epsilon \geq 0$, it holds
    \begin{align}
        &\bb P\bigg[S_n - \sum_{i=0}^{n-1}\bb{E}[f(X_i)]\geq n\epsilon \bigg] \nn\\
        &\quad\leq \exp\bigg(-\frac{2n\epsilon^2}{(2\Phi + 1)^2\sp(f)^2}\bigg),\nn
    \end{align}
    where $\Phi$ is defined in Definition \ref{def: fast phi mixing}.
\end{prop}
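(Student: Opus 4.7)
The plan is to combine a Doob martingale decomposition with the Azuma--Hoeffding inequality, using the $\phi$-mixing coefficients to produce a uniform bound on the martingale differences. This mirrors the general strategy of \cite{Leonid_Kontorovich_Ramanan_2009,samson2000concentration}, specialized here to stationary $\phi$-mixing processes with summable mixing coefficient.

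First, I would center by setting $Y = S_n - \sum_{i=0}^{n-1}\bb E[f(X_i)]$ and form the Doob martingale $M_k = \bb E[Y \mid \mathcal{F}_k]$ with $\mathcal{F}_k = \sigma(X_0, \ldots, X_{k-1})$, so $M_0 = 0$ and $M_n = Y$. Expanding $M_k - M_{k-1}$ splits into a ``present'' contribution and a sum of ``future'' contributions,
\begin{align*}
D_k = \bigl(f(X_{k-1}) - \bb E[f(X_{k-1}) \mid \mathcal{F}_{k-1}]\bigr) + \sum_{j=1}^{n-k}\bigl(\bb E[f(X_{k-1+j}) \mid \mathcal{F}_k] - \bb E[f(X_{k-1+j}) \mid \mathcal{F}_{k-1}]\bigr),
\end{align*}
where the present term is bounded almost surely by $\sp(f)$ from the span assumption on $f$.

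For each future term ($j \geq 1$), I would invoke the standard almost-sure reformulation of $\phi$-mixing: for any bounded measurable $g$ of $X_m$ with $m \geq t$, $|\bb E[g \mid \mathcal{F}_t] - \bb E g| \leq \sp(g)\,\phi(m-t+1)$ almost surely. Applying this to conditioning on both $\mathcal{F}_k$ and $\mathcal{F}_{k-1}$ and routing through the unconditional mean via a triangle inequality gives
\begin{align*}
|\bb E[f(X_{k-1+j}) \mid \mathcal{F}_k] - \bb E[f(X_{k-1+j}) \mid \mathcal{F}_{k-1}]| \leq (\phi(j) + \phi(j+1))\sp(f) \leq 2\phi(j)\sp(f).
\end{align*}
Summing over $j$ and using $\sum_{j \geq 1}\phi(j) \leq \Phi$ yields the uniform bound $|D_k| \leq (2\Phi+1)\sp(f)$. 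The one-sided Azuma--Hoeffding inequality applied to $\{M_k\}$ with $n$ differences of magnitude at most $(2\Phi+1)\sp(f)$ then gives variance proxy $n(2\Phi+1)^2\sp(f)^2$, producing the claimed subgaussian bound.

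The main obstacle is making the $\phi$-mixing step rigorous: one must pass from the event-based definition (Definition \ref{def: phi mixing}) to an almost-sure bound on conditional expectations of measurable functions on a general measurable space. This requires regular conditional distributions, which are available since $\msf X$ is a completely separable metric space, together with a total-variation argument bounding $\sup_{C}|\tbb P_{t+n}^\infty(C \mid \mathcal{F}_t) - \tbb P_{t+n}^\infty(C)|$ by $\phi(n)$ almost surely, and then lifting this pointwise event bound to integrals against bounded measurable $g$ via the span normalization. Once this lemma is in hand, the remainder of the argument is the mechanical martingale bookkeeping described above.
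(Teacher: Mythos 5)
Your proposal follows essentially the same route as the paper's proof: a Doob-type martingale decomposition of the centered partial sum (the paper's $g_i(x_{-\infty}^i)=\sum_{l\le i}f(x_l)+\sum_{l>i}\bb E[f(X_l)\mid x_{-\infty}^i]$ is exactly your $M_k$ up to the choice of filtration), a bound on each martingale difference obtained by summing $\phi$-mixing estimates on the future conditional expectations via a triangle inequality through the unconditional mean, and then Hoeffding's lemma for bounded differences with Chernoff's method. One small repair is needed to get the stated constant: asserting only $|D_k|\le(2\Phi+1)\sp(f)$ gives a conditional range of $2(2\Phi+1)\sp(f)$ and loses a factor of $4$ in the exponent, so you should instead note that your decomposition shows $\bb E[Y\mid\mathcal F_k]$, viewed as a function of $X_{k-1}$ with the earlier coordinates fixed, has span at most $(2\Phi+1)\sp(f)$, whence $D_k$ lies conditionally in an interval of that length and the one-sided Hoeffding bound applies — this is precisely what the paper's Lemma on the span of $g_i$ establishes. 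A cosmetic difference is that the paper conditions on the infinite past $\cal X_{-\infty}^{-1}$ and therefore carries an extra difference term $D_{-1}$, while your trivial initial $\sigma$-algebra avoids it and matches the stated bound more directly.
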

\begin{proof}
    Appendix \ref{sec: proof of hoeff phi}
\end{proof}

Compared to the $O(\exp(-n\epsilon^2))$ tail bound in Proposition \ref{thm: hoeffding for phi mixing}, the following concentration inequality for $\beta$-mixing processes has an $O(\exp(-\hat n\epsilon^2))$ tail bound where $\hat n$ grows sublinearly with the sample size $n$. The proof follows \cite[Theorem 2]{zou2009generalization} with the modification of replacing Bernstein's inequality with Hoeffding's Lemma (Lemma \ref{lem: hoeffding lemma}) to yield the desired result for our purpose.

\begin{prop}
\label{thm: hoeffding for beta mixing}
    Let $\bd{X}$ be a stationary $\beta$-mixing sequence with the coefficient satisfying Definition \ref{def: exp mixing}. Assume that $f:\cal Y\to \bb R$ has bounded span, i.e., $\sp(f) < \infty$, and let $S_n = \sum_{i=0}^{n-1} f(X_i)$. Then for all $\epsilon \in (0, \sp(f))$, it holds
    \begin{align}
        &\bb P\bigg[S_n - \sum_{i=0}^{n-1}\bb{E}[f(X_i)]\geq n\epsilon\bigg] \nn\\
        &\leq (1 + \bar\beta/e^{2})\exp\bigg\{- \frac{2\hat n\epsilon^2}{\sp(f)^2} \bigg\},\nn
    \end{align}
    where $\hat n = \lfloor n\lceil (10n/c)^{1/(\gamma+1)} \rceil^{-1}\rfloor$ and $c, \gamma$ are defined in Definition \ref{def: exp mixing}.
\end{prop}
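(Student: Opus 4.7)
The plan is to follow the block-decomposition approach of \cite[Theorem 2]{zou2009generalization}, substituting Hoeffding's lemma for the Bernstein-type step there, as noted in the discussion preceding the statement. The core idea is to group the dependent samples into blocks that are nearly independent of one another and then apply a standard i.i.d.\ concentration inequality to the block sums.

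First, I would partition $\{0,1,\dots,n-1\}$ into $2\mu$ consecutive blocks of common length $a := \lceil(10n/c)^{1/(\gamma+1)}\rceil$, with $2\mu a \leq n$, and then separate them by parity into two families of $\mu$ blocks each. Within each family consecutive blocks are separated by a block (of length $a$) of the opposite parity, which is the gap that Berbee's coupling lemma exploits. Writing the centered block sums as $Y^{o}_j$ (odd family) and $Y^{e}_j$ (even family), the decomposition $S_n - \bb E S_n = \sum_j Y^{o}_j + \sum_j Y^{e}_j$ and a union bound reduce the task to
\begin{align*}
\bb P\Big\{S_n - \bb E S_n \geq n\epsilon\Big\} \leq \bb P\Big\{\textstyle\sum_j Y^{o}_j \geq \tfrac{n\epsilon}{2}\Big\} + \bb P\Big\{\textstyle\sum_j Y^{e}_j \geq \tfrac{n\epsilon}{2}\Big\}.
\end{align*}
Second, I would apply Berbee's coupling lemma within each parity class to produce independent copies $\tilde Y^{o/e}_j$ matching the marginals of $Y^{o/e}_j$, at total coupling error at most $(\mu-1)\beta(a) \leq (\mu-1)\bar\beta\exp(-ca^\gamma)$. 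Since each centered $\tilde Y^{o/e}_j$ has range at most $a\,\sp(f)$, Hoeffding's lemma applied to the sum of $\mu$ independent such variables produces a tail bound with exponent proportional to $\mu\epsilon^2/\sp(f)^2$, which — after using $2\mu a \approx n$ and $\hat n \approx n/a$ — simplifies to the advertised exponent in $\hat n$. Combining the two parity contributions with the two coupling errors yields the stated prefactor $1+\bar\beta/e^2$.

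The main obstacle is the simultaneous balancing of the Hoeffding exponent and the coupling error through the single parameter $a$: the block length must be long enough to make $\beta(a)$ small yet short enough to keep the number of independent blocks $\mu$ large. The specific choice $a=\lceil(10n/c)^{1/(\gamma+1)}\rceil$ is precisely tuned to achieve both simultaneously, and the interplay of the constant $10$ with the exponent $1/(\gamma+1)$ is what makes the coupling error $(\mu-1)\bar\beta\exp(-ca^\gamma)$ collapse cleanly into the prefactor $\bar\beta/e^2$. Verifying this balance, together with the floor/ceiling bookkeeping that converts $2\mu a \leq n$ into the explicit definition of $\hat n$, is the bulk of the technical work. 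The restriction $\epsilon<\sp(f)$ enters through the regime of validity of Hoeffding's lemma applied at the block level. Beyond these considerations the argument relies only on standard tools (Berbee's coupling lemma and Hoeffding's inequality), so no fundamentally new techniques are required.
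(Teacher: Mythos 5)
Your proposal takes a genuinely different route from the paper: you use the classical Yu/Berbee blocking scheme (contiguous blocks of length $a$, split by parity, coupled to independent copies), whereas the paper partitions $\{1,\dots,n\}$ into $k$ \emph{sparse} subsequences $I_i=\{i,i+k,\dots\}$ whose elements are pairwise separated by $k$, bounds the moment generating function of $S_n/n$ by a convex combination of subsequence MGFs, factorizes each subsequence MGF up to an error controlled by the product inequality for $\beta$-mixing (Lemma \ref{lem: beta mixing ineq}), and only then applies Hoeffding's lemma — to \emph{individual summands} of range $\sp(f)$, not to block sums.

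This difference is not cosmetic: your approach does not recover the stated constants. After the parity union bound you must ask each family of $\mu\approx n/(2a)$ independent blocks, each of range $a\,\sp(f)$, to exceed $n\epsilon/2$; Hoeffding then gives exponent $-2(n\epsilon/2)^2/(\mu a^2\sp(f)^2)=-\hat n\epsilon^2/\sp(f)^2$, i.e.\ a factor of $2$ short of the claimed $-2\hat n\epsilon^2/\sp(f)^2$, and with a multiplicative prefactor $2$ from the two parity classes. The paper's summand-level application of Hoeffding's lemma inside the MGF, combined with Chernoff optimization over $r$, is exactly what preserves the i.i.d.\ constant $2$. Two further points are glossed over. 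First, the Berbee coupling error is \emph{additive}, so folding $2(\mu-1)\bar\beta e^{-ca^\gamma}$ into a multiplicative prefactor $\bar\beta/e^2$ requires showing it is dominated by the exponential term uniformly over admissible $\epsilon$; this is where the factor $10$ in $a$ and the restriction $\epsilon<\sp(f)$ must actually be deployed (in the paper the restriction enters through the constraint $r\le 4l/\sp(f)$ needed to absorb the $\Phi_2$ term, not through Hoeffding's lemma, which has no such restriction). Second, the resulting bound would have the form $2\exp(-\hat n\epsilon^2/\sp(f)^2)+(\text{coupling error})$ rather than $(1+\bar\beta/e^2)\exp(-2\hat n\epsilon^2/\sp(f)^2)$. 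Your outline would yield a correct Hoeffding-type inequality of the same qualitative order in $\hat n$, which would suffice for the downstream $\mtbfa$/$\md$ asymptotics, but as written it does not prove the proposition with the stated constants.
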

\begin{proof}
    Appendix \ref{sec: proof of hoeffding for alpha mixing}
\end{proof}

To our knowledge, the tail bound in the above form has not been considered previously. As opposed to the classic two-term version in \cite[Theorem 3.5]{vidyasagar2002learning} and the relative error version in \cite[Theorem 2]{zou2009generalization}, which can be difficult to be applied in our analysis, Proposition \ref{thm: hoeffding for beta mixing} streamlines the calculation of $\mtbfa$ and $\md$ in Theorem \ref{thm: mtbfa bound} and \ref{thm: md bound}. 

Compared to regular Hoeffding's inequality for bounded i.i.d. random variables \cite{hoeffding1994probability}, the exponent of the tail bound has a sublinear dependence on sample size due to the presence of $\hat n$. $\hat n$ is close to $n$ when $\gamma$ is large corresponding to a faster decaying $\beta$-mixing coefficient (Definition \ref{def: exp mixing}). This sublinear relation with respect to $n$ is also reported by \cite{merlevede2012bernstein} and \cite{krebs2018large} as well under exponential $\alpha$ and $\beta$-mixing conditions with $\gamma=1$. They provided an $O(\exp(-n\epsilon/(\log n\log\log n)))$ tail bound, which is a faster rate in $n$ compared to Proposition \ref{thm: hoeffding for beta mixing} with $\gamma=1$. It is tempting to think that this tail could improve the lower bound of $\mtbfa$ in Theorem $\ref{thm: mtbfa bound}$. However, the subexponential, instead of subGaussian\footnote{A subGaussian bound on $\epsilon$ refers to a tail bound that looks like $O(\exp(-\epsilon^2))$. A subGaussian bound on $\epsilon$ refers to a tail bound that looks like $O(\exp(-\epsilon))$.}, dependency on $\epsilon$ makes it not applicable to our proof. A similar concentration type inequality for $\alpha$-mixing processes is obtained in Proposition \ref{thm: hoeffding for alpha mixing} following an analogous proof. 

\begin{figure*}[ht]
    \centering
    \begin{subfigure}{0.49\linewidth}
        \includegraphics[width=\linewidth]{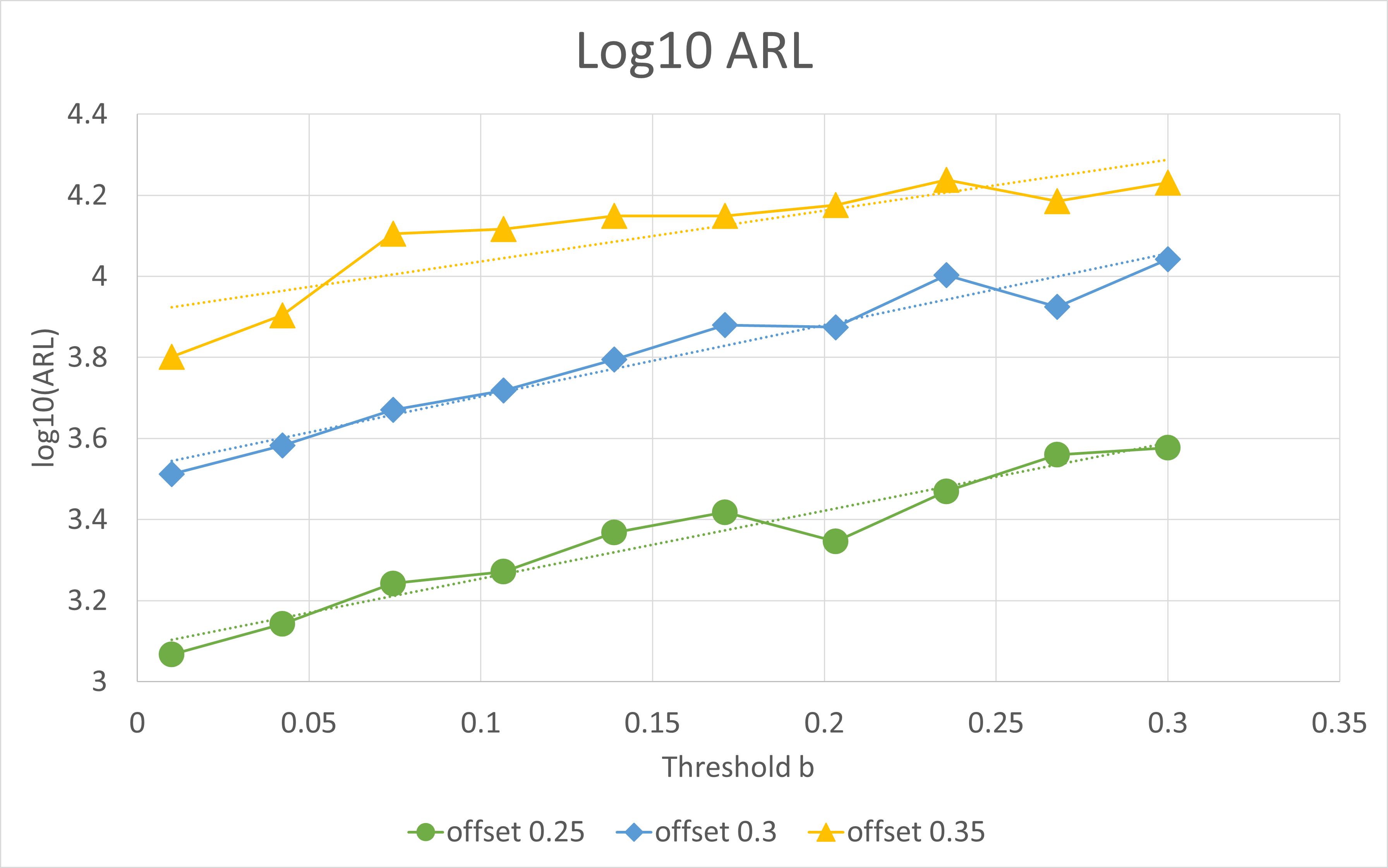}
        \caption{$\log_{10}(\mtbfa)$ with Gaussian noise.}
        \label{fig:log arl normal}
    \end{subfigure}
    \hfill
    \begin{subfigure}{0.49\linewidth}
        \includegraphics[width=\linewidth]{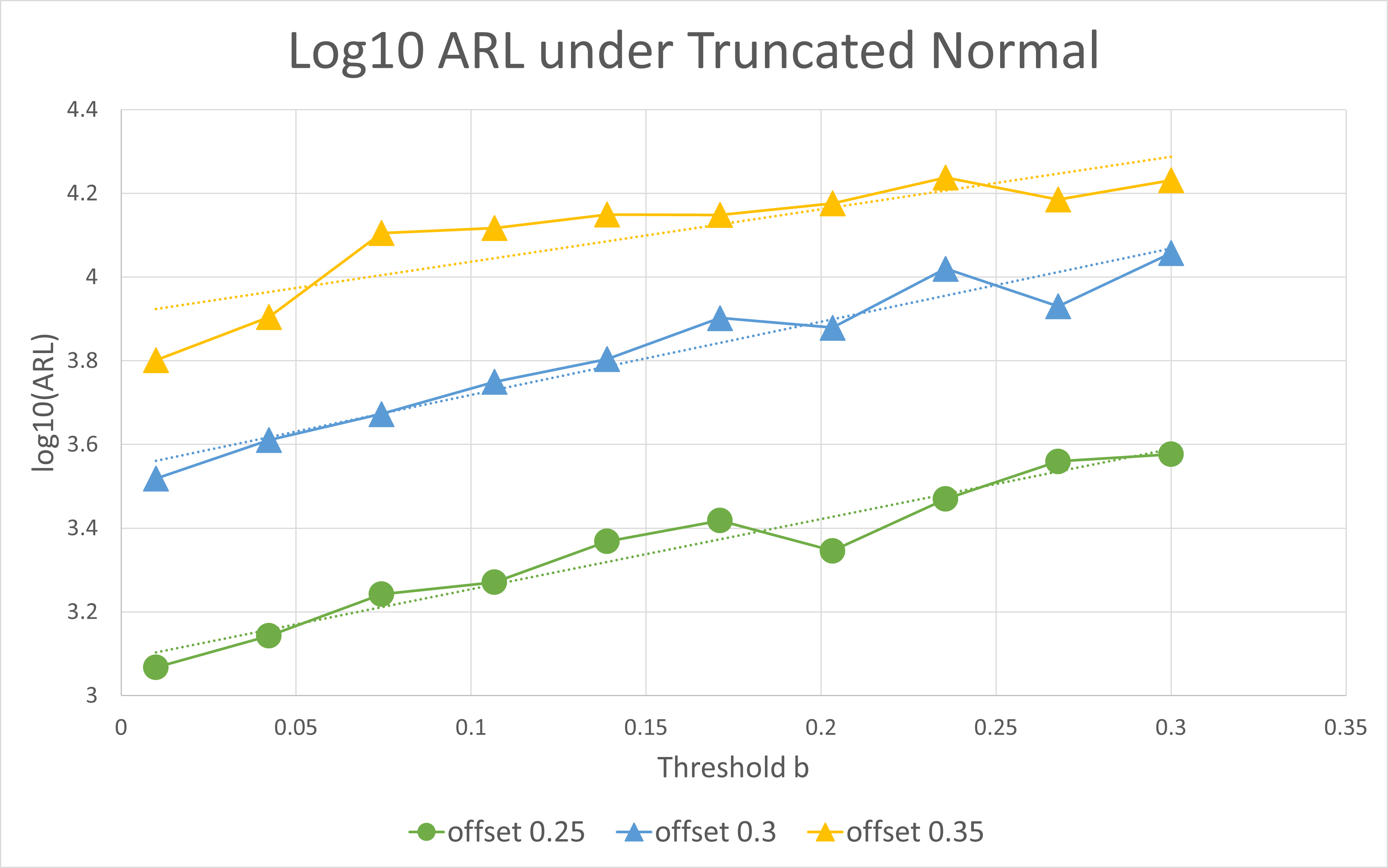}
        \caption{$\log_{10}(\mtbfa)$ with truncated Gaussian noise.}
        \label{fig:log arl truncated normal}
    \end{subfigure}
    \begin{subfigure}{0.49\linewidth}
        \includegraphics[width=\linewidth]{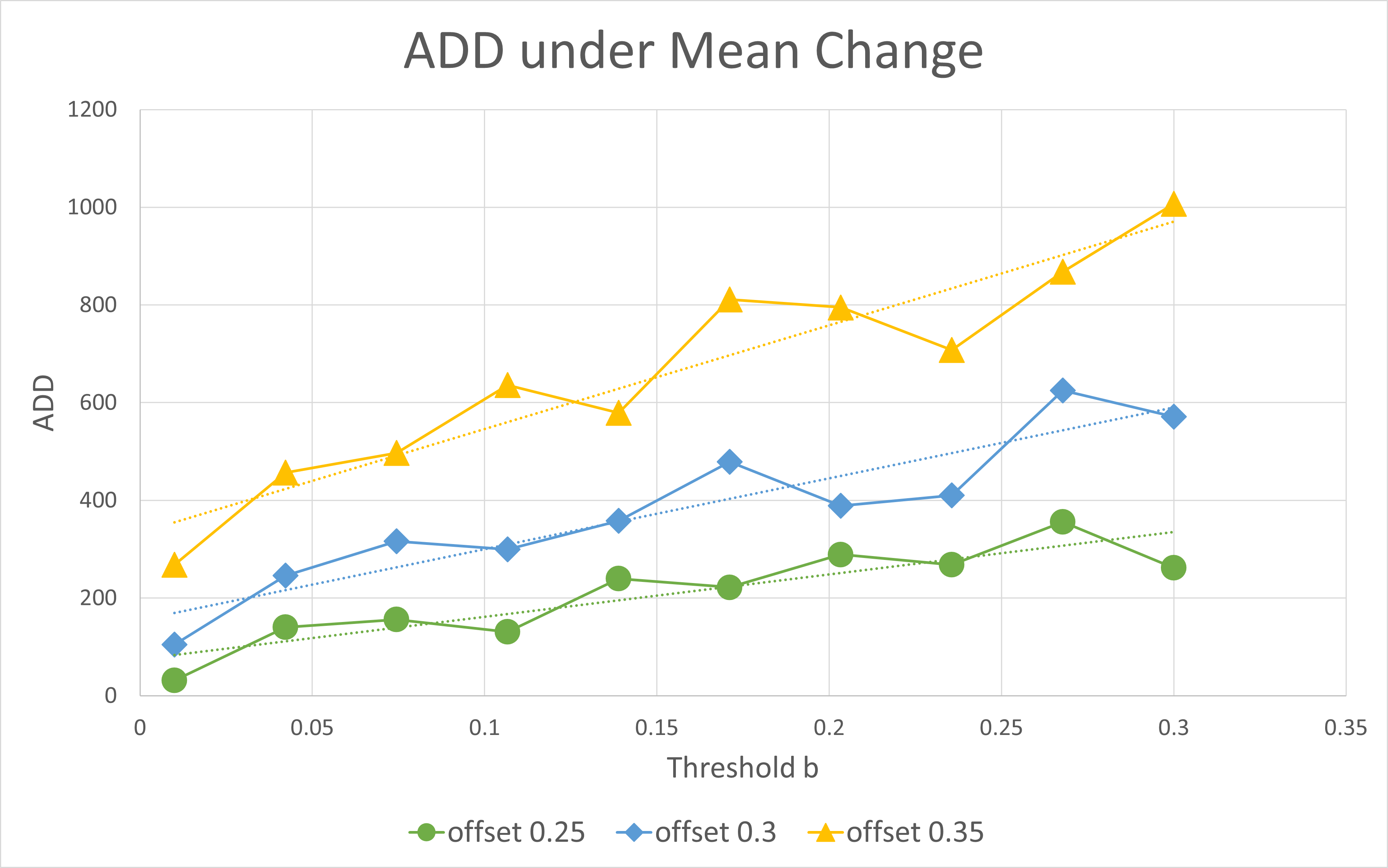}
        \caption{$\md$ under mean shift.}
        \label{fig:add mean}
    \end{subfigure}
    \hfill
    \begin{subfigure}{0.49\linewidth}
        \includegraphics[width=\linewidth]{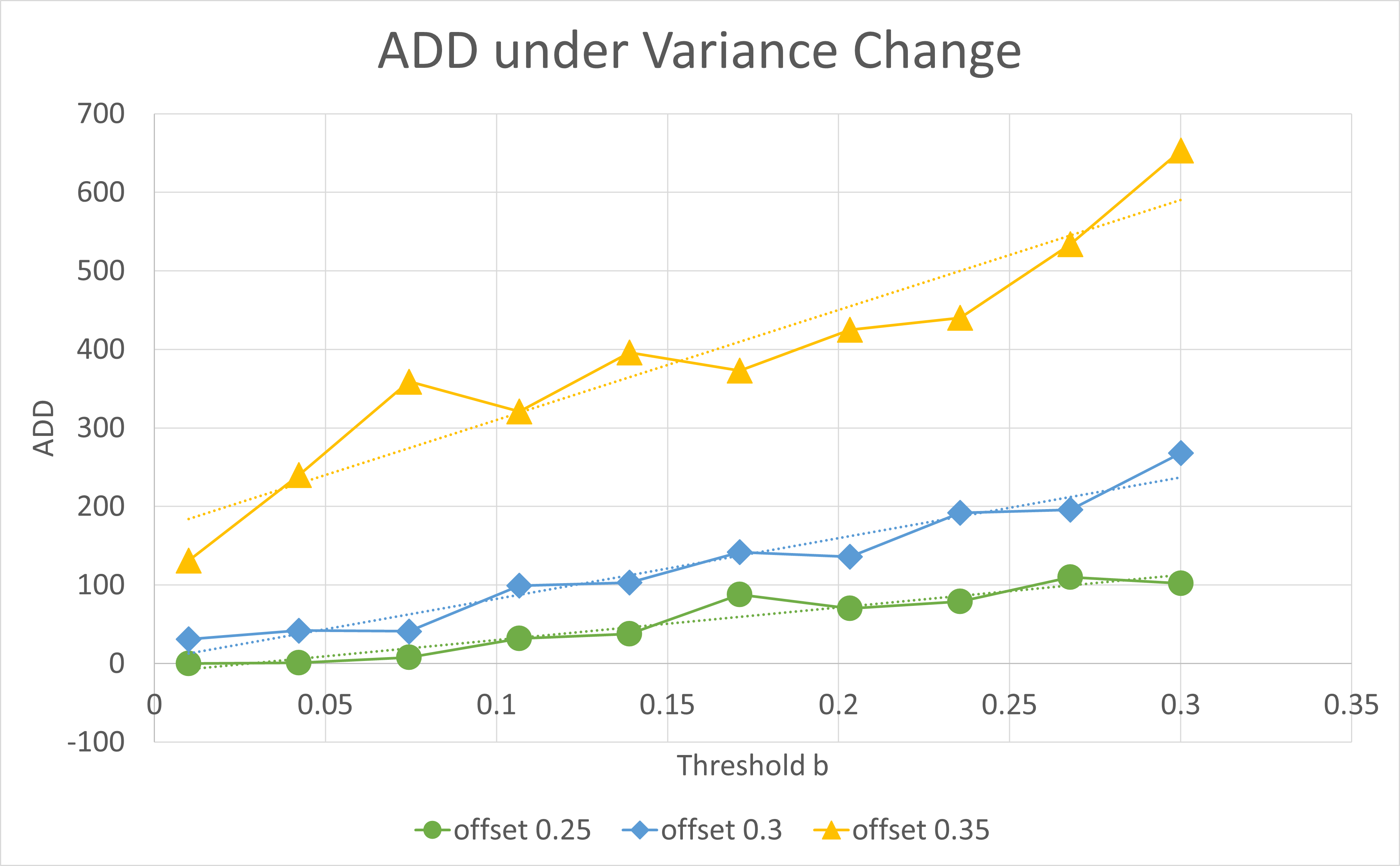}
        \caption{$\md$ under variance change.}
        \label{fig:add var}
    \end{subfigure}
\end{figure*}

\section{NUMERICAL SIMULATIONS}
\label{sec: simulation}

In this section, we apply the MMD-CUSUM test to a simulated stochastic process and verify the theoretical results. The stochastic process is generated by simulating a stable linear system $A\in\bb R^{4\times 4}$ with an observations matrix $C\in\bb R^{2\times 4}$. Let $\bd Z = \{Z_i\}_{i\in\bb N}$ denote the state process and $\bd Y=\{Y_i\}_{i\in\bb N}$ denote the observation process. The system update equations can written as follows
\begin{align}
    Z_{i+1} &= A Z_i + W_i\nn\\
    Y_{i+1} &= C Z_{i+1} + V_i,\nn
\end{align}
where $A=$[[0.96, 0.99, -0.88, 0.56],[0, 0.98, 0.75, -0.65],[0, 0, 0.97, 0.95], [0, 0, 0, 0.94]] and $C = $[[1, 0, 0, 0,], [0, 0, 0, 0], [0, 0, 1, 0], [0, 0, 0, 0]]. Randomness is introduced into the system through the actuation noise $W_i$ and observation noise $V_i$ where $W_i\stackrel{i.i.d.}{\sim} \cal N_1$ and $V_i\stackrel{i.i.d.}{\sim} \cal N_2$ for all $i$. In our experiments, $\cal N_1$ and $\cal N_2$ are two multivariate normal distributions. This is an example of a hidden Markov model (HMM). The state observation joint process $(\bd Z, \bd Y)$ and the state process $\bd Z$ along are Markov chains; however, the observation process $\bd Y$ in general is not. The observation process of this system is exponential $\beta$-mixing. {This can be deduced from the fact that the matrix $A$ is stable and the noise has bounded variance \mbox{\cite[Section 3.5, page 100]{vidyasagar2002learning}}}. To obtain an exponential $\phi$-mixing process from the observations, one can simulate the above system with truncated versions of $\cal N_1$ and $\cal N_2$. 

The kernel chosen for the MMD-CUSUM test is the rational quadratic kernel $k_\sigma^{rq}(x, y) = (1 + (2\sigma)^{-1}\|x-y\|^2)^{-\sigma}$ for $\sigma > 0$ instead the popular Gaussian RBF kernel $k_\sigma^{rbf}(x, y) = \exp(-\|x-y\|^2/(2\sigma^2))$ for $\sigma >0$. As demonstrated by \cite{binkowski2018demystifying}, the rational quadratic kernel is favored over the Gaussian RBF kernel in GAN applications, which indicates its superior performance in separating probability distribution. We fix the parameter $\alpha=1$ for all experiments. The reference dataset is obtained by recording $\bd Y$ for $10^4$ steps under the pre-change configurations with an appropriate burn-in period applied to the samples to maintain stationarity. We estimate the $\mtbfa$ and $\md$ by taking the average of 50 independent experiments for each threshold. The experiments are performed under 3 different offsets to demonstrate the sensitivity of this parameter.

We apply abrupt changes to the noise distribution $\cal N_1$ of the state process. The MMD-CUSUM test is applied to the observation process $\bd Y$ only. The noise observation poses an additional layer of challenge for the detector. To simulate an $\alpha$/$\beta$-mixing process, it suffices to use the regular Gaussian noise. To simulate a $\phi$-mixing, we sample from the same Gaussian distribution and reject the samples falling outside a $[-1, 1]^4$ box. The random seeds are kept the same across the regular and the truncated cases to ensure comparability. The $\log(\mtbfa)$ under both cases are shown in Figure \autoref{fig:log arl normal} and \autoref{fig:log arl truncated normal}. To our surprise, the $\mtbfa$ under the regular Gaussian case maintains an exponential relationship with the threshold, which suggests the $\mtbfa$ bound for $\alpha$/$\beta$-mixing process can be improved. We discuss the difficulty associated with this improvement in Section \ref{sec: discussion}.

The $\md$ are estimated under regular Gaussian noise. We present the $\md$ under two cases: (i) mean shift $\cal N_1(0, 0.1 I) \to \cal N_1'(0.01 \mathbbm{1}, 0.1 I)$ (Figure \autoref{fig:add mean}) and (2) variance change $\cal N_1(0, 0.1 I) \to \cal N_1'(0, 0.5 I)$ (Figure \autoref{fig:add var}). The $\md$ scales linearly with the threshold $b$, which corroborates our findings.

\section{DISCUSSION}
\label{sec: discussion}

\subsection{Unbiased MMD estimator}
The following unbiased estimator of the squared MMD, introduced in \cite{gretton2012kernel}, can also be used to replace Equation \ref{eqn: empirical mmd}. We write the unbiased estimator of the squared MMD between $\mu, \nu$ using $m$ samples from $\mu$ and $n$ samples from $\nu$ as
\begin{align}
    &\overline\mmd_{k}^2(\hat\mu_m, \hat\nu_n) = \frac{1}{m(m-1)} \sum_{1\leq i,j\leq N} k(X_i, X_j) \nn\\
    + &\frac{1}{n(n-1)}\sum_{1\leq i,j\leq M} k(X_i', X_j') - \frac{2}{nm}\sum_{i,j}k(X_i, X_j'),\nn
\end{align}
where $X_i\sim\mu$ and $X_j'\sim\nu$ for $i=1,\cdots, n$ and $j=1,\cdots, m$. We abuse the notation here and write the empirical square MMD as the square MMD between empirical measures, although they are not equivalent to the unbiased estimator. Due to the unbiasedness, it is not always non-negative and thus should be directly plugged into the partial sum with the square root. To adapt $\overline\mmd_k^2$ to the current framework, it suffices to obtain a consistency result such as Lemma \ref{lem: empirical mmd consistency}, and the rest should follow. Consider two independent stochastic processes $\bd X = \{X_i\}$ and $\bd X'=\{X_i'\}$ with stationary distributions $\mu,\nu$ and summable kernel mixing coefficients as in Definition \ref{def: kernel mixing coefficient}. Suppose we use $m$ consecutive samples from $\bd X$ and $n$ consecutive samples from $\bd Y$. Then, we can bound the estimation bias caused by the dependency between samples as follows,
\begin{align}
    &\bigg|\bb E [\overline\mmd_{k}^2(\hat\mu_m, \hat\nu_n)] - \mmd_k^2(\mu, \nu)\bigg|\nn\\
    \leq &\bigg|\bb E [\overline\mmd_{k}^2(\hat\mu_m, \hat\nu_n)] - \bb E [\overline\mmd_{k}^2(\mu, \hat\nu_n)] \bigg| \nn\\ 
    &+ \bigg|\bb E [\overline\mmd_k^2(\mu, \hat\nu_m)] - \mmd_k^2(\mu, \nu)\bigg|\nn\\
    \leq &\frac{\Sigma_\mu}{m} + \frac{\Sigma_\nu}{n},\nn
\end{align}
where the second inequality comes from \cite[Lemma 7.1]{cherief2022finite}, $\Sigma_\mu, \Sigma_\nu$ are defined in Definition \ref{def: kernel mixing coefficient}, and the expectations are taken with respect to the randomness in the samples. After denoting $\overline C_{\mu, \nu}(m, n) \coloneqq \frac{\Sigma_\mu}{m} + \frac{\Sigma_\nu}{n}$ and replacing $C_{\mu, \nu}(m, n)$ with $\overline C_{\mu, \nu}(m, n)$ throughout the paper, the same set of results also holds for the CUSUM statistics defined with the unbiased estimator $\overline\mmd_{k}^2$. 

\subsection{Computation complexity}

The time complexity at each time step is $O(rh)$ where $r$ is the block size on the incoming data, and $h$ is the size of the reference dataset. Compared to the overlapping block design in \cite{chen2022change} with time complexity $O(r^2h)$, the non-overlapping block design here increases the speed at the expense of incurring a constant detection delay. The memory usage here is constant since only the current block and the reference data need to be stored. We present the implementation of the detection procedure in Algorithm \ref{alg: MMD CUSUM}.

\subsection{Connection to HMM}

Hidden Markov models (HMM) cover a wide array of real-world scenarios where the MMD-CUSUM test can be applied. For a comprehensive review of HMM, please refer to \cite{ephraim2002hidden} and the references therein. Change point detection for HMM arises from the monitoring complex dynamic systems \cite{smyth1994hidden}, such as communication networks \cite{salamatian2001hidden}, power plants \cite{kwon1999accident}, healthcare monitoring \cite{al2008automatic}, manufacture process monitoring \cite{li2022improved}, distributed machine learning systems, etc. 

For change detection, HMM can be treated as a mixing process. Consider a Markov chain $\bd X \coloneqq \{X_i\}\subset \msf X$ and its observation process $\bd Y \coloneqq \{Y_i\}\subset\msf Y$, where $\msf Y$ is a complete separable metric space with Borel $\sigma$-algebra $\cal Y$. Define the observation kernel $Q_i:\msf X\times\cal Y\to [0, 1]$ and $Q_i(X_i, A) = \bb P(Y_i\in A | \{Y_t\}_{t=-\infty}^{i-1}, \{X_t\}_{t=-\infty}^{i})$. Then, $\bd Y$ is $\alpha$/$\beta$/$\phi$-mixing as soon as $\bd X$ is $\alpha$/$\beta$/$\phi$-mixing \cite[Theorem 3.12]{vidyasagar2002learning}. 

\subsection{Asymptotic stationary processes}

In practice, many mixing processes may not be strictly stationary but convergent towards a stationary distribution at a certain speed. For example, a Doeblin chain starts from an initial distribution that is different from its stationary distribution. Weak asymptotic stationarity was introduced in \cite{agarwal2012generalization} to study the generalization bound of online algorithms. It combines the convergence to a stationary distribution and $\beta$-mixing into a single condition, which we choose not to include for the sake of simplicity. Instead, we provide a discussion on how to adapt Proposition \ref{thm: hoeffding for phi mixing} to asymptotic stationary processes in the supplementary materials. The adaption of Proposition \ref{thm: hoeffding for beta mixing} follows a similar argument. The intuition is that as long as the process converges sufficiently fast, the concentration of the partial sum will still hold. Thus, the same results on $\mtbfa$ and $\md$ can be extended to asymptotic stationary processes at no cost.

\subsection{Obtain \texorpdfstring{$\Omega(\exp(b))$}{Omega(exp(b))} bound on \texorpdfstring{$\mtbfa$}{ARL} under \texorpdfstring{$\alpha$}{alpha}/\texorpdfstring{$\beta$}{beta}-mixing}

As shown in Figure \ref{fig:log arl normal} and $\ref{fig:log arl truncated normal}$, the difference in $\mtbfa$ between $\alpha$/$\beta$-mixing and $\phi$-mixing is minimal which might indicate a tighter $O(\exp(b))$ bound on $\mtbfa$ under $\alpha$/$\beta$-mixing. This would be an improvement over the $\Omega(\exp(b^{1/\gamma}))$ in Theorem \ref{thm: mtbfa bound} where $\gamma$ controls the mixing speed. However, the difficulties lie in the unavailability (to the best of our knowledge) of a subGaussian tail bound with linear dependency on the sample size $n$ for stationary $\alpha$/$\beta$-mixing processes. This bottleneck is also reported by a recent study \cite{arvanitis2023concentration} on the concentration of kernel density estimator with dependent data. Their findings are limited to $\phi$-mixing processes due to the same issue. Circumventing this bottleneck might require significantly new techniques, which are left as future work.

\section{CONCLUSION}
\label{sec: conclusion}

In this paper, we derive the $\mtbfa$ and $\md$ for the MMD-CUSUM test under three stationary mixing conditions. Under the $\phi$-mixing condition, the performance of the MMD-CUSUM test is shown to match the i.i.d. case and the Markov chain case with uniform ergodicity. As a byproduct, we provide concentration inequalities of the partial sum of bounded functionals under $\alpha$, $\beta$, and $\phi$-mixing processes. To our knowledge, the concentration inequality in Proposition \ref{thm: hoeffding for beta mixing} and the proof of Proposition \ref{thm: hoeffding for phi mixing} are novel. 

We note the limitations of this study and future directions as follows. MMD is known to have a poor separation between probability measures, with differences only in the high-frequency region \cite{sriperumbudur2010hilbert}. The MMD-CUSUM test may experience performance degradation in such scenarios. A recent study \cite{hagrass2022spectral} tackles this problem in the kernel two-sample test setting via kernel spectral regularization. The spectral regularized kernel achieves the optimal minimax separation boundary, which results in an improved sample efficiency compared to the usual kernel two-sample test. Additionally, there have been several other exciting developments on kernel two-sample test \cite{Biggs_Schrab_Gretton_2023, Schrab_Kim_Guedj_Gretton_2023, Schrab_Kim_Albert_Laurent_Guedj_Gretton_2023}. It would be an interesting future direction to adapt those methods to the sequential test setting and analyze their performance. 

Another limitation is that our technique does not exploit the finer structures produced by the max operator over the partial sum. The theory of extremes of random fields \cite{yakir2013extremes} provides handy tools to estimate the probability of events such as $\{\sup_{\theta\in\Theta} S_{\theta} > \epsilon\}$, where $S_\theta$ is the sum of $n$ random variables in the random field and $\Theta$ is an index set, such as integers or real numbers. \cite{li2019scan} has demonstrated the utility of this technique in the i.i.d. case and shown a sharp $\mtbfa$ bound of $O(\exp(b^2))$. However, the extension of this technique has yet to be explored in the non-i.i.d. cases. Additionally, leveraging the martingale property of the MMD-CUSUM statistics with an unbiased estimator and the non-asymptotic version of the law of logarithm for martingales \cite{balsubramani2014sharp} yields another possible route to establish the performance bounds. We plan to investigate these directions in the future.

\section*{APPENDIX}
\label{sec: app}
\subsection{Auxiliary Facts}

\begin{definition}[Total variation metric]
\label{def: tv metric}
    Let $\bb B \coloneqq \{f: \supnorm[f] \leq 1, f:\msf X\to \bb R, f \text{ is } \cal X\text{-measurable}\}$, the total variation metric between probability measures $\mu, \nu\in\cal P(\cal X)$ is written as 
    \begin{align*}
        \tv(\mu, \nu) &\coloneqq \frac{1}{2}\sup_{f\in\bb B}\bigg | \int f d\mu - \int fd\nu \bigg |\nn\\
        &= \sup_{A\in\cal X}|\mu(A) - \nu(A)|.
    \end{align*}
\end{definition}

\begin{lemma}[Corollary D.2.5 in \cite{douc2018markov}]
\label{lem: tv ineq}
    Let $f: \msf X\to \bb R$ be an essentially bounded measurable function. For $\mu, \nu\in\cal P(\cal X)$, we have
    \begin{align}
        |\mu(f) - \nu(f)| \leq \tv(\mu, \nu) \sp(f),
    \end{align}
    where $\mu(f)$, $\nu(f)$ denotes the expectation of $f$ under $\mu$, $\nu$.
\end{lemma}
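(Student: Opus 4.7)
The plan is to exploit the fact that both $\mu$ and $\nu$ are probability measures, which allows shifting $f$ by a constant without changing $\mu(f) - \nu(f)$, and then to invoke the variational form of $\tv$ from Definition~\ref{def: tv metric}. The point is that the supremum defining $\tv$ ranges only over functions with $\supnorm \le 1$, so we must replace $f$ by something whose sup-norm is controlled by $\sp(f)/2$ rather than by $\supnorm[f]$ (which could be arbitrarily large even for $f$ of small span).

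First, I would observe the centering identity: for any constant $c \in \bb R$, since $\mu, \nu \in \calpx$ we have $\mu(c) = \nu(c) = c$, hence
\beqq{
  \mu(f) - \nu(f) = \mu(f - c) - \nu(f - c).
}
Next, I choose $c$ to shrink the sup-norm of $f - c$ as much as possible. Since $f$ is essentially bounded, set $c = \tfrac{1}{2}(\esssup f + \essinf f)$; then a direct calculation gives
\beqq{
  \supnorm[f - c] \;=\; \tfrac{1}{2}(\esssup f - \essinf f) \;=\; \tfrac{1}{2}\sp(f),
}
up to a null set (which does not affect the integrals against $\mu$ or $\nu$).

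Finally, I would appeal to the first characterization of the total variation distance in Definition~\ref{def: tv metric}: for any measurable $g$ with $\supnorm[g] \le 1$, $|\mu(g) - \nu(g)| \le 2\,\tv(\mu, \nu)$, and by homogeneity this extends to arbitrary bounded measurable $g$ as $|\mu(g) - \nu(g)| \le 2\,\tv(\mu, \nu)\,\supnorm[g]$. Applying this to $g = f - c$ and combining with the centering identity yields
\beqq{
  |\mu(f) - \nu(f)| \;=\; |\mu(f - c) - \nu(f - c)| \;\le\; 2\,\tv(\mu, \nu)\cdot \tfrac{1}{2}\sp(f) \;=\; \tv(\mu, \nu)\,\sp(f),
}
which is the claim.

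There is no real obstacle here; the statement is essentially an observation, and the only subtlety is recognizing that the naive bound $|\mu(f) - \nu(f)| \le 2\,\tv(\mu,\nu)\,\supnorm[f]$ is wasteful when $f$ has large mean but small span, and that the probability-measure property $\mu(1) = \nu(1)$ lets us replace $\supnorm[f]$ by $\tfrac{1}{2}\sp(f)$. An alternative route, which I would mention only as a sanity check, is to use the Jordan decomposition $\mu - \nu = (\mu-\nu)^+ - (\mu-\nu)^-$, note that $(\mu-\nu)^+(\msf X) = (\mu-\nu)^-(\msf X) = \tv(\mu,\nu)$ because $\mu,\nu$ are probability measures of equal total mass, and then bound $\int f\,d(\mu-\nu)$ by $(\esssup f - \essinf f)\,\tv(\mu,\nu)$ using these equal masses; this recovers the same inequality.
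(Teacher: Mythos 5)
Your proof is correct. The paper itself offers no argument for this lemma---it is imported verbatim as Corollary D.2.5 of the cited reference---so your write-up simply supplies the standard derivation that the authors chose to omit: center $f$ so that its sup-norm drops to $\tfrac{1}{2}\sp(f)$, then invoke the variational characterization of $\tv$ from Definition \ref{def: tv metric} together with its homogeneity; your Jordan-decomposition aside is in fact the route the cited textbook takes, and the two arguments are interchangeable here. The only point worth tightening is the use of the essential supremum and infimum in choosing the centering constant: since $\mu$ and $\nu$ need not share null sets, it is cleaner to take $c=\tfrac{1}{2}\bigl(\sup_{x\in\msf X} f(x)+\inf_{x\in\msf X} f(x)\bigr)$ (genuine extrema, which exist as finite values because $\sp(f)<\infty$ forces $f$ to be bounded once it is finite somewhere), giving $\supnorm[f-c]=\tfrac{1}{2}\sp(f)$ exactly with the paper's definition of $\sp$ and no measure-zero caveat; the rest of your argument is unchanged.
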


\begin{lemma}
\label{lem: phi mixing ineq}
    Suppose $\{X_i\}$ is a stationary $\phi$-mixing process. Let $g:\msf X^\infty\to \bb R$ be an essentially bounded function and is measurable with respect to the $\sigma$-algebra $\cal X_{t+n}^\infty$. Then
    \begin{align}
        \bigg | \bb E[g(X_{t+n}^\infty) | x_{-\infty}^t] - \bb E[g(X_{t+n}^\infty) | y_{-\infty}^t]\bigg | \leq 2 \phi(n) \sp(g),\nn
    \end{align}
    where $x_{-\infty}^t, y_{-\infty}^t$ are two realizations of the trajectory up to time $t$, and $\sp(g) \leq \supnorm[g]$ when $g$ is non-negative.
\end{lemma}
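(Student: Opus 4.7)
The plan is to exploit the $\phi$-mixing property to control the total variation distance between the regular conditional distribution on $\cal X_{t+n}^\infty$ given the trajectory up to time $t$ and the marginal on $\cal X_{t+n}^\infty$, and then convert this TV bound into a bound on expectations of the bounded function $g$ via Lemma \ref{lem: tv ineq}. A triangle inequality through the unconditional expectation $\bb E[g(X_{t+n}^\infty)]$ then produces the factor of $2$.

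Concretely, Definition \ref{def: phi mixing} guarantees that for every $B \in \cal X_{-\infty}^t$ with $\tbb P(B) > 0$,
\begin{align*}
\sup_{C \in \cal X_{t+n}^\infty}\bigl|\tbb P_{t+n}^\infty(C \mid B) - \tbb P_{t+n}^\infty(C)\bigr| \leq \phi(n),
\end{align*}
which by the set-based characterization in Definition \ref{def: tv metric} is exactly $\tv\bigl(\tbb P_{t+n}^\infty(\cdot \mid B),\,\tbb P_{t+n}^\infty\bigr) \leq \phi(n)$. Since $g$ is essentially bounded and $\cal X_{t+n}^\infty$-measurable, applying Lemma \ref{lem: tv ineq} to the pair of measures $\tbb P_{t+n}^\infty(\cdot \mid B)$ and $\tbb P_{t+n}^\infty$ yields $\bigl|\bb E[g(X_{t+n}^\infty) \mid B] - \bb E[g(X_{t+n}^\infty)]\bigr| \leq \phi(n)\,\sp(g)$.

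The remaining step is to pass from conditioning on a positive-measure set $B$ to conditioning on an individual realization $x_{-\infty}^t$. Working with a regular conditional probability $\tbb P(\cdot \mid \cal X_{-\infty}^t)$, whose existence is ensured by the Polish structure of $\msf X^\infty$, the same TV bound transfers to $\tbb P_{-\infty}^t$-almost every trajectory. Combining the per-trajectory bound $|\bb E[g(X_{t+n}^\infty) \mid x_{-\infty}^t] - \bb E[g(X_{t+n}^\infty)]| \leq \phi(n)\sp(g)$ with its counterpart for $y_{-\infty}^t$ via the triangle inequality delivers the $2\phi(n)\sp(g)$ bound. The auxiliary remark $\sp(g) \leq \supnorm[g]$ for nonnegative $g$ is immediate from $\sp(g) = \sup g - \inf g \leq \sup g = \supnorm[g]$. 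The only mildly delicate point in the whole argument is the standard measure-theoretic passage from conditioning on events of positive probability to conditioning on individual trajectories, which is dispatched by regular conditional probabilities; everything else is a direct chain of $\phi$-mixing $\Rightarrow$ TV bound $\Rightarrow$ Lemma \ref{lem: tv ineq} $\Rightarrow$ triangle inequality.
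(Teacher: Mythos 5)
Your proposal is correct and follows essentially the same route as the paper's proof: a triangle inequality through the unconditional expectation $\bb E[g(X_{t+n}^\infty)]$, with each of the two resulting terms bounded by $\phi(n)\sp(g)$ via the $\phi$-mixing definition combined with Lemma \ref{lem: tv ineq}. Your explicit treatment of the passage from conditioning on positive-probability events to conditioning on individual trajectories via regular conditional probabilities is a small refinement of a step the paper leaves implicit, not a different argument.
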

\begin{proof}
    \begin{align}
        &\bigg | \bb E[g(X_{t+n}^\infty) | x_{-\infty}^t] - \bb E[g(X_{t+n}^\infty) | y_{-\infty}^t]\bigg| \nn\\
        \leq& \bigg | \bb E[g(X_{t+n}^\infty) | x_{-\infty}^t] - E[g(X_{t+n}^\infty)] \bigg|\nn\\
        +& \bigg |E[g(X_{t+n}^\infty) - \bb E[g(X_{t+n}^\infty) | y_{-\infty}^t]\bigg|\nn\\
        \leq& 2\phi(n)\sp(g),\nn
    \end{align}
    where the first inequality is due to triangular inequality and the second is due to the Definition \ref{def: phi mixing} of $\phi$-mixing and Lemma \ref{lem: tv ineq}. 
\end{proof}

\begin{lemma}[Corollary 2.2 in \cite{vidyasagar2002learning}]
\label{lem: alpha mixing ineq}
    Suppose $\{X_i\}$ is a stationary $\alpha$-mixing process. Suppose $g_0, ..., g_l$ are essentially bounded functions, where $g_i$ depends only on $X_{ik}$. Then
    \begin{align}
        \bigg | \bb E \bigg [ \prod_{i=1}^l g_i\bigg] - \prod_{i=1}^l \bb E(g_i)\bigg | \leq 4l\alpha(k)\prod_{i=1}^l\sp(g_i),\nn
    \end{align}
    where $\sp(g_i) \leq \supnorm[g_i]$ when $g_i$ is non-negative.
\end{lemma}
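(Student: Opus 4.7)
The plan is to argue by induction on $l$, with the $l=2$ base case being the classical covariance inequality for $\alpha$-mixing. For the base case, I would show that if $F$ is $\cal X_{-\infty}^{s}$-measurable and $G$ is $\cal X_{s+k}^{\infty}$-measurable, both bounded, then $|\bb E[FG]-\bb E[F]\bb E[G]|\le 4\alpha(k)\sp(F)\sp(G)$. The standard route is to first reduce to non-negative $F,G$ by shifting each by its minimum (covariance is translation-invariant), so that $\|F\|_\infty = \sp(F)$ and likewise for $G$. Then the layer-cake representation $F = \int_{0}^{\sp(F)}\ind{F>u}\,du$ (and analogously for $G$), combined with Fubini, expresses the covariance as
\begin{align*}
\bb E[FG]-\bb E[F]\bb E[G] = \int_{0}^{\sp(F)}\!\!\int_{0}^{\sp(G)}\!\big(\bb P(A_u\cap B_v)-\bb P(A_u)\bb P(B_v)\big)\,dv\,du,
\end{align*}
with $A_u=\{F>u\}\in\cal X_{-\infty}^{s}$ and $B_v=\{G>v\}\in\cal X_{s+k}^{\infty}$. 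Each integrand is at most $\alpha(k)$ in absolute value by Definition \ref{def: alpha mixing}, and integrating gives the base inequality (with a constant even sharper than $4$).

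For the inductive step, assume the bound holds with $l-1$ factors. Setting $G_{l-1}\coloneqq\prod_{i=1}^{l-1}g_i$ (a function of $X_k,\ldots,X_{(l-1)k}$, hence $\cal X_{-\infty}^{(l-1)k}$-measurable) and noting that $g_l$ is $\sigma(X_{lk})$-measurable, decompose
\begin{align*}
\bb E\!\left[\prod_{i=1}^{l}g_i\right]-\prod_{i=1}^{l}\bb E[g_i]
&= \bb E[G_{l-1}g_l]-\bb E[G_{l-1}]\bb E[g_l] \\
&\quad+ \bb E[g_l]\!\left(\bb E[G_{l-1}]-\prod_{i=1}^{l-1}\bb E[g_i]\right).
\end{align*}
The first summand is a two-function covariance across a gap of exactly $k$, so the base case bounds it by $4\alpha(k)\sp(G_{l-1})\sp(g_l)$. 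The second summand is controlled by $|\bb E[g_l]|$ times the inductive bound $4(l-1)\alpha(k)\prod_{i=1}^{l-1}\sp(g_i)$. Summing and applying the triangle inequality assembles the claimed $4l\alpha(k)\prod_{i=1}^{l}\sp(g_i)$ bound, provided one can control $\sp(G_{l-1})$ by $\prod_{i=1}^{l-1}\sp(g_i)$ (and $|\bb E[g_l]|$ by $\sp(g_l)$ after recentering).

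The main obstacle is exactly this span-of-product estimate, since $\sp$ is not sub-multiplicative on arbitrary bounded functions: adding a constant to an individual $g_i$ changes the product while leaving each span fixed, so the translation trick that upgrades sup-norms to spans in the two-function covariance does not lift directly to products of three or more factors. My plan is to reduce at the outset to the non-negative case by writing $g_i = m_i+h_i$ with $m_i=\min g_i$ and $h_i=g_i-m_i\ge 0$, so that $\sp(h_i)=\|h_i\|_\infty=\sp(g_i)$, and then expand $\prod g_i$ multilinearly over subsets $S\subseteq\{1,\ldots,l\}$. Each subset-term reduces to a product of non-negative factors on which $\sp$ is sub-multiplicative ($\sp(\prod_{i\in S}h_i)\le\prod_{i\in S}\|h_i\|_\infty=\prod_{i\in S}\sp(g_i)$), and the non-trivial terms (those with $|S|\ge 2$) are controlled by the non-negative version of the induction. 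The combinatorial overhead of the subset expansion contributes an $O(l)$ factor that the coefficient $4l$ in the lemma can absorb, and the lemma's remark $\sp(g_i)\le\|g_i\|_\infty$ for non-negative $g_i$ makes precise the advantage of the span formulation in this reduction.
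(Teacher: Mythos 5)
Your overall architecture --- the two-function covariance inequality via the layer-cake/indicator decomposition as the base case, followed by the telescoping induction
$\bb E[\prod_{i\le l} g_i]-\prod_{i\le l}\bb E[g_i]=(\bb E[G_{l-1}g_l]-\bb E[G_{l-1}]\bb E[g_l])+\bb E[g_l](\bb E[G_{l-1}]-\prod_{i<l}\bb E[g_i])$ --- is the standard and correct route to this inequality, and it does deliver the bound $4l\alpha(k)\prod_i\|g_i\|_\infty$ in sup-norm form. (The paper does not prove this lemma; it is quoted from Vidyasagar, and when the paper actually invokes it, in the proof of Proposition \ref{thm: hoeffding for alpha mixing}, it bounds the product of the non-negative factors $\exp(rG_j/|I_i|)$ by their sup-norms, not their spans.)

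However, the span-refined version you are trying to establish has a genuine gap that your subset-expansion device does not close. Writing $g_i=m_i+h_i$ with $m_i=\min g_i$ and expanding multilinearly gives
\begin{align*}
\bb E\Big[\prod_{i=1}^l g_i\Big]-\prod_{i=1}^l\bb E[g_i]=\sum_{|S|\ge 2}\Big(\bb E\Big[\prod_{i\in S}h_i\Big]-\prod_{i\in S}\bb E[h_i]\Big)\prod_{i\notin S}m_i,
\end{align*}
so each surviving term carries the factor $\prod_{i\notin S}|m_i|$; the overhead of the expansion is not a combinatorial $O(l)$ factor but a product of constants $|\min g_i|$ that bear no relation to $\sp(g_i)$. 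No choice of constant can absorb this, because the statement with spans is in fact false for $l\ge 3$: take $g_1\equiv c$ a nonzero constant (so $\sp(g_1)=0$ and the right-hand side vanishes) and $g_2,g_3$ with nonzero covariance, so the left-hand side equals $|c|\,|\mathrm{Cov}(g_2,g_3)|>0$. The translation invariance that converts sup-norms to spans is a feature of the bilinear covariance only. The repair is to state and prove the lemma with $\prod_i\|g_i\|_\infty$ (which is what Vidyasagar's Corollary 2.2 asserts and what the paper uses downstream); your induction then goes through verbatim, since the first summand is at most $4\alpha(k)\|G_{l-1}\|_\infty\|g_l\|_\infty\le 4\alpha(k)\prod_i\|g_i\|_\infty$ and $|\bb E[g_l]|\le\|g_l\|_\infty$ handles the second.
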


\begin{lemma}[Theorem 2.1 in \cite{vidyasagar2002learning}]
\label{lem: beta mixing ineq}
    Suppose $\{X_i\}$ is a stationary $\beta$-mixing process. Suppose $g_0, ..., g_l$ are essentially bounded functions, where $g_i$ depends only on $X_{ik}$. Then
    \begin{align}
        \bigg |\bb E \bigg [ \prod_{i=1}^l g_i\bigg] - \prod_{i=1}^l \bb E(g_i)\bigg| \leq l\beta(k)\prod_{i=1}^l\sp(g_i),\nn
    \end{align}
    where $\sp(g_i) \leq \supnorm[g_i]$ when $g_i$ is non-negative.
\end{lemma}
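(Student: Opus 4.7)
The plan is to prove the bound by induction on $l$, using a single-step decoupling identity rooted in the total-variation form of the $\beta$-mixing coefficient. First I would establish an auxiliary decoupling inequality: for any bounded $f$ measurable with respect to $\cal X_{-\infty}^{s}$ and bounded $g$ measurable with respect to $\cal X_{s+k}^{\infty}$,
\begin{align*}
    \bigl| \bb E[fg] - \bb E[f]\bb E[g] \bigr| \leq \beta(k)\,\|f\|_\infty\,\sp(g).
\end{align*}
To show this, I would write $\bb E[fg] - \bb E[f]\bb E[g] = \bb E\bigl[f \cdot (\bb E[g \mid \cal X_{-\infty}^s] - \bb E[g])\bigr]$, apply Lemma \ref{lem: tv ineq} pointwise so that $|\bb E[g \mid \cal X_{-\infty}^s](\omega) - \bb E[g]| \leq \sp(g)\cdot \tv\bigl(\tbb P_{s+k}^\infty(\cdot \mid \cal X_{-\infty}^s)(\omega),\tbb P_{s+k}^\infty\bigr)$, and finally take expectation and use the second (conditional-TV) formulation in Definition \ref{def: beta mixing} to bound $\bb E[\tv(\cdot,\cdot)] \leq \beta(k)$.

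With this decoupling step in hand, the induction is straightforward. The base case $l=1$ is trivial. For the inductive step, I would split via the telescoping identity
\begin{align*}
    \bb E\!\left[\prod_{i=1}^l g_i\right] - \prod_{i=1}^l \bb E[g_i] = \left(\bb E\!\left[g_l \prod_{i=1}^{l-1} g_i\right] - \bb E[g_l]\,\bb E\!\left[\prod_{i=1}^{l-1} g_i\right]\right) + \bb E[g_l]\left(\bb E\!\left[\prod_{i=1}^{l-1} g_i\right] - \prod_{i=1}^{l-1} \bb E[g_i]\right).
\end{align*}
The first (covariance) term is bounded using the decoupling inequality with $f = \prod_{i<l} g_i$ (measurable w.r.t.\ $\cal X_{-\infty}^{(l-1)k}$) and $g_l$ (measurable w.r.t.\ $\cal X_{lk}^{\infty}$, separated by the $k$-gap), giving $\beta(k)\,\bigl\|\prod_{i<l} g_i\bigr\|_\infty\,\sp(g_l)$. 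The second term inherits the bound $(l-1)\beta(k)\prod_{i=1}^{l-1}\sp(g_i)$ from the inductive hypothesis, scaled by $|\bb E[g_l]|$. Adding the two contributions, bounding $|\bb E[g_l]|$ and $\|\prod_{i<l} g_i\|_\infty$ by the respective sup norms (which collapse to $\sp$ under the non-negativity convention flagged in the statement), and combining gives the total factor $l\,\beta(k)\prod_{i=1}^l \sp(g_i)$.

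The main obstacle, and the reason for the non-negativity footnote, is reconciling $\|\cdot\|_\infty$ with $\sp(\cdot)$ for the product $\prod_{i<l} g_i$: the decoupling inequality naturally produces a $\|f\|_\infty$ factor, which only collapses to $\prod_{i<l} \sp(g_i)$ cleanly when each $g_i \geq 0$ (so $\sp(g_i)=\|g_i\|_\infty$ and $\|\prod g_i\|_\infty \leq \prod\|g_i\|_\infty$). For signed $g_i$, one either restates the bound with $\|g_i\|_\infty$ in place of $\sp(g_i)$ or reduces to the non-negative case by an affine shift, noting that the covariance $\bb E[fg]-\bb E[f]\bb E[g]$ is invariant under adding a constant to either argument.
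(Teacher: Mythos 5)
The paper offers no proof of this lemma at all---it is imported by citation from \cite{vidyasagar2002learning}---so there is no in-paper argument to compare against; judged on its own terms, your route is the standard one, and most of it is sound. The decoupling inequality $|\bb E[fg]-\bb E[f]\bb E[g]|\leq\beta(k)\supnorm[f]\sp(g)$ via the identity $\bb E[fg]-\bb E[f]\bb E[g]=\bb E[f(\bb E[g\mid\cal X_{-\infty}^s]-\bb E[g])]$, the pointwise application of Lemma \ref{lem: tv ineq}, and the conditional-TV formulation in Definition \ref{def: beta mixing} are all correct, as is the telescoping induction. What this argument actually delivers, for non-negative $g_i$, is the sup-norm version $|\bb E\prod_i g_i-\prod_i\bb E g_i|\leq l\,\beta(k)\prod_i\supnorm[g_i]$---which, notably, is exactly the form the paper invokes when bounding $\Phi_2$ in the proofs of Propositions \ref{thm: hoeffding for beta mixing} and \ref{thm: hoeffding for alpha mixing}.

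The gap is in your final collapse, and it cannot be patched. You assert that non-negativity gives $\sp(g_i)=\supnorm[g_i]$; in fact non-negativity only gives $\sp(g_i)\leq\supnorm[g_i]$ (equality requires $\inf g_i=0$), and this is the wrong direction: to convert your sup-norm bound into the span bound as printed, you would need $\supnorm[g_i]\leq\sp(g_i)$, which fails already for $g_i\equiv 1$. Your fallback---an affine shift exploiting shift-invariance of the covariance---also fails for $l\geq 3$, because the multi-factor defect is not shift-invariant: replacing $g_l$ by $g_l+c$ changes $\bb E\prod_{i\leq l}g_i-\prod_{i\leq l}\bb E g_i$ by $c\,\bigl(\bb E\prod_{i<l}g_i-\prod_{i<l}\bb E g_i\bigr)$, which is nonzero in general. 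Indeed the span form of the statement is simply false for three or more factors: take $g_i=1+\epsilon h_i$ with $h_i=h(X_{ik})\in[0,1]$ on a stationary chain with positive lag-$k$ correlations; then the left side equals $\epsilon^2\sum_{i<j}\operatorname{Cov}(h_i,h_j)+O(\epsilon^3)=\Theta(\epsilon^2)$ while the right side is $l\,\beta(k)\,\epsilon^l=O(\epsilon^3)$ as $\epsilon\to 0$. So the defect sits in the lemma's transcription---Vidyasagar's theorem carries $\supnorm[g_i]$, not $\sp(g_i)$, on the right-hand side (spans suffice only in the two-factor covariance bound). Your proof establishes the correct, and actually-used, statement; no proof can establish the statement as printed, and your write-up should say so rather than paper over the $\sp$ versus $\supnorm$ mismatch.
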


\begin{lemma}[Lemma 8.1 in \cite{devroye2013probabilistic}]
\label{lem: hoeffding lemma}
    Let $X$ be a random variable such that $a\leq X\leq b$ almost surely. Then, for $r > 0$,
    \begin{align}
        \bb E[\exp(r(X - \bb E X))] \leq \exp [r^2(b-a)^2/8].\nn
    \end{align}
\end{lemma}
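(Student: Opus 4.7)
The plan is to prove the classical Hoeffding moment generating function bound by first reducing to a centered random variable, then using convexity of $x\mapsto e^{rx}$ to dominate the MGF by a simple two-point average, and finally controlling the logarithm of that average with a second-order Taylor expansion.

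First I would set $Y := X - \bb E X$, which satisfies $\bb E Y = 0$ and takes values in $[a', b']$ where $a' := a - \bb E X$ and $b' := b - \bb E X$, so $b' - a' = b - a$. The task reduces to showing $\bb E[e^{rY}] \leq \exp(r^2(b-a)^2/8)$ whenever $Y$ is centered and bounded in $[a',b']$ (with $a' \leq 0 \leq b'$). Next I would use the convexity of the exponential: for $y \in [a',b']$, writing $y = \lambda a' + (1-\lambda) b'$ with $\lambda = (b'-y)/(b'-a')$ gives $e^{ry} \leq \lambda e^{ra'} + (1-\lambda) e^{rb'}$. Taking expectations and using $\bb E Y = 0$ yields
\begin{align*}
\bb E[e^{rY}] \leq \frac{b'}{b'-a'} e^{ra'} + \frac{-a'}{b'-a'} e^{rb'}.
\end{align*}

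Then I would parametrize the right-hand side conveniently. Set $p := -a'/(b'-a') \in [0,1]$, $u := r(b'-a') = r(b-a)$, and let $\psi(u) := -pu + \log(1 - p + pe^u)$. A direct computation shows the right-hand side above equals $e^{\psi(u)}$. It remains to show $\psi(u) \leq u^2/8$. I would verify $\psi(0) = 0$ and $\psi'(0) = 0$ directly, then compute
\begin{align*}
\psi''(u) = \frac{p(1-p) e^u}{(1-p + pe^u)^2} = q(u)\bigl(1-q(u)\bigr) \leq \tfrac{1}{4},
\end{align*}
where $q(u) := pe^u/(1 - p + pe^u) \in [0,1]$ and the final bound is AM--GM applied to $q(1-q)$. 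Taylor's theorem with remainder then gives $\psi(u) = \tfrac{1}{2}\psi''(\xi) u^2 \leq u^2/8$ for some $\xi$ between $0$ and $u$. Substituting back yields $\bb E[e^{rY}] \leq e^{r^2(b-a)^2/8}$, which is the claim.

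The only mildly delicate step is the second-derivative bound, but this is a one-line AM--GM on a variable confined to $[0,1]$, so there is no serious obstacle. The argument is valid for all real $r$ (the sign of $r$ plays no role in the bound), which is slightly stronger than the $r > 0$ version stated in the lemma.
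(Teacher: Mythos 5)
Your proof is correct, and it is essentially the canonical argument for Hoeffding's Lemma: centering, convexity of $x\mapsto e^{rx}$ to reduce to a two-point distribution, and a Taylor bound on $\psi(u)=-pu+\log(1-p+pe^u)$ via $\psi''=q(1-q)\leq 1/4$. The paper itself offers no proof --- it simply cites Lemma 8.1 of \cite{devroye2013probabilistic} --- and your argument coincides with the standard proof given in such references, so there is nothing substantive to compare; the only cosmetic remark is that the degenerate case $b'=a'$ (i.e., $X$ constant) should be dispatched trivially before dividing by $b'-a'$.
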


\subsection{Proof of Proposition \ref{thm: hoeffding for phi mixing}}
\label{sec: proof of hoeff phi}

We show a generalized version of Proposition \ref{thm: hoeffding for phi mixing} with time-dependent functions in Proposition \ref{thm: hoeffding for phi mixing gen}, {that is, the partial sum $S_n$ of interest is replaced by $\tilde S_n = \sum_{i=0}^{n-1} f_i(X_i)$ where $f_i$ are potentially different.} The key technique employed here is the martingale decomposition of the partial sum process generated by any stochastic process. In Lemma \ref{lem: martingale decomp of mixing}, we demonstrate the martingale decomposition. In Lemma \ref{lem: span of g_i}, we establish that the martingale difference is bounded under the $\phi$-mixing condition in Definition \ref{def: fast phi mixing}. Finally, we give the proof of Proposition \ref{thm: hoeffding for phi mixing gen} using the two supporting lemmas. 

\begin{prop}
\label{thm: hoeffding for phi mixing gen}
    Let $\bd X$ be a stationary $\phi$-mixing process with coefficient satisfying Definition \ref{def: fast phi mixing}. Assume that $f_i:\msf X\to \bb R$ has bounded span for $i=0, \cdots, n-1$ and let $\tilde S_n = \sum_{i=0}^{n-1}f_i(X_i)$. Then for $\epsilon \geq 0$, it holds
    \begin{align}
        \bb P\bigg[\tilde S_n - \sum_{i=0}^{n-1}\bb{E}[f_i(X_i)] > n\epsilon\bigg] \leq \exp\bigg(-\frac{2n^2\epsilon^2}{\sum_{i=-1}^{n-1} A_i^2}\bigg),
    \end{align}
    where $\Phi$ is defined in Definition \ref{def: fast phi mixing} and $\{A_0, \cdots, A_{n-1}\}$ is defined in Equation \ref{eqn: dummy variable}.
\end{prop}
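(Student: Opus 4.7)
The plan is to use the martingale decomposition of the centered partial sum, combined with a conditional application of Hoeffding's Lemma (Lemma \ref{lem: hoeffding lemma}), in the spirit of the Azuma--Hoeffding inequality. The $\phi$-mixing property will enter only through the almost-sure bound on the span of each martingale difference. Because the proposition allows time-varying $f_i$, no symmetry between indices can be exploited, so the argument must track each coordinate separately.

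First I would set $\cal F_i = \sigma(X_0, \ldots, X_i)$ for $i \geq 0$ and take $\cal F_{-1}$ to be the trivial $\sigma$-algebra, and define
\begin{align*}
g_i \;=\; \bb E[\tilde S_n \mid \cal F_i] - \bb E[\tilde S_n \mid \cal F_{i-1}], \qquad i = 0, 1, \ldots, n-1.
\end{align*}
By the tower property, $\{g_i\}$ is a martingale difference sequence with respect to $\{\cal F_i\}$, and the telescoping identity gives $\tilde S_n - \bb E[\tilde S_n] = \sum_{i=0}^{n-1} g_i$. (The boundary term $A_{-1}$ in the statement should correspond to the conditional MGF of the initial step, obtained from the $\cal F_{-1}$ to $\cal F_0$ transition, and will be controlled in the same way as the other $A_i$.)

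Next I would bound $\sp(g_i)$ pathwise using the $\phi$-mixing inequality. Expanding $\tilde S_n$ as $\sum_j f_j(X_j)$: terms with $j < i$ cancel because $\bb E[f_j(X_j) \mid \cal F_i] = \bb E[f_j(X_j) \mid \cal F_{i-1}] = f_j(X_j)$; the $j = i$ contribution is $f_i(X_i) - \bb E[f_i(X_i) \mid \cal F_{i-1}]$, whose span is at most $\sp(f_i)$; and for $j > i$, Lemma \ref{lem: phi mixing ineq} shows that the conditional expectation of $f_j(X_j)$ given any two histories can differ by at most $2\phi(j-i)\sp(f_j)$. Summing these contributions yields an almost-sure bound of the form
\begin{align*}
\sp(g_i) \;\leq\; \sp(f_i) \,+\, 2\sum_{j>i} \phi(j-i)\sp(f_j) \;=:\; A_i,
\end{align*}
which under the fast $\phi$-mixing hypothesis in Definition \ref{def: fast phi mixing} is finite and, in the identically-distributed case of Proposition \ref{thm: hoeffding for phi mixing}, reduces to $A_i \leq (2\Phi + 1)\sp(f)$, giving the right constant there.

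Finally I would apply Lemma \ref{lem: hoeffding lemma} conditionally to each martingale difference: for any $\lambda > 0$,
\begin{align*}
\bb E\!\left[\exp(\lambda g_i) \mid \cal F_{i-1}\right] \;\leq\; \exp\!\left(\lambda^2 A_i^2 / 8\right).
\end{align*}
Iterated conditioning then gives $\bb E[\exp(\lambda(\tilde S_n - \bb E\tilde S_n))] \leq \exp\!\bigl(\lambda^2 \sum_i A_i^2 / 8\bigr)$, and a standard Chernoff-and-optimize argument over $\lambda = 4 n\epsilon / \sum_i A_i^2$ yields the claimed subGaussian tail $\exp\!\bigl(-2n^2\epsilon^2 / \sum_i A_i^2\bigr)$. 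I expect the main obstacle to be Step 2: one must argue the span bound almost surely by comparing two realizations of $g_i$ that share history up to time $i-1$, and apply Lemma \ref{lem: phi mixing ineq} uniformly in the trajectory to every $j > i$ simultaneously. The rest of the argument is routine martingale concentration.
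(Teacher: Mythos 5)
Your proposal is correct and follows essentially the same route as the paper: a Doob martingale decomposition of the centered partial sum, an almost-sure span bound on each martingale difference obtained by comparing two histories via Lemma \ref{lem: phi mixing ineq}, and a conditional Hoeffding's-lemma/Chernoff argument. The only cosmetic difference is that you take $\cal F_{-1}$ to be trivial, which absorbs the paper's boundary term $D_{-1}$ (measurable with respect to the infinite past $\cal X_{-\infty}^{-1}$) into your $g_0$; since your constants $A_i = \sp(f_i) + 2\sum_{j>i}\phi(j-i)\sp(f_j)$ are no larger than those in Equation \ref{eqn: dummy variable} and you have one fewer summand in the denominator, your bound implies the stated one.
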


First, we give the martingale decomposition of the partial sum process generated by any stochastic process.

\begin{lemma}
\label{lem: martingale decomp of mixing}
    Let $\bd X = \{X_i\}$ be a stationary stochastic process on the probability space $(\msf X^\infty, \cal X^\infty, \tbb P)$. For $i\in\{0, \cdots, n-1\}$ and $n\in\bb Z_+$, let $f_i: \bb X\to \bb R$ be a essentially bounded function. Let $\tilde S_n \coloneqq \sum_{i=1}^{n-1} f_i(X_i)$ be the partial sum. Let $\{\cal X_{-\infty}^i\}_{i=0}^{n-1}$ be the filtration generated by $\mbf X$, i.e., $\cal X_{-\infty}^i = \sigma(X_{\infty}, \cdots, X_0, \cdots, X_i)$. Then, there exists a martingale difference sequences $\{D_i\}_{i=1}^{n-1}$ adapted to $\{\cal X_{-\infty}^i\}_{i=0}^{n-1}$ such that 
    \begin{align}
    \label{eqn: martingale decomp of mixing}
        &\tilde S_n - \sum_{i=0}^{n-1}\bb{E}[f_i(X_i)]\nn\\
        =& \sum_{i=0}^{n-1} D_i + \sum_{i=0}^{n-1} \bb E[f_i(X_i)| \cal X_{-\infty}^{-1}] - \sum_{i=0}^{n-1}\bb{E}[f_i(X_i)],
    \end{align}
    where the expectations are taken w.r.t. the stationary distribution.
\end{lemma}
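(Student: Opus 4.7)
The plan is to construct a Doob-style martingale from the partial sum and read the claimed decomposition off its telescoping difference sequence. Concretely, I would set
\begin{align*}
    M_i \coloneqq \bb E\bigg[\tilde S_n \,\bigg|\, \cal X_{-\infty}^i\bigg], \qquad i \in \{-1, 0, \ldots, n-1\},
\end{align*}
which is a martingale adapted to $\{\cal X_{-\infty}^i\}$ by the tower property; integrability is immediate because each $f_i$ is essentially bounded. Then I would define the differences $D_i \coloneqq M_i - M_{i-1}$ for $i \in \{0, \ldots, n-1\}$, which form a martingale difference sequence by construction.

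The next step is a direct computation of the two boundary values. At the right endpoint, $M_{n-1} = \tilde S_n$ because every $f_j(X_j)$ with $j \leq n-1$ is $\cal X_{-\infty}^{n-1}$-measurable, so the conditional expectation collapses. At the left endpoint, linearity of conditional expectation gives
\begin{align*}
    M_{-1} = \sum_{j=0}^{n-1} \bb E\bigg[f_j(X_j) \,\bigg|\, \cal X_{-\infty}^{-1}\bigg].
\end{align*}
Telescoping the differences then yields
\begin{align*}
    \sum_{i=0}^{n-1} D_i = M_{n-1} - M_{-1} = \tilde S_n - \sum_{j=0}^{n-1} \bb E\bigg[f_j(X_j) \,\bigg|\, \cal X_{-\infty}^{-1}\bigg].
\end{align*}
Subtracting $\sum_{i=0}^{n-1} \bb E[f_i(X_i)]$ from both sides and rearranging produces Equation \ref{eqn: martingale decomp of mixing} exactly.

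There is no real obstacle in this lemma; it is a classical Doob martingale argument and the construction is essentially forced. The only points requiring care are measurability bookkeeping (making sure $f_j(X_j)$ with $j \geq 0$ is genuinely $\cal X_{-\infty}^j$-measurable while generally failing to be $\cal X_{-\infty}^{-1}$-measurable, which is what makes the residual term in \ref{eqn: martingale decomp of mixing} nontrivial) and the fact that stationarity plays no role yet. The $\phi$-mixing structure of $\bd X$ enters only in the companion Lemma \ref{lem: span of g_i}, where the span of each $D_i$ must be bounded using the $\phi$-mixing coefficient so that Hoeffding's lemma and Azuma-type arguments can be applied to $\sum_{i=0}^{n-1} D_i$ in the proof of Proposition \ref{thm: hoeffding for phi mixing gen}.
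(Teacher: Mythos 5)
Your proposal is correct and is essentially the paper's own argument: the functions $g_i$ the paper constructs satisfy $g_i(X_{-\infty}^i) = \bb E[\tilde S_n \mid \cal X_{-\infty}^i]$, i.e., they are exactly your Doob martingale $M_i$ written out pointwise (a form the paper needs later to bound the span of each $D_i$ in Lemma \ref{lem: span of g_i}). Your telescoping and identification of the boundary terms $M_{n-1} = \tilde S_n$ and $M_{-1} = \sum_{j} \bb E[f_j(X_j) \mid \cal X_{-\infty}^{-1}]$ match the paper's decomposition, and your indexing of the differences is in fact cleaner than the paper's.
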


\begin{proof}
    We employ the martingale decomposition technique introduced in Chapter 23 of \cite{douc2018markov}. For the following development, we use these short notations for tuples: $X_{m}^{n} \coloneqq (X_m, \cdots, X_{n})$ and $x_{m}^{n} \coloneqq (x_m, \cdots, x_{n})$, for $0\leq m \leq n$. Without loss of generality, we assume $\bb{E}[f_i(X_i)] = 0$ for $i=0, \cdots, n-1$ to simplify the notation. For $i\in\{-1,0, \cdots, n-1\}$, we define
    \begin{align}
    \label{eqn: definition of g_i}
        g_i (x_{-\infty}^i) \coloneqq \sum_{l=0}^{i} f_l(x_l) + \sum_{l=i+1}^{n-1} \bb{E}[f_l(X_l)| x_{-\infty}^i],
    \end{align}
    where $g_{n-1}(x_{-\infty}^{n-1}) = \sum_{l=0}^{n-1}f_l(x_l)$ and $g_{-1}(x_{-\infty}^{-1}) = \sum_{l=0}^{n-1} \bb{E}[f_l(X_l)| x_{-\infty}^{-1}]$. With this definition, we observe that
    \begin{align}
        g_{n-1}(x_{-\infty}^{n-1}) = \sum_{i=1}^{n-1}\bigg[g_i(x_{-\infty}^i) - g_{i-1}(x_{-\infty}^{i-1})\bigg] + g_0(x_0),\nn
    \end{align}
    and for $i\in\{0, \cdots, n-1\}$ and $x_{-\infty}^{i-1}\in\msf{X}^{i}_{-\infty}$,
    \begin{align}
        g_{i-1}(x_{-\infty}^{i-1}) = \int_{\bb{X}^i}g_i(x_{-\infty}^{i-1}, x_i) d\tbb P_{i}(\cdot| x_{-\infty}^i).\nn
    \end{align}
    Recall that $\cal X_{-\infty}^i = \sigma(X_{-\infty}, \cdots, X_{0}, \cdots, X_{i})$, the above equation shows that $g_{i-1}(X_{-\infty}^{i-1}) = \bb{E}[g_i(X_{-\infty}^i) | \cal X_{-\infty}^{i-1}]$ $\tbb{P}$-a.s. for $i\geq 0$. Thus, $\{g_i(X_{-\infty}^i)\}_{i=0}^{n-1}$ is a $\tbb P$-martingale adapted to filtration $\{\cal X_{-\infty}^i\}_{i=-1}^{n-1}$. It follows the martingale decomposition of $\Tilde{S}_n$ centered at $\sum_{i=0}^{n-1}\bb{E}[f_i(X_i)]$,
    \begin{align}
        &\Tilde{S}_n - \sum_{i=0}^{n-1}\bb{E}[f_i(X_i)] = g_{n-1}(X_{-\infty}^{n-1}) - \sum_{i=0}^{n-1}\bb{E}[f_i(X_i)] \nn\\
        =& \sum_{i=0}^{n-1} D_i + g_{-1}(X_{-\infty}^{-1}) - \sum_{i=0}^{n-1}\bb{E}[f_i(X_i)],\nn
    \end{align}
    where $\{D_i\}_{i=0}^{n-1} = \{g_i(X_{-\infty}^i) - g_{i-1}(X_{-\infty}^{i-1})\}_{i=1}^{n-1}$ is a martingale difference sequence. We arrive at the Equation \ref{eqn: martingale decomp of mixing}.
\end{proof}

Next, we show $D_i$ is bounded by showing $x \mapsto g_i(x_{-\infty}^{i-1}, x)$ has bounded span for any $x_{-\infty}^{i-1}\in \msf X_{-\infty}^{i-1}$ and $i=0, \cdots, n-1$.

\begin{lemma}
\label{lem: span of g_i}
    Suppose $\bd X$ is a stationary $\phi$-mixing process. For each $i\in\{-1, \cdots, n-1\}$, if we define 
    \begin{align}
    \label{eqn: dummy variable}
        A_i \coloneqq 2 \Phi \max\{\sp(f_{l}): i+1 \leq l \leq n-1\} + \sp(f_i),
    \end{align}
    where $A_{-1} \coloneqq 2 \Phi \max\{\sp(f_{l}): 0 \leq l \leq n-1\}$. Then it holds that for $x_{-\infty}^i \in \msf X^{i}_{-\infty}$ and $i=-1, \cdots, n-1$
    \begin{align}
        \label{eqn: g span bound}
        \inf_{x\in\msf X} g_i(x_{-\infty}^{i-1}, x) &\leq g_i(x_{-\infty}^i) \leq \inf_{x\in\msf X} g_i(x_0^{i-1}, x) + A_i,
    \end{align}
    
\end{lemma}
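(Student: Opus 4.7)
The plan is to prove the lower bound trivially and focus effort on the upper bound in \eqref{eqn: g span bound}, which is equivalent to showing that, with the past $x_{-\infty}^{i-1}$ held fixed, the function $x \mapsto g_i(x_{-\infty}^{i-1}, x)$ has span at most $A_i$. Once that span bound is in hand, the right-hand inequality follows immediately from $g_i(x_{-\infty}^i) \leq \inf_{x\in\msf X} g_i(x_{-\infty}^{i-1}, x) + A_i$.

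First I would fix $0 \leq i \leq n-1$ and an arbitrary past $x_{-\infty}^{i-1}$, and expand from the definition \eqref{eqn: definition of g_i}:
\[
g_i(x_{-\infty}^{i-1}, y) - g_i(x_{-\infty}^{i-1}, x) = \bigl(f_i(y) - f_i(x)\bigr) + \sum_{l=i+1}^{n-1} \Delta_l(x,y),
\]
where $\Delta_l(x,y) := \mathbb{E}[f_l(X_l)\mid x_{-\infty}^{i-1}, y] - \mathbb{E}[f_l(X_l)\mid x_{-\infty}^{i-1}, x]$. The first term is bounded by $\sp(f_i)$ by definition of span, so the substantive step is controlling $|\Delta_l(x,y)|$. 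Since $f_l(X_l)$ is measurable with respect to $\cal X_{l}^{\infty} = \cal X_{i+(l-i)}^{\infty}$, Lemma \ref{lem: phi mixing ineq} applied with $t=i$ and gap $l-i$ yields $|\Delta_l(x,y)| \leq 2\phi(l-i)\sp(f_l)$.

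Summing these contributions and factoring out $\max_{i+1\leq l\leq n-1}\sp(f_l)$, the summability of $\phi$ asserted in Definition \ref{def: fast phi mixing} bounds the tail by $2\Phi \max_{l>i}\sp(f_l)$, since $\sum_{l=i+1}^{n-1} \phi(l-i) \leq \sum_{k=1}^{\infty}\phi(k) \leq \Phi$. Adding the $\sp(f_i)$ contribution recovers $A_i$. The boundary case $i=-1$ follows from the same argument with the $f_i(y)-f_i(x)$ term absent, producing $A_{-1} = 2\Phi\max_{0\leq l\leq n-1}\sp(f_l)$.

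I do not anticipate a serious technical obstacle: the work is mostly careful bookkeeping in the $\phi$-mixing inequality and verifying that the measurability hypothesis of Lemma \ref{lem: phi mixing ineq} holds uniformly for all tail indices $l > i$. The essential leverage being exploited throughout is the summability $\Phi < \infty$, which provides uniform (in $n$) control of the collective influence of the revealed coordinate $X_i = x$ on all downstream conditional expectations $\mathbb{E}[f_l(X_l)\mid \cdot]$ for $l > i$; without this summability, the bound on the span would depend on $n$ and the subsequent Hoeffding-type argument in Proposition \ref{thm: hoeffding for phi mixing gen} would degenerate.
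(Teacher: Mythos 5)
Your proposal is correct and follows essentially the same route as the paper: isolate the $\sp(f_i)$ contribution from perturbing the $i$-th coordinate, then control the resulting shift in each downstream conditional expectation $\bb E[f_l(X_l)\mid\cdot]$ by $2\phi(l-i)\sp(f_l)$ and sum using $\Phi<\infty$. The only cosmetic difference is that you invoke Lemma \ref{lem: phi mixing ineq} directly where the paper unwinds it through Lemma \ref{lem: tv ineq} and a triangle inequality via the stationary distribution, which is the same estimate.
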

\begin{proof}
     It suffices to show Equation \ref{eqn: g span bound} holds. The first inequality is obvious. To show the second inequality, we pick arbitrary $x^*\in\bb X$ and we have
    \begin{align*}
        &\quad g_i(x_{-\infty}^i) = \sum_{l=0}^{i} f_l(x_l) + \sum_{l=i+1}^{n-1} \bb{E}[f_l(X_l)| x_{-\infty}^i]\\
        &\leq \sum_{l=0}^{i-1}f_l(x_l) +f_i(x^*) + \sum_{l=i+1}^{n-1} \bb{E}[f_l(X_l)| x_{-\infty}^i] \\
        &- \sum_{l=i+1}^{n-1} \bb{E}[f_l(X_l)| x_{-\infty}^{i-1}, x^*] + \sum_{l=i+1}^{n-1} \bb{E}[f_l(X_l)| x_{-\infty}^{i-1}, x^*] \\
        &+ \sp(f_i).
    \end{align*}
    Due to Lemma \ref{lem: tv ineq} (first inequality) and triangular inequality (second inequality), we can see that
    \begin{align*}
         &\quad \sum_{l=i+1}^{n-1} \bb{E}[f_l(X_l)| x_{-\infty}^i] -  \sum_{l=i+1}^{n-1} \bb{E}[f_l(X_l)| x_{-\infty}^{i-1}, x^*] \\
         &\leq \sum_{l=i+1}^{n-1} \sp(f_l) \tv(\tbb P_{l}(\cdot| x_{-\infty}^i), \tbb P_{l}(\cdot| x_{-\infty}^{i-1}, x^*))\\
         &\leq \sum_{l=i+1}^{n-1} \sp(f_l) \bigg[\tv(\tbb P_{l}(\cdot| x_{-\infty}^i), \mu) + \tv(\tbb P_{l}(\cdot| x_{-\infty}^{i-1}, x^*), \mu) \bigg]\\
         &\leq 2 \Phi \max\{\sp(f_{l}): i+1 \leq l \leq n-1\},
    \end{align*}
    where the last inequality follows from Lemma \ref{lem: phi mixing ineq} and Definition \ref{def: fast phi mixing}. Thus, we have
    \begin{align*}
        &g_i(x_0^i)\leq \sum_{l=0}^{i-1}f_l(x_l) +f_i(x^*) + \sum_{l=i+1}^{n-1} \bb{E}[f_l(X_l)| x_{-\infty}^{i-1}, x^*] \\
        +&2 \Phi \max\{\sp(f_{l}): i+1 \leq l \leq n-1\} + \sp(f_i)\\
        \leq &g_i(x_0^{i-1}, x^*) + 2\Phi \max\{\sp(f_{l}): i+1 \leq l \leq n-1\}\\
        + &\sp(f_i)\\
        = &g_i(x_0^{i-1}, x^*) + A_i.
    \end{align*}
    Since $x^*$ is arbitrary, we obtain $g_i(x_0^i) \leq \inf_{x^*\in\bb X} g_i(x_0^{i-1}, x^*) + A_i$, which completes the proof of Equation \ref{eqn: g span bound}.
\end{proof}

Finally, we estimate the tail bound using the classic Chernoff's bounding method  \cite{chernoff1952measure}. In Remark \ref{rem: asymp stationary proof}, we note that a similar tail bound can be obtained for asymptotically stationary processes with a certain convergence (to stationary distribution) rate. 

\begin{proof}
     (\textbf{Proposition \ref{thm: hoeffding for phi mixing gen}}) By Lemma \ref{lem: martingale decomp of mixing}, we have the martingale decomposition of $\bd X$ as in Equation \ref{eqn: martingale decomp of mixing},
    \begin{align*}
        &\quad\tilde S_n - \sum_{i=0}^{n-1}\bb{E}[f_i(X_i)]\\
        &= \sum_{i=0}^{n-1} D_i + \sum_{i=0}^{n-1} \bb E[f_i(X_i)| \cal X_{-\infty}^{-1}] - \sum_{i=0}^{n-1}\bb{E}[f_i(X_i)],
    \end{align*}
    where we abuse the notation and denote $D_{-1} = \sum_{i=0}^{n-1} \bb E[f_i(X_i)| \cal X_{-\infty}^{-1}] - \sum_{i=0}^{n-1}\bb{E}[f_i(X_i)]$.
    One can use Chernoff's bounding method to obtain an exponential bound on the desired quantity. Taking the moment generating function on both sides of Equation \ref{eqn: martingale decomp of mixing} and applying the chain rule for conditional expectation recursively yield for $\theta\geq 0$
    \begin{align}
    \label{eqn: moment gen of tilde s decomp}
        &\bb{E}\bigg[\exp\bigg(\theta\bigg(\Tilde{S}_n - \sum_{i=0}^{n-1}\bb{E}[f_i(X_i)]\bigg)\bigg)\bigg] \nn\\
        =& \bb{E}\bigg[\exp\bigg(\theta\sum_{i=-1}^{n-1} D_i\bigg)\bigg]\nn\\
        =& \bb{E}[\exp(\theta D_{-1})] \prod_{i=0}^{n-1}\bb{E}[\exp(\theta D_i)| \ALP F_{i-1}].
    \end{align}
   
    From lemma \ref{lem: span of g_i}, we know that $D_i$ lies in an interval of length $A_i$ for all $i\in\{-1, \cdots, n-1\}$. By Hoeffding's Lemma \cite[Lemma 23.1.4]{douc2018markov} for bounded martingale difference sequences, we have for $\theta \geq 0$,
    \begin{align}
        \bb{E}[\exp(\theta D_{-1})] &\leq \exp(\theta^2A_{-1}^2/8)\nn\\
        \bb{E}[\exp(\theta D_i)| \ALP F_{i-1}] &\leq \exp(\theta^2A_i^2/8),\nn
    \end{align}
    and plugging above into Equation \ref{eqn: moment gen of tilde s decomp} yields
    \begin{align}
        \bb{E}\bigg[\exp\bigg(\theta\bigg(\Tilde{S}_n - \sum_{i=0}^{n-1}\bb{E}[f_i(X_i)]\bigg)\bigg)\bigg] &\leq \exp\bigg(\frac{\theta^2}{8}\sum_{i=-1}^{n-1} A_i^2\bigg).\nn
    \end{align}
    Applying Markov's inequality to the left-hand side, we have
    \begin{align}
        &\tbb{P}\bigg[\bigg(\Tilde{S}_n - \sum_{i=0}^{n-1}\bb{E}[f_i(X_i)]\bigg) > n\epsilon\bigg] \nn\\
        &\leq \exp(-n\epsilon\theta)\bb{E}\bigg[\exp\bigg(\theta\bigg(\Tilde{S}_n - \sum_{i=-1}^{n-1}\bb{E}[f_i(X_i)]\bigg)\bigg)\bigg] \nn\\
        &\leq \exp\bigg(-n\epsilon\theta + \frac{\theta^2}{8}\sum_{i=-1}^{n-1} A_i^2\bigg).\nn
    \end{align}
    Picking $\theta = 4n\epsilon / \sum_{i=-1}^{n-1}A_i^2$ minimizes the right-hand side and yields
    \begin{align}
        \tbb{P} \bigg[\bigg(\Tilde{S}_n - \sum_{i=0}^{n-1}\bb{E}[f_i(X_i)]\bigg) > n\epsilon\bigg] \leq \exp\bigg(-\frac{2n^2\epsilon^2}{\sum_{i=-1}^{n-1} A_i^2}\bigg).\nn
    \end{align}
    The tail probability of the other side can be bounded analogously. Therefore, we have
    \begin{align}
        \label{eqn: tail bound of alternative center}
        \tbb{P} \bigg[\bigg|\Tilde{S}_n - \sum_{i=0}^{n-1}\bb{E}[f_i(X_i)]\bigg| > n\epsilon\bigg] \leq 2\exp\bigg(-\frac{2n^2\epsilon^2}{\sum_{i=-1}^{n-1} A_i^2}\bigg).
    \end{align}
    This completes the proof after noticing that $\bd X$ is stationary and hence $\bb{E}[f_i(X_i)] = \mu(f_i)$ for $i=0, \cdots, n-1$.
\end{proof}

\begin{remark}
\label{rem: asymp stationary proof}
    For asymptotically stationary processes, the marginal distribution of $X_i$ differs from the stationary distribution $\mu$ but converges to $\mu$ as $i\to\infty$. To consider the tail probability of $\Tilde{S}_n$ centered around $\sum_{i=0}^{n-1}\mu(f_i)$, we apply triangular inequality to $(a)$ and Equation \ref{eqn: tail bound of alternative center} to $(b)$ and obtain
    \begin{align}
        &\tbb{P} \bigg[\bigg|\tilde{S}_n - \sum_{i=0}^{n-1}\mu(f_i) \bigg| > n\epsilon\bigg] \nn\\
        &\stackrel{(a)}{\leq} \tbb{P} \bigg[\bigg|\tilde{S}_n - \sum_{i=0}^{n-1}\bb{E}[f_i(X_i)]\bigg| + \sum_{i=0}^{n-1} \sp(f_i)2\tv(\tbb P_i, \mu) > n\epsilon\bigg]\nn\\
        &\stackrel{(b)}{\leq} 2\exp\bigg(-\frac{2[n\epsilon - \sum_{i=0}^{n-1} \sp(f_i)2\tv(\tbb P_i, \mu)]^2}{\sum_{i=0}^{n-1} A_i^2}\bigg),\nn
    \end{align}
    for $\epsilon \geq n^{-1}\sum_{i=0}^{n-1} \sp(f_i)2\tv(\tbb P_i, \mu)$. Thus a similar tail bound can be obtained after assuming there exists a constant upperbound for $\sum_{i=0}^{n-1} \sp(f_i)2\tv(\tbb P_i, \mu)$ for all $n$.
\end{remark}

\subsection{Proof of Proposition \ref{thm: hoeffding for beta mixing}}
\label{sec: proof of hoeffding for alpha mixing}

We modify the proof of \cite[Theorem 2]{zou2009generalization} by replacing Bernstein's inequality with Hoeffding's Lemma (Lemma \ref{lem: hoeffding lemma}) in bounding the $\Phi_1$ term to yield the desired result for our purpose.

\begin{proof}
    Given integer $n$, choose any integer $k\leq n$ and define $l = \lfloor n/k\rfloor$. Let $p = n-kl$ and define the index sets $I_i$ for $i=1,2,..., k$ as follows
    \begin{align*}
        I_i = \begin{cases}
            \{i, i+k,..., i+lk\} & 1\leq i\leq p\\
            \{i, i+k, ..., i+(l-1)k\} & p+1 \leq i \leq k.
        \end{cases}
    \end{align*}
    Note that $\cup_i I_i = \{1, ..., n\}$ and within each set $I_i$ the elements are pairwise separated by at least $k$.
    Let $G_i = f(X_i) - \bb E[f(X_1)]$, $T(i) = \sum_{j\in I_i} G_j$, and $p_i = |I_i|/ n$ then
    \begin{align*}
        S_n - \bb E S_n = \sum_{i=1}^n G_i = \sum_{i=1}^k \sum_{j\in I_i} T(i)\\
        \sum_{i=1}^k p_i = \frac{1}{n}\sum_{i=1}^k |I_i| = 1.
    \end{align*}
    Now, we write the moment generating function of $\sum_{i=1}^n G_i/n$ for $r \geq 0$, which can be bounded as follows using the convexity of the exponential function,
    \begin{align}
    \label{eqn: mgf bound}
        \bb E\bigg[\exp\bigg (r\frac{\sum_{i=1}^n G_i}{n}\bigg)\bigg] \leq \sum_{i=1}^k p_i \bb E\bigg[\exp\bigg(r\frac{T(i)}{|I_i|}\bigg)\bigg].
    \end{align}
    We now bound the right-hand side in the following fashion. For $i=1, 2,...,k$, we have
    \begin{align}
    \label{eqn: mgf decomposition}
        & \bb E\bigg[\exp\bigg(r\frac{T(i)}{|I_i|}\bigg)\bigg] = \bb E\bigg[\prod_{j=1}^{|I_i|}\exp\bigg(r\frac{G_j}{|I_i|}\bigg)\bigg] \nn\\
        \leq & \underbrace{\prod_{j=1}^{|I_i|} \bb E \bigg[\exp\bigg(r\frac{G_j}{|I_i|}\bigg)\bigg]}_{\Phi_1} \nn\\
        + &\underbrace{\bigg |\bb E\bigg[\prod_{j=1}^{|I_i|}\exp\bigg(r\frac{G_j}{|I_i|}\bigg)\bigg] -  \prod_{j=1}^{|I_i|} \bb E\bigg[\exp\bigg(r\frac{G_j}{|I_i|}\bigg)\bigg]\bigg|}_{\Phi_2}.
    \end{align}
    For convenience, we denote the first term on the right-hand side of the above as $\Phi_1$ and the second term as $\Phi_2$. We bound them separately. $\Phi_1$ can be estimated with Hoeffding's Lemma (Lemma \ref{lem: hoeffding lemma}) for bounded random variables. For $r > 0$,
    \begin{align}
    \label{eqn: phi 1}
        \Phi_1 &= \prod_{j=1}^{|I_i|} \bb E \bigg[\exp\bigg(r\frac{G_j}{|I_i|}\bigg)\bigg]
        \stackrel{(a)}{=} \bigg\{\bb E\bigg[\exp\bigg(r\frac{G_j}{|I_i|}\bigg)\bigg]\bigg\}^{|I_i|}\nn\\
        &\stackrel{(b)}{\leq} \exp\bigg[\frac{r^2\sp(f)^2}{8|I_i|}\bigg], 
    \end{align}
    where $(a)$ is due to stationarity and $(b)$ comes from Lemma \ref{lem: hoeffding lemma}. Note that $|I_i| \geq l$ for $i=1,2,...,k$, thus we have
    \begin{align*}
        \Phi_1 \leq \exp\bigg[\frac{r^2\sp(f)^2}{8l}\bigg].
    \end{align*}
    $\Phi_2$ can be bounded by the $\beta$-mixing inequality in Lemma \ref{lem: beta mixing ineq} and the exponential $\beta$-mixing condition (Definition \ref{def: exp mixing}). 
    \begin{align}
    \label{eqn: phi 2}
        \Phi_2 &= \bigg |\bb E\bigg[\prod_{j=1}^{|I_i|}\exp\bigg(r\frac{G_j}{|I_i|}\bigg)\bigg] -  \prod_{j=1}^{|I_i|} \bb E\bigg[\exp\bigg(r\frac{G_j}{|I_i|}\bigg)\bigg]\bigg |\nn\\
        &\stackrel{(a)}{\leq} \beta(k)(|I_i|-1)\prod_{j=1}^{|I_i|}\bigg\|\exp\bigg[\frac{rG_j}{|I_j|}\bigg]\bigg\|_\infty\nn\\
        &\stackrel{}{\leq} \beta(k)(|I_i|-1)e^{r\sp(f)}\nn\\
        &\stackrel{(b)}{\leq} e^{|I_i| -2} \beta(k) e^{-ck^\gamma} e^{r\sp(f)}\nn\\
        &\stackrel{}{\leq} \frac{\bar\beta}{e^{2}} \exp\{|I_i| + r\sp(f) - ck^\gamma\},
    \end{align}
    where $(a)$ is due to Lemma \ref{lem: beta mixing ineq} and $(b)$ is due to Definition \ref{def: exp mixing} and the fact $\|I_i\|-1\leq \exp(|I_i|-2)$.

    Now, we plug Equation \eqref{eqn: phi 1} and \eqref{eqn: phi 2} into Equation \eqref{eqn: mgf decomposition}, 
    \begin{align*}
        &\bb E\bigg[\exp\bigg(r\frac{T(i)}{|I_i|}\bigg)\bigg] \nn\\
        &\leq \exp\bigg[\frac{r^2\sp(f)^2}{8l}\bigg] + \frac{\bar\beta}{e^{2}} \exp\{|I_i| + r\sp(f) - ck^\gamma\}.
    \end{align*}
    Since $|I_i|$ and $k$ are free variables, we add some structure to simplify the right-hand side of the above. First, we require $4|I_i| \geq r\sp(f)$ which leads to $\exp\{|I_i| + r\sp(f) - ck^\gamma\} \leq \exp\{5|I_i| - ck^\gamma\}$. Next, we require $\exp\{5|I_i| - ck^\gamma\} \leq 1$, which holds if $5|I_i| \leq ck^\gamma$. Since $|I_i|\leq (n/k + 1)$ and $n + k \leq 2n$, it suffices to let $k = \lceil (10n/c)^{1/(\gamma+1)} \rceil$. Then, we have
    \begin{align*}
        \bb E\bigg[\exp\bigg(r\frac{T(i)}{|I_i|}\bigg)\bigg] \leq \exp\bigg[\frac{r^2\sp(f)^2}{8l}\bigg] + \frac{\bar\beta}{e^{2}},
    \end{align*}
    which holds for $0 < r \leq \frac{4l}{\sp(f)} \leq \frac{4|I_i|}{\sp(f)}$ for $i=1,...,k$. Plugging the above back to Equation \ref{eqn: mgf bound} and using the fact that $\exp\bigg[\frac{r^2\sp(f)^2}{8l}\bigg] \geq 1$, we have
    \begin{align*}
        \bb E\bigg[\exp\bigg (r\frac{\sum_{i=1}^n G_i}{n}\bigg)\bigg] \leq (1 + \bar\beta/e^{2})\exp\bigg[\frac{r^2\sp(f)^2}{8l}\bigg].
    \end{align*}
    Applying Markov's inequality, we have for $\epsilon > 0$
    \begin{align}
    \label{eqn: markov bound before minimization}
        &\bb P\bigg[S_n - \bb E S_n \geq n\epsilon\bigg] = \bb P\bigg[\exp\frac{r}{n}(S_n - \bb E S_n) \geq e^{r\epsilon}\bigg]\nn\\
        &\leq \frac{\bb E [\exp\frac{r}{n}(S_n - \bb E S_n)]}{e^{r\epsilon}}\nn\\
        &\leq (1 + \bar\beta/e^{2})\exp\bigg[-r\epsilon + \frac{r^2\sp(f)^2}{8l}\bigg]. 
    \end{align}
    The right-hand side achieves minima w.r.t. $r$ when
    \begin{align*}
        r = \frac{4\epsilon l}{\sp(f)^2},
    \end{align*}
    which clearly satisfies $r \leq \frac{4l}{\sp(f)}$ when $\epsilon < \sp(f)$. Plugging the minimizer into Equation \eqref{eqn: markov bound before minimization} yields
    \begin{align}
        \bb P\bigg[S_n - \bb E S_n \geq n\epsilon\bigg] \leq (1 + \bar\beta/e^{2})\exp\bigg[-\frac{2l\epsilon^2}{\sp(f)^2}\bigg].\nn
    \end{align}
    Replacing $l$ by $\hat n = \lfloor n/k\rfloor = \lfloor n\lceil (10n/c)^{1/(\gamma+1)} \rceil^{-1}\rfloor$ gives the desired result 
    \begin{align}
        \bb P\bigg[S_n - \bb E S_n \geq n\epsilon\bigg] \leq (1 + \bar\beta/e^{2})\exp\bigg[-\frac{2\hat n\epsilon^2}{\sp(f)^2}\bigg],\nn
    \end{align}
    for $0<\epsilon<\sp(f)$.
\end{proof}

A similar proof gives an analogous Hoeffding-type inequality for exponential $\alpha$-mixing processes which is of independent interest. We document it here for completeness.

\begin{prop}
\label{thm: hoeffding for alpha mixing}
    Let $\bd{X}$ be a stationary $\alpha$-mixing sequence with the coefficient satisfying Definition \ref{def: exp mixing}. Assume that $f:\cal Y\to \bb R$ has bounded span, i.e., $\sp(f)< \infty$. Let $S_n = \sum_{i=0}^{n-1} f(X_i)$. Then, for all $\epsilon \in (0, \sp(f))$, it holds
    \begin{align}
    \label{eqn: hoeffding for alpha mixing}
        \bb P\bigg[S_n - \bb E S_n \geq n\epsilon\bigg] \leq (1 + 4e^{-2}\bar\alpha)\exp\bigg\{- \frac{2\hat n\epsilon^2}{\sp(f)^2} \bigg\},
    \end{align}
    where $\hat n= \lfloor n\lceil (10n/c)^{1/(\gamma+1)} \rceil^{-1}\rfloor$ and $c, \gamma$ are defined in Definition \ref{def: exp mixing}.
\end{prop}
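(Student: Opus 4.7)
The plan is to follow the blocking argument used in the proof of Proposition \ref{thm: hoeffding for beta mixing} almost verbatim, substituting the $\alpha$-mixing covariance inequality (Lemma \ref{lem: alpha mixing ineq}) in place of the $\beta$-mixing version (Lemma \ref{lem: beta mixing ineq}). First, I would partition the indices $\{1,\dots,n\}$ into $k$ subsets $I_1,\dots,I_k$ whose elements are pairwise separated by at least $k$, set $T(i) = \sum_{j\in I_i}G_j$ with $G_j = f(X_j) - \bb E f(X_1)$, and use convexity of the exponential to obtain
\begin{align*}
\bb E\!\left[\exp\!\left(r\frac{\sum_{i=1}^n G_i}{n}\right)\right] \leq \sum_{i=1}^k p_i\,\bb E\!\left[\exp\!\left(r\frac{T(i)}{|I_i|}\right)\right],
\end{align*}
where $p_i = |I_i|/n$.

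Next, I would decompose each block's MGF as $\Phi_1 + \Phi_2$, exactly as in \eqref{eqn: mgf decomposition}. The $\Phi_1$ term is the product of marginal MGFs; by stationarity and Hoeffding's Lemma (Lemma \ref{lem: hoeffding lemma}) applied to the bounded random variable $G_j/|I_i|$, it is bounded by $\exp(r^2\sp(f)^2/(8l))$ with $l = \lfloor n/k\rfloor$. For $\Phi_2$, this is the only substantive change from the $\beta$-mixing proof: invoking Lemma \ref{lem: alpha mixing ineq} with the factor of $4$ yields
\begin{align*}
\Phi_2 \leq 4\alpha(k)(|I_i|-1)\,e^{r\,\sp(f)} \leq 4\bar\alpha\,e^{-2}\exp\{|I_i|+r\,\sp(f) - ck^\gamma\},
\end{align*}
using the exponential $\alpha$-mixing assumption from Definition \ref{def: exp mixing} and the bound $|I_i|-1 \leq e^{|I_i|-2}$.

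Then I would impose the two structural constraints $4|I_i|\geq r\,\sp(f)$ and $5|I_i|\leq ck^\gamma$, which together with $|I_i|\leq n/k+1$ are satisfied by the choice $k = \lceil (10n/c)^{1/(\gamma+1)}\rceil$, forcing the $\Phi_2$ bound down to $4\bar\alpha/e^2$. Combining gives
\begin{align*}
\bb E\!\left[\exp\!\left(r\frac{\sum_{i=1}^n G_i}{n}\right)\right] \leq (1 + 4\bar\alpha e^{-2})\exp\!\left(\frac{r^2\sp(f)^2}{8l}\right),
\end{align*}
valid for $0<r\leq 4l/\sp(f)$. Finally, I would apply Markov's inequality, optimize in $r$ (the minimizer is $r = 4\epsilon l/\sp(f)^2$, which satisfies the admissibility constraint provided $\epsilon<\sp(f)$), and relabel $l$ as $\hat n = \lfloor n\lceil(10n/c)^{1/(\gamma+1)}\rceil^{-1}\rfloor$ to obtain \eqref{eqn: hoeffding for alpha mixing}.

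The main obstacle is essentially bookkeeping rather than conceptual: one must verify that the same block-size choice that worked for $\beta$-mixing still balances the $\Phi_2$ term after the extra factor of $4$ appears, and confirm the admissibility range of $r$ so that the optimized exponent indeed yields the subGaussian rate $2\hat n\epsilon^2/\sp(f)^2$. Since the factor $4$ multiplies $\bar\alpha$ outside the exponential, it is absorbed into the prefactor and does not degrade the decay rate, yielding the $(1 + 4e^{-2}\bar\alpha)$ constant stated in the proposition.
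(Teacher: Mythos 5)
Your proposal is correct and follows essentially the same route as the paper: the paper's own proof explicitly states that the argument is identical to that of Proposition \ref{thm: hoeffding for beta mixing} except for the estimate of $\Phi_2$, where Lemma \ref{lem: alpha mixing ineq} replaces Lemma \ref{lem: beta mixing ineq} and contributes the extra factor of $4$ that is absorbed into the prefactor $(1+4e^{-2}\bar\alpha)$. Your bookkeeping of the block-size choice, the admissibility range of $r$, and the final optimization all match the paper's treatment.
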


\begin{proof}
    The proof is the same as that of Proposition \ref{thm: hoeffding for beta mixing} except for the part where $\Phi_2$ is estimated. In the $\alpha$-mixing case, $\Phi_2$ can be bounded by Lemma \ref{lem: alpha mixing ineq} and exponential $\alpha$-mixing condition (Definition \ref{def: exp mixing}). 
    \begin{align}
        \Phi_2 &= \bigg |\bb E\bigg[\prod_{j=1}^{|I_i|}\exp\bigg(r\frac{G_j}{|I_i|}\bigg)\bigg] -  \prod_{j=1}^{|I_i|} \bb E\bigg[\exp\bigg(r\frac{G_j}{|I_i|}\bigg)\bigg]\bigg |\nn\\
        &\stackrel{(a)}{\leq} 4\alpha(k)(|I_i|-1)\prod_{j=1}^{|I_i|}\bigg\|\exp\bigg[\frac{rG_j}{|I_j|}\bigg]\bigg\|\nn\\
        &\stackrel{}{\leq} 4\alpha(k)(|I_i|-1)e^{r\sp(f)}\nn\\
        &\stackrel{(b)}{\leq} e^{|I_i| -2} 4\bar\alpha(k) e^{-ck^\gamma} e^{r\sp(f)}\nn\\
        &\stackrel{}{\leq} 4e^{-2}\bar\alpha \exp\{|I_i| + r\sp(f) - ck^\gamma\},
    \end{align}
    where $(a)$ is due to Lemma \ref{lem: alpha mixing ineq} and $(b)$ is due to Definition \ref{def: exp mixing} and the fact $\|I_i\|-1\leq \exp(|I_i|-2)$. The rest of the proof follows that of Proposition \ref{thm: hoeffding for beta mixing}.
\end{proof}

\subsection{Proof of Theorem \ref{thm: mtbfa bound}}
\label{sec: mtbfa proof}

\subsubsection{Case 1: \texorpdfstring{$\beta$}{beta}-mixing}
\label{sec: case 1}
When $\bd X$ is exponential $\beta$-mixing satisfying Definition \ref{def: exp mixing}, we show the lower bound of $\mtbfa$ as follows. For exponential $\alpha$-mixing processes satisfying Definition \ref{def: exp mixing}, the proof follows the same procedure after replacing Proposition \ref{thm: hoeffding for beta mixing} with Proposition \ref{thm: hoeffding for alpha mixing}.

\begin{proof}
To determine the upper bound for $\mtbfa$, we condition on the fact that the change point $\tau$ is $\infty$. We use $\bb E_\infty$ and $\bb P_\infty$ to denote the expectation and the probability under $\tau=\infty$. For threshold $b > 0$, minimum burn-in period $M$, and stopping rule $T(b, M)$ in Equation \eqref{eqn: stopping rule}, the $\mtbfa$ reads
\begin{align}
\label{eqn: mtbfa bound 1}
    &\quad\bb E_\infty [T(b, M)] = \sum_{t=1}^\infty \bb P_\infty[T(b, M) \geq t] \nn\\
    &= M + \sum_{t=M+1}^\infty \bb P_\infty[T(b, M) \geq t]\nn \\
    &= M + \sum_{l=M+1}^\infty \left(1 - \mathbb{P}_\infty\left\{\bigcup_{t=M+1}^{l} \{T(b, M) = t\}\right\}\right) \nn\\
    &\stackrel{(a)}{\geq} M + \sum_{l=M+1}^\infty \left(1 - \mathbb{P}_\infty\left\{\bigcup_{t=M+1}^{l} \{\hat s_t > b\}\right\}\right)\nn\\
    &\geq M + \sum_{l=M+1}^L \left(1 - \mathbb{P}_\infty\left\{\bigcup_{t=M+1}^{l}\bigcup_{k=1}^{t-M}\{s_{k:t} \geq b\}\right\}\right)\nn\\
    &\stackrel{(b)}{\geq} M + \sum_{l=M+1}^L \left(1 - \sum_{t=M+1}^{l}\sum_{k=1}^{t-M}\mathbb{P}_\infty\{s_{k:t} \geq b\}\right)\nonumber\\
    &= M + \sum_{l=1}^{L-M}\left(1 - \sum_{t=1}^{l}\sum_{k=1}^{t}\mathbb{P}_\infty\{s_{k:t+M} \geq b\}\right),
\end{align}
where $(a)$ is due to the majorization of the event $\{T(b, M) = t\}$ to $ \{\hat s_t > b\}$, $(b)$ is due to the application of the union bound, and $M < L <\infty $ is an integer constant. 

To further lower-bound the right-hand side of the above, {we consider the tail probability in \eqref{eqn: mtbfa bound 1}. Due to stationarity, we can study $\bb P_\infty\{s_{1:t} \geq b\}$ for some $t\geq 1$ without loss of generality. Suppose we pick the offset parameter $\Delta = C(r, h) + \delta$ for some $\delta > 0$. By Lemma \ref{lem: empirical mmd consistency}, we known that $\bb E_\infty[s_{1:t}] =  t\bb E_\infty[s(\cal B_r(1)] \in [-tC_{\mu,\mu}(r, h) - t\delta, -t\delta]$ almost surely for sufficiently large $h$.} Additionally, one can verify that $\{\cal B_r(t)\}_{t=1}^{\infty}$ is $\beta$-mixing with coefficient $\tilde\beta(t) = \beta(tr)$. By assumption, $\beta(tr)$ satisfies Definition \ref{def: exp mixing} and 
\begin{align}
    \tilde\beta(t) = \beta(tr) \leq \bar\beta \exp(-cr^\gamma t^\gamma),\nn
\end{align}
for $\bar\beta, \gamma, c >0$. Thus, we can apply Proposition \ref{thm: hoeffding for beta mixing} to obtain a tail probability bound on $s_{1:t}$. For $t\geq 1$, 
\begin{align}
\label{eqn: tail bound before minimization}
    &\quad\bb P_\infty [s_{1:t} > b] \nn\\
    &= \mathbb{P}_\infty\left\{s_{1:t} - \bb E_\infty [s_{1:t}] > b - \bb E_\infty [s_{1:t}]\right\}\nn\\
    &\stackrel{(a)}{\leq} \frac{e^2+\bar\beta}{e^2}\exp\bigg\{- \frac{2\hat t(b - t\bb E_\infty[s(\cal B_r(1)])^2}{t^2\sp(\widehat{\mmd}[\cal B_r(1), \cal D_h])^2}\bigg\}\nn\\
    &\stackrel{(b)}{\leq} \frac{e^2+\bar\beta}{e^2}\exp\bigg[- \frac{\hat t(b + t\delta)^2}{t^2\bar{k}}\bigg]\nn\\
    &\stackrel{(c)}{\leq} \frac{e^2+\bar\beta}{e^2}\exp\bigg\{\left[1-\left(\frac{r^\gamma c}{10}\right)^{\frac{1}{1+\gamma}}\right]\frac{(\frac{b}{\sqrt{t}} + \sqrt{t}\delta)^2}{t^{\frac{1}{1+\gamma}}\bar{k}}\bigg\},
\end{align}
where in $(a)$ we apply Proposition \ref{thm: hoeffding for beta mixing} for the conditional probability under event $A$, in $(b)$ we apply $\sp(\widehat{\mmd}[\cal B_r(1), \cal D_h]) \leq \sqrt{2\bar{k}}$ which can deduced from \eqref{eqn: empirical mmd}, and in $(c)$ $\hat t$ is replace with its lower-bound,
\begin{align}
\label{eqn: shrink n beta}
    \hat t &= \bigg\lfloor \frac{t}{\lceil (10tr^{-\gamma}/c)^{\frac{1}{1+\gamma}} \rceil}\bigg\rfloor \geq \bigg\lfloor\frac{t}{(10tr^{-\gamma}/c)^{\frac{1}{1+\gamma}} + 1 }\bigg\rfloor\nn\\
    &\geq \bigg\lfloor\frac{t}{(10tr^{-\gamma}/c)^{\frac{1}{1+\gamma}} + t^{\frac{1}{1+\gamma}} }\bigg\rfloor\nn\\
    &\geq \bigg\lfloor \frac{t^{\frac{\gamma}{1+\gamma}}}{(10r^{-\gamma}/c)^{\frac{1}{1+\gamma}}}\bigg\rfloor \geq \frac{t^{\frac{\gamma}{1+\gamma}}}{(10r^{-\gamma}/c)^{\frac{1}{1+\gamma}}} - 1\nn\\
    &\geq [(r^\gamma c/10)^{\frac{1}{1+\gamma}}-1]t^{\frac{\gamma}{1+\gamma}},
\end{align}
assuming $c, r, \gamma$ are sufficiently large such that $(r^\gamma c/10)^{\frac{1}{1+\gamma}} > 1$. 

The right-hand side of \eqref{eqn: tail bound before minimization} achieves its maximum when $t = t^* \coloneqq \frac{(\gamma + 2)b}{\gamma\delta}$, which yields
\begin{align}
\label{eqn: tail bound after minimization}
    &\bb P_\infty [s_{1:t^*} > b]\leq D,
\end{align}
where 
\begin{align}
    D\coloneqq& \underbrace{(1 + \bar\beta e^{-2})\exp\bigg[\frac{\xi(\gamma, c, r)}{\bar{k}} b^{\frac{\gamma}{\gamma+1}} \delta^{\frac{\gamma + 2}{\gamma+1}}\bigg]}_{D_1}\nn\\
    \xi(\gamma, c, r) \coloneqq&  \left[1-\left(\frac{r^\gamma c}{10}\right)^{\frac{1}{1+\gamma}}\right]\bigg(\frac{2}{\gamma} + 2\bigg)^2\bigg(\frac{2}{\gamma}+1\bigg)^{-\frac{\gamma+2}{\gamma+1}}.\nn
\end{align}
Note that $\xi(\gamma, c, r)< 0$ for sufficiently large $c, r, \gamma$. 

Using \eqref{eqn: tail bound after minimization} and stationarity, each of $\mathbb{P}_\infty[s_{k:t+M} > b]$ in \eqref{eqn: mtbfa bound 1} can be upper-bounded by $\bb P_\infty [s_{1:t^*} > b]$, which yields 
\begin{align}
\label{eqn: mtbfa bound 2}
    &\bb E_\infty[T(b, M)] \geq M + \sum_{l=1}^{L-M}\left\{1 - \sum_{t=1}^{l}\sum_{k=1}^{t}\mathbb{P}_\infty[s_{1:t^*} > b]\right\}\nn\\
    &= L - D\sum_{l=1}^{L-M}l(l+1).
\end{align}
The right-hand side achieves maxima when $L = L^*$, where $L^*$ is obtained as the largest solution of 
\begin{align}
    (L^*-M)(L^*-M+1) = 2/D. \nn
\end{align}
Some simple calculation shows that 
\begin{align}
    L^* = M-\frac{1}{2} + \frac{1}{2}\sqrt{1 + 8/D}.\nn
\end{align}
Returning to \eqref{eqn: mtbfa bound 2}, we have
\begin{align}
\label{eqn: final ineq}
    &\bb E_\infty[T(b, M)]\nn\\
    &\geq L^* - D \sum_{l=1}^{L^*-M}l^2 + l\nn\\
    &\geq L^* - D (L^*-M)(L^*-M+1)(L^*-M+2)/6\nn\\
    &\geq M + \frac{1}{3}(L^* - M) - \frac{4}{3} \geq M -\frac{9}{6} +\frac{1}{6} \sqrt{1 + 8/D}\nn\\
    &\geq M + \frac{\sqrt{2}-9}{6} + \frac{2}{3} \sqrt{1/D}\nn,
\end{align}
where the second to last inequality uses the fact that $\sqrt{a+b}\geq\sqrt{a/2}+\sqrt{b/2}$ for $a, b\geq 0$. Plugging the value of $D$ yields the lower bound in $\eqref{eqn: mtbfa beta}$.
\end{proof}

\subsubsection{Case 2: \texorpdfstring{$\phi$}{phi}-mixing}
\label{sec: case 2}
When $\bd X$ is $\phi$-mixing satisfying Definition \ref{def: fast phi mixing}, we show the lower bound of $\mtbfa$ as follows. For exponential $\alpha$-mixing processes satisfying Definition \ref{def: exp mixing}, the proof follows the same procedure after replacing Proposition \ref{thm: hoeffding for beta mixing} with Proposition \ref{thm: hoeffding for alpha mixing}.

\begin{proof}
    Follow the same argument in Case \ref{sec: case 1} until the application of Proposition \ref{thm: hoeffding for beta mixing}. Note that $\{\cal B_r(t)\}_{t=1}^\infty$ is $\phi$-mixing with coefficient $\tilde\phi(t) = \phi(tr)$. Thus, $\{\cal B_r(t)\}_{t=1}^\infty$ satisfies Definition \ref{def: fast phi mixing} with constant $\Phi$ as soon as $\bd X$ does. Then, we can apply Proposition \ref{thm: hoeffding for phi mixing} to obtain a tail bound on $s_{1:t}$. For $t\geq 1$,
    \begin{align}
        &\bb P_\infty [s_{1:t} > b] = \mathbb{P}_\infty\left\{s_{1:t} - \bb E_\infty [s_{1:t}] > b - \bb E_\infty [s_{1:t}]\right\} \nn\\
        &\leq \exp\bigg\{- \frac{2(b - t\bb E_\infty[s(\cal B_r(1))])^2}{t(2\Phi+1)^2\sp(\widehat{\mmd}[\cal B_r(1), \cal D_h])^2}\bigg\}\nn\\
        &\leq \exp\bigg\{- \frac{(b + t\delta)^2}{t\bar{k}}\bigg\}.\nn
    \end{align}
    The right-hand side of the above achieves its maximum when $t = t^*\coloneqq b/\delta$, which yields
    \begin{align}
        \bb P_\infty [s_{1:t^*} > b] \leq \exp\bigg[-\frac{4b\delta}{\bar{k}}\bigg],
    \end{align}
    The rest of the proof follows the same procedure in Case \ref{sec: case 1}. Then, we have 
    \begin{align}
        &\bb E_\infty[T(b, M)] \nn\\
        &\geq M + \frac{\sqrt{2}-9}{6} + \frac{2}{3}\exp\bigg[\frac{2b \delta}{\bar k}\bigg].\nn
    \end{align}
    We arrive at the $\mtbfa$ lower-bound in \eqref{eqn: mtbfa phi}.
\end{proof}

\subsection{Proof of Theorem \ref{thm: md bound}}
\label{sec: md proof}

\subsubsection*{Case 1: \texorpdfstring{$\beta$}{beta}-mixing}
\label{sec: md case 1}
When $\bd X$ is exponential $\beta$-mixing satisfying Definition \ref{def: exp mixing}, we show the upper bound of $\md$ as follows. Note the offset parameter $\Delta = C_{\mu, \mu}(r,h) + \delta$ as defined in the proof of Theorem \ref{thm: mtbfa bound} in Appendix \ref{sec: mtbfa proof}.

\begin{proof}
    Assuming the change point $\tau=0$, we denote the probability and expectation under the alternative as $\bb P_0$ and $\bb E_0$, respectively. For $t\geq 1$, $\bb E_0 \{\widehat{\mmd}[\cal B_r(t), \cal D_h]\} \geq \mmd_k(\mu, \nu)-C(r,h) - \delta$ almost surely for sufficiently large $h$ according to Lemma \ref{lem: empirical mmd consistency}. Let $D(\mu, \nu):=\mmd_k(\mu, \nu)- C(r,h)$. Now, we can write the average detection delay as follows
    \begin{align}
    \label{eqn: md bound 1}
        \bb E_0 &[T(b, M)] = \sum_{t=1}^\infty \bb P_0[\hat{s}_t \leq b] \leq \sum_{t=1}^\infty \bb P_0[s_{1: t} \leq b]\nn\\
        &\leq \sum_{t=1}^{t_0}\bb P_0[s_{1: t} \leq b] + \sum_{t=t_0+1}^\infty \bb P_0[s_{1: t} \leq b]\nn\\
        &\leq \max\bigg\{M, \frac{b}{D(\mu, \nu) - \Delta - \delta}\bigg\} + \sum_{t=t_0+1}^\infty \bb P_0[s_{1: t} < b],
    \end{align}
    where $t_0 =\max\left\{M,\frac{b}{D(\mu, \nu)- \Delta - \delta}\right\}$. In the rest of the proof, we aim to show that the second term on the right-hand side is ultimately bounded by a constant multiple of the first term, and the desired result is reached.
    
    To bound the second term, we apply Proposition \ref{thm: hoeffding for beta mixing} to get the tail probability bound of $s_{1:t}$, similarly to the proof in Appendix \ref{sec: mtbfa proof}. For $t \geq t_0 + 1$, we have
    \begin{align*}
        \mathbb{P}_0 & [s_{1:t}\leq b] \leq \mathbb{P}_0\left\{s_{1:t} - \bb E_0 [s_{1:t}]\leq  b - \bb E_0 [s_{1:t}]\right\}\nonumber\\
        \stackrel{(a)}{\leq}& (1 + \bar\beta/e^{2})\exp\left\{-\frac{2\hat t(\bb E_0 [s_{1:t}]-b)^2}{t^2\sp(\widehat{\mmd}[\cal B_r(t), \cal D_h])^2}\right\}\\
        \stackrel{(b)}{\leq}& (1 + \bar\beta/e^{2})\exp\left\{-\frac{\hat t[t(D(\mu, \nu) - \Delta-\delta)-b]^2}{t^2\bar k}\right\}\\
        \stackrel{(c)}{\leq}& (1 + \bar\beta/e^{2})\exp\left\{-\psi(c, \gamma) \frac{[t(D(\mu, \nu) - \Delta-\delta)-b]^2}{t^{\frac{2+\gamma}{1+\gamma}}\bar k}\right\},
    \end{align*}
    where $\sp(\widehat{\mmd}[\cal B_r(t), \cal D_h]) \leq \sqrt{2\bar k}$ can be deduced from Equation \eqref{eqn: empirical mmd}, $(a)$ follows from Proposition \ref{thm: hoeffding for beta mixing} and $\hat t = \lfloor t\lceil (10t/c)^{1/(\gamma+1)} \rceil^{-1}\rfloor$, $(b)$ uses stationarity of the post-change process and the conditioning on $A'$, and $(c)$ follows from the relation in Equation \eqref{eqn: shrink n beta} and $\psi(c, \gamma) \coloneqq (10/c + 1)^{-\frac{1}{1+\gamma}}-1$. After magnifying the exponential term by setting $b$ to $t_0(D(\mu, \nu) - \Delta-\delta)$ and splitting the summation at $\bar t \coloneqq \lceil t_0^\sigma\rceil$ for some $\sigma\in (\frac{2+\gamma}{2(1+\gamma)}, 1)$, the second term on the right-hand side of Equation \eqref{eqn: md bound 1} becomes
    \begin{align}
         &\frac{1}{1 + \bar\beta/e^{2}}\sum_{t=t_0+1}^\infty \bb P_0 \nn\\ 
         &\leq \sum_{t=t_0+1}^\infty \exp\left\{-\psi(c, \gamma) \frac{(D(\mu, \nu) - \Delta-\delta)^2(t-t_0)^2}{t^{\frac{2+\gamma}{1+\gamma}}\bar k}\right\}\nn\\
         &= \sum_{t=1}^{\bar t-1} \exp\left\{-\psi(c, \gamma) \frac{(D(\mu, \nu) - \Delta-\delta)^2t^2}{(t+t_0)^{\frac{2+\gamma}{1+\gamma}}\bar k}\right\}\nn\\
         &+ \sum_{t=\bar t}^{\infty} \exp\left\{-\psi(c, \gamma) \frac{(D(\mu, \nu) - \Delta-\delta)^2t^2}{(t+t_0)^{\frac{2+\gamma}{1+\gamma}}\bar k}\right\}\nn\\
         & \leq (\bar t - 1) \exp\left\{-\psi(c, \gamma) \frac{(D(\mu, \nu) - \Delta-\delta)^2}{(1+t_0)^{\frac{2+\gamma}{1+\gamma}}\bar k}\right\}\nn\\
         &+ \sum_{t=\bar t}^{\infty} \exp\left\{-\psi(c, \gamma) \frac{(D(\mu, \nu) - \Delta-\delta)^2t^{\frac{\gamma}{1+\gamma}}}{(1+t^{\frac{1}{\sigma} - 1})^{\frac{2+\gamma}{1+\gamma}}\bar k}\right\},\nn
    \end{align}
    where the first term is magnified by fixing $t=1$ for all summands, and the second term is magnified by majorizing the denominator inside the exponential for each summand via $(1+t_0/t)^\frac{2+\gamma}{1+\gamma}\bar k \leq (1+\bar t^\frac{1}{\sigma}/t)^\frac{2+\gamma}{1+\gamma}\bar k \leq (1+t^{\frac{1}{\sigma} - 1})^{\frac{2+\gamma}{1+\gamma}}\bar k$.
    
    At this point, we can compare the growth rate of the two terms above and $t_0$, which yields
    \begin{align*}
        &\lim_{t_0\to\infty} \frac{(\bar t -1)}{t_0} \exp\left\{-\psi(c, \gamma) \frac{(D(\mu, \nu) - \Delta-\delta)^2}{(1+t_0)^{\frac{2+\gamma}{1+\gamma}}\bar k}\right\} = 0,\\
        &\lim_{t_0\to\infty} \sum_{t=\bar t}^{\infty} \exp\left\{-\psi(c, \gamma) \frac{(D(\mu, \nu) - \Delta-\delta)^2t^{\frac{\gamma}{1+\gamma}}}{(1+t^{\frac{1}{\sigma} - 1})^{\frac{2+\gamma}{1+\gamma}}\bar k}\right\} = 0,
    \end{align*}
    where both equations above follow from $\sigma\in (\frac{2+\gamma}{2(1+\gamma)}, 1)$. Thus, we have
    \begin{align}
     \label{eqn: md bound 2}
        \lim_{t_0\to\infty} \frac{1}{t_0}\sum_{t=t_0+1}^\infty \bb P_0[s_{1: t} \leq b] = 0.
    \end{align}
    We have reached the desired result in Equation \eqref{eqn: md bound} after combining Equation \eqref{eqn: md bound 1} and \eqref{eqn: md bound 2}.
\end{proof}

\subsubsection*{Case 2: \texorpdfstring{$\phi$}{phi}-mixing}
\label{sec: md case 2}
When $\bd X$ is $\phi$-mixing satisfying Definition \ref{def: fast phi mixing}, the upper bound of $\md$ is shown to follow Equation \ref{eqn: md bound} using the same recipe as in Appendix \ref{sec: md case 1}. Using Proposition \ref{thm: hoeffding for phi mixing}, the second term on the right-hand side of Equation \ref{eqn: md bound 1} can also be proven ultimately negligible compared to the first term. 

\begin{proof}
    We shall directly start bounding the second term on the right-hand side of Equation \ref{eqn: md bound 1} using Proposition \ref{thm: hoeffding for phi mixing}, which is written as
    \begin{align*}
        &\sum_{t=t_0+1}^\infty\mathbb{P}_0[s_{1:t}\leq b] \\
        =& \sum_{t=t_0+1}^\infty \mathbb{P}_0\left\{s_{1:t} - \bb E_0 [s_{1:t}]\leq b - \bb E_0 [s_{1:t}] | A'\right\}\nonumber\\
        \stackrel{}{\leq}& \sum_{t=t_0+1}^\infty \exp\left\{-\frac{2(\bb E_0 [s_{0:t}]-b)^2}{t(2\Phi +1)^2\sp(\widehat{\mmd}[\cal B_r(t), \cal D_h])^2}\right\}\\
        \stackrel{}{\leq}& \sum_{t=t_0+1}^\infty \exp\left\{-\frac{[t(D(\mu, \nu) - \Delta-\delta)-b]^2}{t\bar k}\right\}.
    \end{align*}
    If we magnify the exponential term by setting $b$ to $t_0(D(\mu, \nu) - \Delta-\delta)$ and split the summation at $\bar t \coloneqq \lceil t^{2/3}_0 \rceil$, then it becomes
    \begin{align*}
        &\sum_{t=t_0+1}^\infty \bigg\{\mathbb{P}_0[s_{1:t}\leq b]\bigg\} \\
        &\leq \sum_{t=t_0+1}^\infty \exp\left\{-\frac{(D(\mu, \nu) - \Delta-\delta)^2(t-t_0)^2}{t\bar k}\right\}\\
        &= \sum_{t=1}^{\bar t-1} \exp\left\{-\frac{(D(\mu, \nu) - \Delta-\delta)^2t^2}{(t+t_0)\bar k}\right\}\\
        &+ \sum_{t=\bar t}^{\infty} \exp\left\{- \frac{(D(\mu, \nu) - \Delta-\delta)^2t^2}{(t+t_0)\bar k}\right\}\nn\\
        &\leq (\bar t - 1)\exp\left\{-\frac{(D(\mu, \nu) - \Delta-\delta)^2}{(1+t_0)\bar k}\right\} \nn\\
        &+ \sum_{t=\bar t}^{\infty} \exp\left\{- \frac{(D(\mu, \nu) - \Delta-\delta)^2t}{(1+t^{1/2})\bar k}\right\}.
    \end{align*}
    At this point, we can easily verify that both terms on the right-hand side are ultimately negligible compared to $t_0$, and the proof is complete. 
\end{proof}

\subsection{MMD-CUSUM Test Pseudocode}
\SetKwComment{Comment}{/* }{ */}
\RestyleAlgo{ruled}
\SetAlgoNoLine
\LinesNotNumbered
\begin{algorithm}[h]
\caption{MMD-CUSUM test}\label{alg: MMD CUSUM}
\KwData{Data stream $\{X_i\}$, reference data $\cal D_h$ of size $h$; empty buffer $\cal B_r$ of size $r$; $s_{\min}$ stores the min of the partial sum; pick $\Delta$ during calibration; pick threshold $b>0$.}
$i, t\gets 0$\;
$\hat s_0, s_{0}, s_{\min} \gets 0$\;
$\cal B_r \gets \varnothing$\;
\While{$\hat s_{t} \leq b$}{
    $\cal B_r\gets \cal B_r\cup X_i$\;
  \If{$(i \mod r) = 0$}{
    $s_{t+1}\gets s_{t} + \mmd[\hat\mu_r, \hat\nu_h] - \Delta$\;
    $s_{\min} \gets \min\{s_{t+1}, s_{\min}\}$\;
    $\hat s_{t+1} \gets s_{t+1} - s_{\min}$\;
    $\cal B_r \gets \varnothing$\;
    $t\gets t+1$\;
  }
  $i \gets i + 1$\;
}
\end{algorithm}

\section*{ACKNOWLEDGMENT}
The authors gratefully acknowledge the insightful discussions with our collaborators from Cisco Systems on potential applications of this work, including Ashish Kundu, Jayanth Srinivasa, and Hugo Latapie. Hao Chen and Abhishek Gupta's work was supported by Cisco Systems grant GR127553. Yin Sun's work was supported in part by the NSF under grant No. CNS-2239677, and by the ARO under grant No. W911NF-21-1-0244. Ness Shroff's work has been supported in part by NSF grants: CNS-2312836, CNS-2223452, CNS-2225561, CNS-2112471, CNS-2106933, a grant from the Army Research Office: W911NF-21-1-0244, and was sponsored by the Army Research Laboratory under Cooperative Agreement Number W911NF-23-2-0225. The views and conclusions contained in this document are those of the authors and should not be interpreted as representing the official policies, either expressed or implied, of the Army Research Laboratory or the U.S. Government. The U.S. Government is authorized to reproduce and distribute reprints for Government purposes, notwithstanding any copyright notation herein.

\bibliographystyle{ieeetr}
\bibliography{ref.bib}

\begin{thebibliography}{10}

\bibitem{lai1995sequential}
T.~L. Lai, ``Sequential changepoint detection in quality control and dynamical
  systems,'' {\em Journal of the Royal Statistical Society: Series B
  (Methodological)}, vol.~57, no.~4, pp.~613--644, 1995.

\bibitem{dayanik2009sequential}
S.~Dayanik and C.~Goulding, ``Sequential detection and identification of a
  change in the distribution of a {M}arkov-modulated random sequence,'' {\em
  IEEE Transactions on Information Theory}, vol.~55, no.~7, pp.~3323--3345,
  2009.

\bibitem{tartakovsky2014sequential}
A.~Tartakovsky, I.~Nikiforov, and M.~Basseville, {\em Sequential analysis:
  Hypothesis testing and changepoint detection}.
\newblock CRC press, 2014.

\bibitem{xie2021sequential}
L.~Xie, S.~Zou, Y.~Xie, and V.~V. Veeravalli, ``Sequential (quickest) change
  detection: Classical results and new directions,'' {\em IEEE Journal on
  Selected Areas in Information Theory}, vol.~2, no.~2, pp.~494--514, 2021.

\bibitem{chen2022change}
H.~Chen, J.~Tang, and A.~Gupta, ``Change detection of {M}arkov kernels with
  unknown pre and post change kernel,'' in {\em 2022 IEEE 61st Conference on
  Decision and Control (CDC)}, pp.~4814--4820, IEEE, 2022.

\bibitem{page1954continuous}
E.~S. Page, ``Continuous inspection schemes,'' {\em Biometrika}, vol.~41,
  no.~1/2, pp.~100--115, 1954.

\bibitem{gretton2012kernel}
A.~Gretton, K.~M. Borgwardt, M.~J. Rasch, B.~Sch{\"o}lkopf, and A.~Smola, ``A
  kernel two-sample test,'' {\em The Journal of Machine Learning Research},
  vol.~13, no.~1, pp.~723--773, 2012.

\bibitem{li2017mmd}
C.-L. Li, W.-C. Chang, Y.~Cheng, Y.~Yang, and B.~P{\'o}czos, ``{MMD} gan:
  Towards deeper understanding of moment matching network,'' {\em Advances in
  neural information processing systems}, vol.~30, 2017.

\bibitem{gartner2003survey}
T.~G{\"a}rtner, ``A survey of kernels for structured data,'' {\em ACM SIGKDD
  explorations newsletter}, vol.~5, no.~1, pp.~49--58, 2003.

\bibitem{muandet2016kernel}
K.~Muandet, K.~Fukumizu, B.~Sriperumbudur, B.~Sch{\"o}lkopf, {\em et~al.},
  ``Kernel mean embedding of distributions: A review and beyond,'' {\em
  Foundations and Trends{\textregistered} in Machine Learning}, vol.~10,
  no.~1-2, pp.~1--141, 2017.

\bibitem{li2015m}
S.~Li, Y.~Xie, H.~Dai, and L.~Song, ``M-statistic for kernel change-point
  detection,'' {\em Advances in Neural Information Processing Systems},
  vol.~28, 2015.

\bibitem{li2019scan}
S.~Li, Y.~Xie, H.~Dai, and L.~Song, ``Scan {B}-statistic for kernel
  change-point detection,'' {\em Sequential Analysis}, vol.~38, no.~4,
  pp.~503--544, 2019.

\bibitem{chang2018kernel}
W.-C. Chang, C.-L. Li, Y.~Yang, and B.~P{\'o}czos, ``Kernel change-point
  detection with auxiliary deep generative models,'' in {\em International
  Conference on Learning Representations}, 2018.

\bibitem{flynn2019change}
T.~Flynn and S.~Yoo, ``Change detection with the kernel cumulative sum
  algorithm,'' in {\em 2019 IEEE 58th Conference on Decision and Control
  (CDC)}, pp.~6092--6099, IEEE, 2019.

\bibitem{bradley2005basic}
R.~C. Bradley, ``Basic properties of strong mixing conditions. {A} survey and
  some open questions,'' {\em Probability surveys}, vol.~2, pp.~107--144, 2005.

\bibitem{kolmogorov1960strong}
A.~N. Kolmogorov and Y.~A. Rozanov, ``On strong mixing conditions for
  stationary {G}aussian processes,'' {\em Theory of Probability \& Its
  Applications}, vol.~5, no.~2, pp.~204--208, 1960.

\bibitem{adler1961ergodic}
R.~L. Adler, ``Ergodic and mixing properties of infinite memory channels,''
  {\em Proceedings of the American Mathematical Society}, vol.~12, no.~6,
  pp.~924--930, 1961.

\bibitem{ibragimov1962some}
I.~A. Ibragimov, ``Some limit theorems for stationary processes,'' {\em Theory
  of Probability \& Its Applications}, vol.~7, no.~4, pp.~349--382, 1962.

\bibitem{xie2013sequential}
Y.~Xie and D.~Siegmund, ``Sequential multi-sensor change-point detection,'' in
  {\em 2013 Information Theory and Applications Workshop (ITA)}, pp.~1--20,
  IEEE, 2013.

\bibitem{lai1998information}
T.~L. Lai, ``Information bounds and quick detection of parameter changes in
  stochastic systems,'' {\em IEEE Transactions on Information theory}, vol.~44,
  no.~7, pp.~2917--2929, 1998.

\bibitem{shewhart1986statistical}
W.~A. Shewhart and W.~E. Deming, {\em Statistical method from the viewpoint of
  quality control}.
\newblock Courier Corporation, 1986.

\bibitem{zhang2023data}
Q.~Zhang, Z.~Sun, L.~C. Herrera, and S.~Zou, ``Data-driven quickest change
  detection in hidden {M}arkov models,'' in {\em 2023 IEEE International
  Symposium on Information Theory (ISIT)}, pp.~2643--2648, IEEE, 2023.

\bibitem{veeravalli2014quickest}
V.~V. Veeravalli and T.~Banerjee, ``Quickest change detection,'' in {\em
  Academic press library in signal processing}, vol.~3, pp.~209--255, Elsevier,
  2014.

\bibitem{shiryaev1963optimum}
A.~N. Shiryaev, ``On optimum methods in quickest detection problems,'' {\em
  Theory of Probability \& Its Applications}, vol.~8, no.~1, pp.~22--46, 1963.

\bibitem{shiryaev2007optimal}
A.~N. Shiryaev, {\em Optimal stopping rules}, vol.~8.
\newblock Springer Science \& Business Media, 2007.

\bibitem{lorden1971procedures}
G.~Lorden, ``Procedures for reacting to a change in distribution,'' {\em The
  annals of mathematical statistics}, pp.~1897--1908, 1971.

\bibitem{pollak2009optimality}
M.~Pollak and A.~G. Tartakovsky, ``Optimality properties of the
  shiryaev-roberts procedure,'' {\em Statistica Sinica}, pp.~1729--1739, 2009.

\bibitem{moustakides1986optimal}
G.~V. Moustakides, ``Optimal stopping times for detecting changes in
  distributions,'' {\em the Annals of Statistics}, vol.~14, no.~4,
  pp.~1379--1387, 1986.

\bibitem{ritov1990decision}
Y.~Ritov, ``Decision theoretic optimality of the {CUSUM} procedure,'' {\em The
  Annals of Statistics}, pp.~1464--1469, 1990.

\bibitem{fuh2003sprt}
C.-D. Fuh, ``{SPRT} and {CUSUM} in hidden {M}arkov models,'' {\em The Annals of
  Statistics}, vol.~31, no.~3, pp.~942--977, 2003.

\bibitem{tartakovsky2005general}
A.~G. Tartakovsky and V.~V. Veeravalli, ``General asymptotic {B}ayesian theory
  of quickest change detection,'' {\em Theory of Probability \& Its
  Applications}, vol.~49, no.~3, pp.~458--497, 2005.

\bibitem{fuh2015quickest}
C.-D. Fuh and Y.~Mei, ``Quickest change detection and kullback-leibler
  divergence for two-state hidden {M}arkov models,'' {\em IEEE Transactions on
  Signal Processing}, vol.~63, no.~18, pp.~4866--4878, 2015.

\bibitem{tartakovsky2017asymptotic}
A.~G. Tartakovsky, ``On asymptotic optimality in sequential changepoint
  detection: Non-iid case,'' {\em IEEE Transactions on Information Theory},
  vol.~63, no.~6, pp.~3433--3450, 2017.

\bibitem{fuh2018asymptotic}
C.-D. Fuh and A.~G. Tartakovsky, ``Asymptotic {B}ayesian theory of quickest
  change detection for hidden {M}arkov models,'' {\em IEEE Transactions on
  Information Theory}, vol.~65, no.~1, pp.~511--529, 2018.

\bibitem{moustakides2019detecting}
G.~V. Moustakides, ``Detecting changes in hidden {M}arkov models,'' in {\em
  2019 IEEE International Symposium on Information Theory (ISIT)},
  pp.~2394--2398, IEEE, 2019.

\bibitem{ford2023exactly}
J.~J. Ford, J.~James, and T.~L. Molloy, ``Exactly optimal {B}ayesian quickest
  change detection for hidden {M}arkov models,'' {\em Automatica}, vol.~157,
  p.~111232, 2023.

\bibitem{meyn2012markov}
S.~Meyn and R.~L. Tweedie, {\em {M}arkov Chains and Stochastic Stability}.
\newblock USA: Cambridge University Press, 2nd~ed., 2009.

\bibitem{steinwart2008support}
I.~Steinwart and A.~Christmann, {\em Support vector machines}.
\newblock Springer Science \& Business Media, 2008.

\bibitem{sriperumbudur2010hilbert}
B.~K. Sriperumbudur, A.~Gretton, K.~Fukumizu, B.~Sch{\"o}lkopf, and G.~R.
  Lanckriet, ``Hilbert space embeddings and metrics on probability measures,''
  {\em The Journal of Machine Learning Research}, vol.~11, pp.~1517--1561,
  2010.

\bibitem{kullback1951information}
S.~Kullback and R.~A. Leibler, ``On information and sufficiency,'' {\em The
  annals of mathematical statistics}, vol.~22, no.~1, pp.~79--86, 1951.

\bibitem{doukhan2012mixing}
P.~Doukhan, {\em Mixing: properties and examples}, vol.~85.
\newblock Springer Science \& Business Media, 2012.

\bibitem{rosenblatt1956central}
M.~Rosenblatt, ``A central limit theorem and a strong mixing condition,'' {\em
  Proceedings of the national Academy of Sciences}, vol.~42, no.~1, pp.~43--47,
  1956.

\bibitem{cherief2022finite}
B.-E. Ch{\'e}rief-Abdellatif and P.~Alquier, ``Finite sample properties of
  parametric {MMD} estimation: robustness to misspecification and dependence,''
  {\em Bernoulli}, vol.~28, no.~1, pp.~181--213, 2022.

\bibitem{dehling1982almost}
H.~Dehling and W.~Philipp, ``Almost sure invariance principles for weakly
  dependent vector-valued random variables,'' {\em The Annals of Probability},
  pp.~689--701, 1982.

\bibitem{vidyasagar2002learning}
M.~Vidyasagar, {\em Learning and Generalisation: With Applications to Neural
  Networks}.
\newblock Springer Science \& Business Media, 2002.

\bibitem{merlevede2008maximal}
F.~Merlev{\`e}de, ``{On a maximal inequality for strongly mixing random
  variables in Hilbert spaces. Application to the compact law of the iterated
  logarithm.},'' {\em {Annales de l'ISUP}}, vol.~LII, pp.~47--60, May 2008.

\bibitem{zou2009generalization}
B.~Zou, L.~Li, and Z.~Xu, ``The generalization performance of erm algorithm
  with strongly mixing observations,'' {\em Machine learning}, vol.~75, no.~3,
  pp.~275--295, 2009.

\bibitem{krebs2018large}
J.~T. Krebs, ``A large deviation inequality for $\beta$-mixing time series and
  its applications to the functional kernel regression model,'' {\em Statistics
  \& Probability Letters}, vol.~133, pp.~50--58, 2018.

\bibitem{merlevede2012bernstein}
F.~Merlev{\`e}de, M.~Peligrad, and E.~Rio, ``Bernstein inequality and moderate
  deviations under strong mixing conditions,'' {\em arXiv preprint
  arXiv:1202.4777}, 2012.

\bibitem{samson2000concentration}
P.-M. Samson, ``Concentration of measure inequalities for {M}arkov chains and
  $\phi$-mixing processes,'' {\em The Annals of Probability}, vol.~28, no.~1,
  pp.~416--461, 2000.

\bibitem{Leonid_Kontorovich_Ramanan_2009}
L.~A. Kontorovich and K.~Ramanan, ``{Concentration inequalities for dependent
  random variables via the martingale method},'' {\em The Annals of
  Probability}, vol.~36, no.~6, pp.~2126 -- 2158, 2008.

\bibitem{hoeffding1994probability}
W.~Hoeffding, ``Probability inequalities for sums of bounded random
  variables,'' {\em The collected works of Wassily Hoeffding}, pp.~409--426,
  1994.

\bibitem{binkowski2018demystifying}
M.~Bi{\'n}kowski, D.~J. Sutherland, M.~Arbel, and A.~Gretton, ``Demystifying
  {MMD} {GAN}s,'' {\em arXiv preprint arXiv:1801.01401}, 2018.

\bibitem{ephraim2002hidden}
Y.~Ephraim and N.~Merhav, ``Hidden {M}arkov processes,'' {\em IEEE Transactions
  on information theory}, vol.~48, no.~6, pp.~1518--1569, 2002.

\bibitem{smyth1994hidden}
P.~Smyth, ``Hidden {M}arkov models for fault detection in dynamic systems,''
  {\em Pattern recognition}, vol.~27, no.~1, pp.~149--164, 1994.

\bibitem{salamatian2001hidden}
K.~Salamatian and S.~Vaton, ``Hidden {M}arkov modeling for network
  communication channels,'' {\em ACM SIGMETRICS Performance Evaluation Review},
  vol.~29, no.~1, pp.~92--101, 2001.

\bibitem{kwon1999accident}
K.-C. Kwon and J.-H. Kim, ``Accident identification in nuclear power plants
  using hidden {M}arkov models,'' {\em Engineering Applications of Artificial
  Intelligence}, vol.~12, no.~4, pp.~491--501, 1999.

\bibitem{al2008automatic}
T.~Al-ani, C.~K. Karmakar, A.~H. Khandoker, and M.~Palaniswami, ``Automatic
  recognition of obstructive sleep apnoea syndrome using power spectral
  analysis of electrocardiogram and hidden {M}arkov models,'' in {\em 2008
  International Conference on Intelligent Sensors, Sensor Networks and
  Information Processing}, pp.~285--290, IEEE, 2008.

\bibitem{li2022improved}
Y.~Li, H.~Li, Z.~Chen, and Y.~Zhu, ``An improved hidden {M}arkov model for
  monitoring the process with autocorrelated observations,'' {\em Energies},
  vol.~15, no.~5, p.~1685, 2022.

\bibitem{agarwal2012generalization}
A.~Agarwal and J.~C. Duchi, ``The generalization ability of online algorithms
  for dependent data,'' {\em IEEE Transactions on Information Theory}, vol.~59,
  no.~1, pp.~573--587, 2012.

\bibitem{arvanitis2023concentration}
S.~Arvanitis, ``Concentration inequalities for kernel density estimators under
  uniform mixing,'' {\em Journal of the Korean Statistical Society}, pp.~1--10,
  2023.

\bibitem{hagrass2022spectral}
O.~Hagrass, B.~K. Sriperumbudur, and B.~Li, ``Spectral regularized kernel
  two-sample tests,'' {\em arXiv preprint arXiv:2212.09201}, 2022.

\bibitem{Biggs_Schrab_Gretton_2023}
F.~Biggs, A.~Schrab, and A.~Gretton, ``{MMD}-{FUSE}: Learning and combining
  kernels for two-sample testing without data splitting,'' {\em arXiv preprint
  arXiv:2306.08777}, June 2023.
\newblock arXiv:2306.08777 [cs, math, stat].

\bibitem{Schrab_Kim_Guedj_Gretton_2023}
A.~Schrab, I.~Kim, B.~Guedj, and A.~Gretton, ``Efficient aggregated kernel
  tests using incomplete $u$-statistics,'' {\em arXiv preprint
  arXiv:2206.09194}, Jan. 2023.
\newblock arXiv:2206.09194 [cs, math, stat].

\bibitem{Schrab_Kim_Albert_Laurent_Guedj_Gretton_2023}
A.~Schrab, I.~Kim, M.~Albert, B.~Laurent, B.~Guedj, and A.~Gretton, ``{MMD}
  aggregated two-sample test,'' {\em arXiv preprint arXiv:2110.15073}, Aug.
  2023.
\newblock arXiv:2110.15073 [cs, math, stat].

\bibitem{yakir2013extremes}
B.~Yakir, {\em Extremes in random fields: a theory and its applications}.
\newblock John Wiley \& Sons, 2013.

\bibitem{balsubramani2014sharp}
A.~Balsubramani, ``Sharp finite-time iterated-logarithm martingale
  concentration,'' {\em arXiv preprint arXiv:1405.2639}, 2014.

\bibitem{douc2018markov}
R.~Douc, E.~Moulines, P.~Priouret, and P.~Soulier, {\em {M}arkov chains}.
\newblock Springer, 2018.

\bibitem{devroye2013probabilistic}
L.~Devroye, L.~Gy{\"o}rfi, and G.~Lugosi, {\em A probabilistic theory of
  pattern recognition}, vol.~31.
\newblock Springer Science \& Business Media, 2013.

\bibitem{chernoff1952measure}
H.~Chernoff, ``A measure of asymptotic efficiency for tests of a hypothesis
  based on the sum of observations,'' {\em The Annals of Mathematical
  Statistics}, pp.~493--507, 1952.

\end{thebibliography}

\end{document}